\documentclass[11pt]{article}

\usepackage{setspace}
\setstretch{1}

\usepackage{algorithm}
\usepackage{algpseudocode}

\usepackage[style=alphabetic]{biblatex}
\addbibresource{eigenvalue_sampling.bib}

\date{}

\input{styling.sty}
\newcommand{\outlyingthresh}{L}
\title{Tight Sampling Bounds for Eigenvalue Approximation}

\begin{document}
\title{\Large Tight Sampling Bounds for Eigenvalue Approximation 
\thanks{The authors thank support from a Simons Investigator Award, NSF CCF-2335412, and a Google Faculty Award.}
}
\author{William Swartworth \\Carnegie Mellon University 
\and David P. Woodruff \\Carnegie Mellon University
}
\date{}

\maketitle

\begin{abstract}
We consider the problem of estimating the spectrum of a symmetric bounded entry (not necessarily PSD) matrix via entrywise sampling.  This problem was introduced by [Bhattacharjee, Dexter, Drineas, Musco, Ray '22], where it was shown that one can obtain an $\epsilon n$ additive approximation to all eigenvalues of $A$ by sampling a principal submatrix of dimension $\frac{\text{poly}(\log n)}{\epsilon^3}$.  We improve their analysis by showing that it suffices to sample a principal submatrix of dimension $\tilde{O}(\frac{1}{\epsilon^2})$ (with no dependence on $n$).  This matches known lower bounds and therefore resolves the sample complexity of this problem up to $\log\frac{1}{\epsilon}$ factors.  Using similar techniques, we give a tight $\tilde{O}(\frac{1}{\epsilon^2})$ bound for obtaining an additive $\epsilon\|A\|_F$ approximation to the spectrum of $A$ via squared row-norm sampling, improving on the previous best $\tilde{O}(\frac{1}{\epsilon^{8}})$ bound.
We also address the problem of approximating the top eigenvector for a bounded entry, PSD matrix $A.$   In particular, we show that sampling $O(\frac{1}{\epsilon})$ columns of $A$ suffices to produce a unit vector $u$ with $u^T A u \geq \lambda_1(A) - \epsilon n$.  This matches what one could achieve via the sampling bound of [Musco, Musco'17] for the special case of approximating the top eigenvector, but does not require adaptivity. 

As additional applications, we observe that our sampling results can be used to design a faster eigenvalue estimation sketch for dense matrices resolving a question of [Swartworth, Woodruff'23], and can also be combined with [Musco, Musco'17] to achieve $O(1/\epsilon^3)$ (adaptive) sample complexity for approximating the spectrum of a bounded entry PSD matrix to $\epsilon n$ additive error.
\end{abstract}

\newpage
\tableofcontents
\newpage

\section{Introduction}
Computing the spectrum of a matrix is a fundamental problem with many applications.  There are well-known high-precision algorithms that run in polynomial time \cite{francis1962qr, golub2000eigenvalue}, although any such algorithm is necessarily at least linear time in the input size.  As data grows larger, even linear algorithms can be prohibitive. This has motivated a flurry of activity studying sublinear time estimation of problems in numerical linear algebra, for instance for low-rank approximation \cite{musco2017recursive, musco2017sublinear, bakshi2018sublinear, bakshi2020robust}, kernel density estimation \cite{charikar2017hashing, siminelakis2019rehashing, charikar2020kernel}, testing positive-semidefiniteness \cite{bakshi2020testing}, and matrix sparsification \cite{bhattacharjee2023universal, drineas2011note}.

For eigenvalue estimation, variants of the power method have long been known to give good approximations to the top eigenvalues and eigenvectors of $A\in\R^{n\times n}$ while revealing sublinear information about $A$, i.e., using $o(n)$ matrix-vector queries \cite{rokhlin2010randomized, musco2015randomized}. However it was only recently asked in \cite{bhattacharjee2024sublinear} whether there are spectral approximation algorithms for symmetric, but non-PSD matrices that run in sublinear time in the entry query model.  This is perhaps the most natural model if one imagines having an extremely large matrix saved on disk for example. This may be in the form of a graph for instance, where one could be interested in obtaining spectral information about its Laplacian or adjacency matrix.  One could also imagine having a large collection of data points with some kernel function that can be computed for pairs of points.  Obtaining a rough spectral summary of the associated kernel matrix is a natural step for data analysis, for instance, to spot low-rank structure in the data.  If data points are large or expensive to collect, or if kernel evaluation is expensive, it is natural to aim for minimizing entry queries to the kernel matrix.

Of course, it is not reasonable to ask for sublinear time spectral approximation algorithms, without some additional assumptions.  For instance, our matrix $A$ could contain all zeros but with a single large entry at indices $(i,j)$ and $(j,i).$  Given only the ability to query entries, and no additional information, even distinguishing $A$ from the all zeros matrix would take $\Omega(n^2)$ queries.

We consider two assumptions that allow for improved guarantees.  The first is an assumption on the structure of $A$ called the \textit{bounded entry model}, which assumes that $A$ has entries bounded by $1$ in magnitude. This condition was introduced in \cite{balcan2019testing} and studied further by \cite{bakshi2020testing} who showed that it was sufficient in order to test for positive semi-definiteness with sublinear entry queries.  Motivated by this result, \cite{bhattacharjee2024sublinear} showed that all eigenvalues of a symmetric bounded-entry matrix can be approximated using a sublinear number of queries, simply by sampling a $\poly(\frac{\log n}{\eps})$ sized submatrix.

Another way of getting sublinear sample complexity is to give the sampler additional power.
In our case, as in \cite{bhattacharjee2024sublinear}, we consider having access to a sampler that can produce a row index with probability proportional to its squared row norm.
Such samplers have been increasingly studied under the guise of ``quantum-inspired" machine-learning algorithms \cite{tang2019quantum, chepurko2022quantum, gilyen2018quantum, gilyen2022improved}.  Such samplers are practical to maintain when $A$ is stored entrywise.  For example by using an appropriate data structure, they can be built in $\nnz(A)$ time, admit $O(\log n)$ time sampling, and can handle entry updates in $O(\log n)$ time.


\cite{bhattacharjee2024sublinear} showed that given a real symmetric matrix $A$ with entries bounded by $1$, one can sample a principal submatrix of $A$ of dimensions $O(\frac{\poly\log n}{\eps^3}) \times O(\frac{\poly\log n}{\eps^3})$ and then output an additive $\eps n$ approximation to the entire spectrum, i.e., to all eigenvalues of $A.$  On the other hand, the best lower bound states that a principal-submatrix algorithm must sample at least $O(\frac{1}{\eps^4})$ entries of $A.$ 

There are two ways that one could hope to improve the sampling bound of \cite{bhattacharjee2024sublinear}.  First one could hope to improve the $\eps$ dependence in the dimension from $O(1/\eps^3)$ to $O(1/\eps^2).$  Several prior results suggested that this might be possible.  For example \cite{bhattacharjee2024sublinear} showed that their $\eps$ dependence could be improved both when $A$ is PSD, and when the spectrum of $A$ is flat. Concurrently \cite{swartworth2023optimal} showed that one can obtain an $\eps \norm{A}{F}$ additive approximation to the spectrum of $A$ by using a so-called bilinear sketch of $A$ of dimensions $O(1/\eps^2) \times O(1/\eps^2).$  Such a sketch would give $\eps n$ additive error for approximating all eigenvalues when $A$ has bounded entries.  Unfortunately this sketch is Gaussian, and it seems difficult to directly adapt its analysis to obtain a sampling bound instead. 



In this paper we close the gap between sketching and sampling for bounded entry matrices by showing that uniformly sampling an $\tilde{O}(1/\eps^2) \times \tilde{O}(1/\eps^2)$ principal submatrix of $A$ suffices to approximate all eigenvalues of $A$ up to $\eps n$ additive error, even when $A$ is not necessarily PSD. In addition to obtaining an optimal $\eps$-dependence, we note that our uniform sampling bound contains no dependence on $n.$

We also address the squared-row norm sampling model. Here we improve the analysis of \cite{bhattacharjee2024sublinear} to show that it suffices to query a principal submatrix of size $\tilde{O}(\frac{\poly \log n}{\eps^2})$, compared to the $\tilde{O}(\frac{\poly \log n }{\eps^8})$ dimensional principal submatrix required by \cite{bhattacharjee2024sublinear}.

\paragraph{Our approximation model.} We note that all of the guarantees considered in our work and prior work focus on additive approximations to the spectrum.  Ideally, one might like to aim for a relative error guarantee.  However as pointed out, by \cite{bhattacharjee2024sublinear} for example, this is not possible for entry queries.  Such an algorithm would be able to distinguish the $0$ matrix from a matrix with a single off-diagonal pair of nonzero entries, which clearly requires $\Omega(n^2)$ samples. Indeed even with squared row-norm sampling, relative error is still too much to hope for. In fact, even for sketches, approximating the top eigenvalue to within a constant factor requires $\Omega(n^2)$ sketching dimension in general \cite{li2016tight, woodruff2014sketching}.  One can always turn such a sketching lower bound into a sampling bound, even allowing for row-norm sampling - simply conjugate by a random orthogonal matrix to flatten all rows.  Then squared row sampling is effectively uniform, and so a sampling algorithm could be used to construct a sketch of the same dimensions.  These existing lower bounds are why we (as well as prior work) choose to focus on an additive approximation guarantee.

\section{Our Results}

\begin{table}
\label{table_of_results}

\begin{center}
\begin{tabular}{ |p{4cm}||p{3cm}|p{3cm}| }
 \hline
 
 \rowcolor{lightgray}
 \multicolumn{3}{|c|}{\textbf{Eigenvalue Estimation}} \\
 \hline
    Model/Error guarantee & Entry samples & Authors \\
    \hline
    \makecell{Bounded entries/\\Additive $\eps n$} & $O(\frac{\poly\log n}{\eps^6})$ & \cite{bhattacharjee2024sublinear} \\
    \, & $O(\frac{\poly\log n}{\eps^5})^*$ & \cite{bhattacharjee2024sublinear} \\
    \, & $\tilde{O}(\frac{1}{\eps^4})$ & Ours \\
    
    \hline
    \makecell{Row-norm sampling/\\Additive $\eps \norm{A}{F}$} & $O(\frac{\poly\log n}{\eps^{16}})$ & \cite{bhattacharjee2024sublinear} \\
    \, & $\tilde{O}(\frac{\poly\log n}{\eps^4})$ & Ours \\

 \hline
\end{tabular}
\caption{Comparison of our eigenvalue sampling results to prior work. The * indicates that the algorithm does not query a principal submatrix.  All algorithms here are non-adaptive.}
\end{center}
\end{table}

As discussed above, we are interested in the same type of guarantee as considered in \cite{bhattacharjee2024sublinear} and \cite{swartworth2023optimal}.
\begin{Definition}
    Let $A \in \R^{n\times n}$ be a symmetric matrix.  We say that a sequence $\hat{\lambda}_1 \geq \hat{\lambda}_2\geq \ldots \geq \hat{\lambda}_n$ is an additive $\alpha$-approximation to the spectrum of $A$ if $\abs{\hat{\lambda}_i(A) - \lambda_i(A)} \leq \alpha$ for all $i\in [n].$
\end{Definition}
We consider the sampling algorithms introduced by \cite{bhattacharjee2024sublinear}, with some technical modifications. Our novelty is an improved analysis which yields optimal bounds up to logarithmic factors.  Our techniques apply both to uniform sampling for bounded entry matrices, as well as to squared row norm sampling for arbitrary matrices.

\paragraph{Uniform sampling.} For uniform sampling, we show the following result.
\begin{algorithm}
\caption{Uniform Sampling}
\label{alg:uniform_sampling}
\begin{enumerate}
\item Input: Symmetric matrix $A\in \R^{n\times n}$ with $\norm{A}{\infty}\leq 1$, expected sample size $s$
\item If $s\geq n$, let $S = I_n$
\item Otherwise, let $S\in \R^{k\times n}$ be a (rescaled) sampling matrix which samples each row of $A$ independently with probability $p= s/n$ and rescales each sampled row by $\sqrt{\frac{1}{p}}.$
\item Return the $k$ eigenvalues of $S^T A S$, along with $n-k$ additional $0$'s.
\end{enumerate}
\end{algorithm}
\begin{theorem}
Let $A$ be a symmetric matrix with all entries bounded by $1$.  Then Algorithm~\ref{alg:uniform_sampling} with $s\geq c \frac{1}{\eps^2}\log^2\frac{1}{\eps}$ (where $c$ is an absolute constant) outputs an additive $\eps n$ approximation to the spectrum of $A$, with at least $2/3$ probability.
\end{theorem}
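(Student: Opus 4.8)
The plan is to control, for every threshold $\theta \in \R$, the number of eigenvalues of $S^T A S$ above $\theta$ versus the number of eigenvalues of $A$ above $\theta n$ (suitably normalized), which is equivalent to the additive spectrum guarantee via the standard correspondence between eigenvalue-counting functions and the sorted spectrum. The key object is the deformed matrix $A - \theta n \cdot \tfrac{1}{n}\bone\bone^T$ — more precisely, I would split $A = A_{\text{high}} + A_{\text{low}}$ by projecting onto the eigenspaces of $A$ with eigenvalue above/below $\theta n$ in magnitude, and argue that the ``outlying'' part $A_{\text{high}}$ has low rank (at most $O(1/\eps^2)$ eigenvalues exceed $\eps n$ in absolute value, since $A$ has bounded entries and hence $\norm{A}{F}^2 \le n^2$), while the ``bulk'' part $A_{\text{low}}$ has small operator norm after sampling. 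So the two pieces are handled by two genuinely different arguments.

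For the bulk part $A_{\text{low}}$ (operator norm $\le \eps n$, Frobenius norm $\le n$, bounded entries), the goal is to show $\norm{S^T A_{\text{low}} S}{2} \lesssim \eps n$ with good probability. This is a matrix-concentration / $\eps$-net argument: $S^T A_{\text{low}} S$ is a quadratic form in the independent sampling indicators, so I would use a decoupling + matrix Bernstein (or a Hanson–Wright-type bound for sparse samplers) argument, crucially exploiting that the entrywise bound $\norm{A}{\infty}\le 1$ controls the per-row variance and that the sampling probability $p = s/n$ with $s \approx \eps^{-2}\log^2(1/\eps)$ makes the expected submatrix dimension $k \approx s$. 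The bounded-entry hypothesis is what lets one conclude that each sampled row contributes variance $O(s)$ rather than something $n$-dependent, so the $n$'s cancel and one is left with an $\eps n$ bound. One subtlety: $A_{\text{low}}$ itself need not have bounded entries (projection can inflate them), so I would instead work with $A$ directly plus a rank-$O(1/\eps^2)$ correction, absorbing the correction into the outlying term.

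For the outlying part, the point is that sampling a principal submatrix of a bounded-entry, low-rank-plus-small matrix approximately preserves its large eigenvalues. I would write $A_{\text{high}} = U \Lambda U^T$ with $U \in \R^{n \times r}$, $r = O(1/\eps^2)$, and analyze $S^T U \Lambda U^T S$; the rows of $U$ are not arbitrary because $A$ has bounded entries (this bounds the leverage-type quantities / the $\ell_\infty\to\ell_2$ behavior of $U$), so an approximate-isometry statement $\norm{(SU)^T (SU) - U^T U}{2} \le \eps r$ or similar follows from a matrix Chernoff bound, again with $s \approx \eps^{-2}\polylog(1/\eps)$. Combining: eigenvalue counting functions of $S^T A S$ and of $A$ match up to an additive $O(1)$ shift in index and $O(\eps n)$ in value, once we also handle the cross term $S^T A_{\text{high}} S$ vs $S^T A_{\text{low}} S$ via Weyl's inequality.

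The main obstacle I anticipate is the bulk term, specifically getting the operator-norm bound on $S^T A_{\text{low}} S$ with only $\tilde O(1/\eps^2)$ samples rather than the $1/\eps^3$ one gets from a naive union bound over an $\eps$-net of the $k$-dimensional sphere (each net point needing a scalar Bernstein bound with failure probability $e^{-k}$ forces $k \gtrsim$ net-size exponent). Beating this requires either a chaining argument over the net or exploiting that $A_{\text{low}}$ is not just bounded in operator norm but has its ``mass'' spread out (Frobenius norm $n$, flat-ish spectrum below $\eps n$), so that the relevant Gaussian/Rademacher complexity is $O(\sqrt{k} \cdot \eps n)$ rather than $O(k \cdot \eps n)$ — this is exactly the gain that distinguishes the $1/\eps^2$ from the $1/\eps^3$ regime and is where I expect the $\log^2(1/\eps)$ factor to enter, from the chaining / stratification over eigenvalue scales of $A_{\text{low}}$.
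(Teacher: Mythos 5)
Your decomposition into outlying and bulk parts matches the paper, but the proposal has two genuine gaps, and it also misplaces where the real difficulty lies. For the outlying part, you claim that incoherence plus a matrix Chernoff bound gives an approximate isometry for $U$ (the $r=\Theta(1/\eps^2)$-dimensional outlying eigenbasis) with $s\approx \eps^{-2}\polylog(1/\eps)$ samples, and that this suffices. It does not. To recover an eigenvalue $\lambda$ to additive $\eps n$ you need distortion roughly $\eps n/\lambda$ on its eigenvectors, so a single uniform embedding of the whole outlying space would need distortion $\eps$ (because $\lambda$ can be as large as $n$); but the leverage scores of the eigenvectors at the threshold $\lambda\approx\eps n$ are only bounded by $1/(\eps^2 n)$, so uniform sampling with $s\approx 1/\eps^2$ buys only \emph{constant} distortion on that space, which preserves large eigenvalues only multiplicatively (additive error up to $\Theta(n)$). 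This is exactly the obstruction that forced the $1/\eps^3$-type bounds in prior work. The paper's way around it is a scale-dependent condition (Assumption~\ref{assump:subspace_embedding}: $(1\pm L/\lambda)$-distortion on $V_{\geq\lambda}$ simultaneously for all $\lambda\geq L$, each scale being cheap because leverage scores shrink like $n/\lambda^2$), followed by two nontrivial arguments to convert this non-uniform guarantee into per-eigenvalue bounds: a lower bound that must dodge the large \emph{negative} eigenvalues (by projecting the candidate subspace off the image of $SU_{\geq\lambda}$, Lemmas~\ref{lem:projecting_doesnt_hurt} and~\ref{lem:eigval_lower_bound}) and an upper bound on $\lambda_{\max}(SAS^T)$ via a level-set decomposition controlling all pairwise interactions between eigenvalue scales. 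None of this is present in, or implied by, a flat "$\|(SU)^T(SU)-U^TU\|$ small" statement; indeed the bound $\eps r$ you write is vacuous for $r=1/\eps^2$.

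For the bulk, you have the bottleneck backwards: prior work already controls $\|SA_mS^T\|$ with $\tilde O(\log n/\eps^2)$ samples via the Rudelson--Vershynin/Tropp submatrix-norm bound (the paper reuses it as Lemma~\ref{lem:lemma_4_of_bha}), so no chaining improvement from $1/\eps^3$ to $1/\eps^2$ is needed there. The real issue you leave unaddressed is the claimed absence of any $n$-dependence: your decoupling/matrix-Bernstein route would, as standardly executed over the $n$ independent sampling indicators, carry a $\log n$ (dimension) factor, which contradicts the theorem's $s=\tilde O(1/\eps^2)$ with no $n$. The paper removes the $\log n$ not by a sharper concentration inequality but by a bootstrapping trick: after one round of sampling one holds an (approximately) $s\times s$ matrix with entries bounded by $n/s$, to which the same theorem applies with $\log n$ replaced by $\log s$, and iterating gives $N\leq c\log(N/(\eps^3\delta))$, hence $N=O(\log\frac{1}{\eps\delta})$. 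Without that idea (or some substitute), your argument would at best prove the $\log n$-dependent version.
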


The sampling algorithm used here is perhaps as natural as possible.  We simply sample a random principal submatrix of expected size $s\times s$ and appropriately rescale.  We then output the eigenvalues of the sampled submatrix, along with roughly $n-s$ additional $0$'s to account for the remaining spectrum of $A.$ Our sampling bound is an improvement over \cite{bhattacharjee2024sublinear} which requires sampling a principal submatrix of dimension $\frac{c}{\eps^3} \log\frac{1}{\eps}\log^3 n$ to obtain the same guarantee. We note that \cite{bhattacharjee2024sublinear} was able to push their entry sample complexity down to $O(1/\eps^5)$ by using a matrix sparsification result to subsample their principal submatrix.  However the resulting sample is not a principal submatrix, and it is unclear how to decrease the sample complexity further.

\cite{bhattacharjee2024sublinear} also asked whether one can obtain an additive $\eps n$ guarantee to the spectrum of $A$ with sample size independent of $A.$  Our result shows that this is indeed possible.  This may seem surprising since \cite{bhattacharjee2024sublinear} uses a bound due to Tropp \cite{rudelson2007sampling, tropp2008norms}, to control the contribution from the small-magnitude eigenvalues. This bound contains a $\log n$ term that appears difficult to remove, and we use the same bound on the small-magnitude eigenvalues. Nonetheless, we give a very simple bootstrapping argument to show that this $\log n$ dependence is non-essential.  Specifically, we show that it is possible to replace the $\log n$ with a $\log\frac{1}{\eps}$, thus removing the $n$ dependence, at the cost of only additional $\log\frac{1}{\eps}$ factors.

\paragraph{Squared row-norm sampling.} For squared row-norm sampling, we use our analysis to again give nearly tight bounds.
\begin{algorithm}
\caption{Row-norm Sampling}
\label{alg:row_norm_sampling}
\begin{enumerate}
\item Input: Symmetric matrix $A\in \R^{n\times n}$ along with its row norms, expected sample size $s$

\item Let $S\in \R^{k\times n}$ be a (rescaled) sampling matrix which for all $i\in [n]$ samples  $\text{Binomial}(s, \frac{\norm{A_i}{2}^2}{\norm{A}{F}^2})$ copies of row $i$ and rescales those rows by $\frac{1}{\sqrt{p_i}}$ where $p_i := \frac{s\norm{A_i}{2}^2}{\norm{A}{F}^2}.$

\item Implicitly form the matrix $A' \in \R^{n\times n}$ defined by
\[(A')_{ij} = 
\begin{cases} 
      0 & i=j\\
      0 & i\neq j\,\,\text{and}\,\,\norm{A_i}{2}^2\norm{A_j}{2}^2 \leq \frac{\eps^2 \norm{A}{F}^2\abs{A_{ij}}^2}{c\log^4 n}\\
      A_{ij} & \text{otherwise}
\end{cases}
\]

\item Return the $k$ eigenvalues of $S^T A' S$, along with $n-k$ additional $0$'s.
\end{enumerate}
\end{algorithm}
\begin{theorem}
Let $A$ be an arbitrary symmetric matrix.  Algorithm~\ref{alg:row_norm_sampling} with $s\geq \frac{1}{\eps^2}\poly\log(n/\eps)$ outputs an additive $\eps\norm{A}{F}$ additive approximation to the spectrum of $A$ with at least $2/3$ probability.
\end{theorem}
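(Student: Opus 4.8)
The plan is to adapt the outlying-versus-bulk spectral decomposition of \cite{bhattacharjee2024sublinear} to squared-row-norm sampling; the new ingredient is that drawing row $i$ with probability proportional to $\|A_i\|_2^2$ is automatically a \emph{leverage-score} sample for the top eigenspace of $A$, incurring only a polylogarithmic overhead once $s \gtrsim \poly\log(n/\eps)/\eps^2$.

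Fix a threshold $\tau = \eps'\|A\|_F$ with $\eps' = \eps/\poly\log(n/\eps)$, and split the spectrum of $A$ into the outlying part $A_{\ge} = V\Lambda V^\top$, the spectral projection of $A$ onto its eigenvalues of magnitude $\ge\tau$ (here $V\in\R^{n\times r}$ has orthonormal columns and $\Lambda$ is the diagonal matrix of those eigenvalues), and the bulk $A_{<} = A - A_{\ge}$. Since $\sum_j\lambda_j(A)^2 = \|A\|_F^2$ we get $r \le \|A\|_F^2/\tau^2 = 1/\eps'^2$ and $\|A_{<}\|_2 \le \tau = \eps'\|A\|_F$. The key structural fact is that for every index $i$, $\|V^\top e_i\|_2^2 \le \|A_i\|_2^2/\tau^2$: indeed $\|\Lambda V^\top e_i\|_2 = \|A_{\ge}e_i\|_2 \le \|Ae_i\|_2 = \|A_i\|_2$ (using $A=A^\top$ and that $A_{\ge}$ is a spectral projection of $A$), while the left-hand side is $\ge\tau\|V^\top e_i\|_2$ because every diagonal entry of $\Lambda$ has magnitude $\ge\tau$. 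Equivalently, the $i$-th leverage score of $V$ is at most $\|A_i\|_2^2/\tau^2 = q_i/\eps'^2$, where $q_i := \|A_i\|_2^2/\|A\|_F^2$ is the probability mass of row $i$, so the sampling rate $p_i = s q_i$ of Algorithm~\ref{alg:row_norm_sampling} dominates that leverage score by a factor $\ge s\eps'^2 = \poly\log(n/\eps)$. (This mirrors the uniform case: for a bounded-entry matrix an outlying eigenvector $v$ has $\|v\|_\infty = O(1/(\eps\sqrt n))$, which is exactly what makes uniform sampling a good leverage sample for the top eigenspace there. The same bounds hold verbatim for $A'$, whose rows only shrink; step (iii) below bridges $A'$ and $A$.)

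From here the proof has three pieces, assembled by Weyl's inequality. \textbf{(i) Bulk.} The zeroing rule defining $A'$ is tuned so that every surviving off-diagonal entry satisfies $|A_{ij}|/\sqrt{p_ip_j} = |A_{ij}|\|A\|_F^2/(s\|A_i\|_2\|A_j\|_2) \lesssim \tfrac{\poly\log n}{s\eps}\|A\|_F \le \eps'\|A\|_F$; this is precisely the hypothesis under which Tropp's sampling bound \cite{rudelson2007sampling, tropp2008norms} controls the operator norm of the ``bulk part'' of $S^\top A'S$, i.e.\ shows that all but its top $\le r$ eigenvalues are $\le \eps'\|A\|_F\cdot\poly\log n \le \eps\|A\|_F$ in magnitude (so padding with zeros matches the bulk of $A$, which has all eigenvalues $\le\tau$). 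This is the only place a $\poly\log n$ is incurred, and --- unlike the uniform case --- we make no attempt to remove it, which is why the row-norm bound retains a $\poly\log n$. \textbf{(ii) Outlying.} The nonzero eigenvalues of $S^\top A_{\ge}S = (S^\top V)\Lambda(V^\top S)$ equal those of $W^{1/2}\Lambda W^{1/2}$ with $W = V^\top SS^\top V\succeq 0$, a matrix of rank $\le r\le k$, so padding with zeros is consistent with the $n-r$ zero eigenvalues of $A_{\ge}$. The naive estimate $\|W^{1/2}\Lambda W^{1/2} - \Lambda\|_2 \lesssim \|W - I_r\|_2\|\Lambda\|_2 = \|W - I_r\|_2\|A\|_2$ is too lossy: it would force $\|W - I_r\|_2 \le \eps$ and hence submatrix dimension $\sim 1/\eps^4$. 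Instead one decomposes the outlying spectrum dyadically into $O(\log(1/\eps'))$ magnitude bands $\Lambda = \bigoplus_m \Lambda_{(m)}$ with $\|\Lambda_{(m)}\|_2 \asymp \sigma_m$; for band $m$ the leverage-score overhead of the corresponding columns of $V$ is only $\|A\|_F^2/\sigma_m^2$, so by matrix Bernstein the corresponding diagonal block of $W - I_r$ has norm $\lesssim \tfrac{\|A\|_F}{\sigma_m}\sqrt{\poly\log/s}$ and the induced eigenvalue perturbation for band $m$ is $\lesssim \sigma_m\cdot\tfrac{\|A\|_F}{\sigma_m}\sqrt{\poly\log/s} = \|A\|_F\sqrt{\poly\log/s}$, \emph{independent of the band}. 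Summing over the $O(\log(1/\eps'))$ bands and taking $s = \poly\log(n/\eps)/\eps^2$ keeps the outlying perturbation below $\eps\|A\|_F$. \textbf{(iii) Reconciling $A$ and $A'$.} From the zeroing rule, in each row the columns that get zeroed carry total squared row-norm mass $O(\eps^2\|A\|_F^2/\poly\log n)$, which together with the bound on their individual magnitudes lets one absorb the off-diagonal part of $A - A'$ into the bulk term of (i); alternatively one runs (i)--(ii) directly for $A'$, whose spectral projections inherit entries bounded relative to $\sqrt{p_ip_j}$. Combining (i)--(iii) with Weyl's inequality yields $|\hat\lambda_j - \lambda_j(A)| \le \eps\|A\|_F$ for all $j$.

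The main obstacle is step (ii): making the dyadic outlying analysis rigorous, and in particular controlling the cross-band Gram blocks $V_{(m)}^\top SS^\top V_{(m')}$ so that they do not reintroduce a $\sqrt{\sigma_m/\sigma_{m'}}$ loss between far-apart bands. Using the per-coordinate bound $(v_j)_i^2 \le q_i\|A\|_F^2/\lambda_j^2$ one can show that every entry of $|\Lambda|^{1/2}(W - I_r)|\Lambda|^{1/2}$ is $O(\|A\|_F\sqrt{\poly\log/s})$, but converting this entrywise control into the operator-norm bound needed above without losing a factor of $r$ is the delicate point. Everything else is bookkeeping: choosing the $\poly\log(n/\eps)$ factors in $s$ and $\eps'$ and the $\log^4 n$ in the definition of $A'$ consistently so that the three error contributions are each $O(\eps\|A\|_F)$, and disposing of the degenerate regimes (diagonal entries, and $n$ smaller than the nominal sample size).
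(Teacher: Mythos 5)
Your outline follows the paper's route: the same outlying/bulk split of $A'$, the same incoherence observation that the leverage scores of the eigenvectors with eigenvalue at least $\lambda$ are bounded by $\norm{A_i}{}^2/\lambda^2$ so that squared row-norm sampling dominates leverage-score sampling with a $\lambda$-dependent accuracy, a dyadic decomposition of the outlying spectrum, and the zeroing procedure plus Tropp-type bounds for the bulk. However, the step you yourself flag as ``the delicate point'' --- converting the band-wise control of $W=V^\top S^\top S V$ into an eigenvalue comparison for $S A'_o S^\top$ versus $A'_o$ --- is not a loose end one can wave through; it is the main technical content of the result, and your sketch of it does not work as stated. First, because $\Lambda$ has both signs you cannot reduce to bounding $\norm{|\Lambda|^{1/2}(W-I_r)|\Lambda|^{1/2}}{}$ and then ``Weyl'' band by band: the comparison is between eigenvalues of the symmetric matrix $SV\Lambda V^\top S^\top$ and the diagonal $\Lambda$, and negative outlying eigenvalues can drag down the Rayleigh quotients that certify the lower bounds for the positive ones. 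The paper handles this by a deterministic multi-scale condition (a $\min(L/\lambda,1/10)$-distortion embedding on $V_{\ge\lambda}$ for \emph{all} $\lambda\ge L$ simultaneously), and then a lower-bound lemma that chooses the test subspace as the span of $P^{\perp}S\Pi$, i.e.\ projects away from the image of $S U_{\ge\lambda}$ (the large negative eigenvectors) before running the level-set estimates, together with Rayleigh-quotient perturbation bounds. Second, the cross-band terms do not disappear: the upper bound requires tracking every pair of bands, where the approximate-matrix-product bound gives $\norm{(SU^{(i)}(\Lambda^{(i)})^{1/2})^\top(SU^{(j)}(\Lambda^{(j)})^{1/2})}{}\le 2L\sqrt{2}^{\,|i-j|}$, and then bounding the operator norm of an explicit structured $r\times r$ matrix (reduced to a $2\times 2$ eigenvalue computation); this is exactly the place where a naive summation or an entrywise-to-operator-norm conversion loses a factor of $r$ or of $\sqrt{\sigma_m/\sigma_{m'}}$, and your proposal offers no mechanism to avoid that loss. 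The resulting error is $O(L\log(\lambda_1/L))$, which is why the paper runs the whole argument at threshold $L=\eps\norm{A}{F}/\log\frac1\eps$ rather than at $\eps\norm{A}{F}$.

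Two further points are needed to actually reach $s=\frac{1}{\eps^2}\poly\log(n/\eps)$. The bulk bound is not simply ``surviving entries are small relative to $\sqrt{p_ip_j}$, hence Tropp applies'': Tropp's bound controls $\norm{SA'_mS}{2}$ by $\sqrt{\log n}$ times the largest column norm of the sampled matrix, and controlling those column norms with only $s\sim\log^4 n/\eps^2$ samples requires the improved variance estimate $\abs{A'_{o,i,j}}^4\le\frac{\norm{A_i}{2}^2\norm{A_j}{2}^2}{\eps^2\norm{A}{F}^2}\abs{A'_{o,i,j}}^2$ fed into a Bernstein bound; this is precisely the improvement over the $1/\eps^8$ analysis of the prior work you cite, and it is absent from your argument. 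Moreover that variance argument (and your leverage-score domination claim $p_i=sq_i$) presupposes $p_i\le 1$, which fails for heavy rows; the paper reduces to this case by splitting every row into $s$ copies scaled by $1/\sqrt{s}$ (equivalently, the Binomial sampling in Algorithm~\ref{alg:row_norm_sampling}), preserving the nonzero spectrum and the Frobenius norm. Without these pieces --- the mixed-sign lower bound, the cross-band bookkeeping, the improved variance bound, and the $p_i>1$ reduction --- the proposal is a correct plan but not a proof.
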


This is a substantial improvement over the sampling bound in \cite{bhattacharjee2024sublinear}, which requires sampling a principal submatrix of dimension
$\frac{c\log^{10} n}{\eps^8}$ to match our bound.

Again the most challenging part of our analysis is in controlling the outlying eigenvalues.  Here \cite{bhattacharjee2024sublinear} again requires roughly $1/\eps^3$ samples, whereas we require $1/\eps^2$ samples by essentially the same argument as for the case of uniform sampling.  However on its own, this is enough to give improved bounds as the bottleneck in their argument is in controlling the middle eigenvalues.  The authors do not attempt to give tight $\eps$ bounds for this part of their analysis, and note that their bound is likely not tight.  However we observe that their analysis here is surprisingly close to optimal.  By optimizing their variance argument, we observe that their approach is sufficient to control the middle eigenvalues with optimal $\eps$ dependence.

We note that one cannot hope to do better than $O(1/\eps^4)$ queries for non-adaptive entry sampling algorithms.  Indeed a lower bound of \cite{li2016tight} implies that learning the top singular value of a (non-symmetric) matrix $G \in \R^{n\times n}$ to $\eps \norm{G}{F}$ additive error requires $\Omega(1/\eps^4)$ sketching dimension.  One can symmetrize to $G' = [0, G; G^T, 0]$ to show that learning the top eigenvalue of $G'$ to $\eps \norm{G'}{F}$ additive error requires $\Omega(1/\eps^4)$ sketching dimension in general.  The hard instance $G$ is Gaussian, so its entries are bounded by $\eps^2 \norm{G}{F}^2$ up to log factors implying an $\tilde{\Omega}(1/\eps^4)$ lower bound in the bounded entry model, even for sketches.  The rows of $G$ have equal norms up to constant factors say, and so this implies that $\tilde{\Omega}(1/\eps^4)$ queries are needed even given access to a row-norm sampling oracle that is accurate up to constant factors (which is all that our sampling algorithm requires).

Finally, note that while our results are not stated as a high-probability guarantee, it is easy to improve them.  Simply run the algorithm $O(\log\frac{1}{\delta})$ times and take the median estimate for each eigenvalue $\lambda_i.$

\paragraph{Top eigenvector estimation.}
While our main results focus on eigenvalue estimation, in many situations, one is interested not only in the top eigenvalues but also the associated top eigenvectors.  As a step in this direction, we show that for bounded entry PSD matrices, one can obtain an $\eps n$-approximate top eigenvector by sampling just $O(1/\eps)$ columns of $A.$

The same guarantee could be achieved by \cite{musco2017recursive}, by using $\tilde{O}(n/\eps)$ \textit{adaptive} entry samples to obtain an additive spectral approximation to $A$. Interestingly, in the bounded entry case, we show that no adaptivity is needed.

\begin{algorithm}
\caption{Top Eigenvector Estimation}
\label{alg:top_eigenvector}
\begin{enumerate}
\item \textbf{Input:} Symmetric PSD matrix $A\in \R^{n\times n}$ with $\norm{A}{\infty}\leq 1$, column sampling probability $p$
\item \textbf{Output:} An approximate top eigenvector $u\in \R^n$

\item Sample $S \in \R^{n\times m}$, a column sampling matrix that independently samples each column of $A$ with probability $p$

\item Given $S$, compute $AS$ by sampling the requisite columns of $A$

\item Use $AS$ and $S$ to compute the matrices $S^T A S$ and $S^T A^2 S.$

\item Find $x\in \R^n$ that maximizes $\frac{x^T S^T A^2 S x}{x^T S^T A S x}$

\item Return $\frac{ASx}{\norm{ASx}{2}}$

\end{enumerate}
\end{algorithm}
\begin{theorem}
Let $A\in\R^{n\times n}$ be a symmetric PSD with $\norm{A}{\infty}\leq 1$.  For $p = \frac{c}{\eps n}$, with $3/4$ probability, Algorithm~\ref{alg:top_eigenvector} returns a unit vector $u\in\R^n$ satisfying
\[
u^T A u \geq \lambda_1 - \eps n,
\]
where $\lambda_1$ is the top eigenvalue of $A.$
\end{theorem}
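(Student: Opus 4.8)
The plan is to lower-bound the Rayleigh quotient $u^\top A u$ of the vector returned by Algorithm~\ref{alg:top_eigenvector} by a quantity that depends only on the sampled columns, and then to exhibit a single good witness. Write $A = \sum_i \lambda_i v_i v_i^\top$ with $\lambda_1 \ge \lambda_2 \ge \cdots \ge 0$ and $v_i$ orthonormal, and let $T \subseteq [n]$ be the random set of sampled column indices, so each $j$ lies in $T$ independently with probability $p = c/(\eps n)$. Every quantity computed by the algorithm is invariant under a global rescaling of $S$, so we may treat $S$ as an unscaled $0/1$ column-selection matrix; then its column space is exactly the set of vectors supported on $T$. We may also assume $\lambda_1 \ge \eps n$, since otherwise $u^\top A u \ge 0 > \lambda_1 - \eps n$ trivially. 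For the optimizer $x^\star$ of Step~6 (so $u = ASx^\star/\norm{ASx^\star}{2}$ and $z := Sx^\star$ makes the Step~6 objective finite), I would use the chain
\[
u^\top A u \;=\; \frac{(x^\star)^\top S^\top A^3 S x^\star}{(x^\star)^\top S^\top A^2 S x^\star} \;\ge\; \frac{(x^\star)^\top S^\top A^2 S x^\star}{(x^\star)^\top S^\top A S x^\star} \;\ge\; \frac{w^\top A^2 w}{w^\top A w} \quad \text{for every } w \text{ supported on } T \text{ with } Aw \neq 0,
\]
where the first inequality is Cauchy--Schwarz applied to $A^{1/2}z$ and $A^{3/2}z$ (a ``half-step'' of power iteration cannot decrease the Rayleigh quotient; $A \succeq 0$ makes $A^{1/2}$ well-defined), and the second is optimality of $x^\star$ together with the fact that any such $w$ equals $Sx$ for some $x$. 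Thus it suffices to exhibit, with probability $\ge 3/4$ over $T$, one $w$ supported on $T$ with $w^\top A^2 w / w^\top A w \ge \lambda_1 - \eps n$.

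I would take $w$ to be the restriction of the top eigenvector $v_1$ to the coordinates in $T$. Writing $\gamma_i := \sum_{j \in T}(v_i)_j (v_1)_j = \langle v_i, w\rangle$ and $\beta := \gamma_1 = \norm{w}{2}^2$, expanding $w = \sum_i \gamma_i v_i$ in the eigenbasis gives $w^\top A^2 w = \sum_i \lambda_i^2 \gamma_i^2$ and $w^\top A w = \sum_i \lambda_i \gamma_i^2$, so
\[
\frac{w^\top A^2 w}{w^\top A w} \;\ge\; \frac{\lambda_1^2 \beta^2}{\lambda_1 \beta^2 + E}, \qquad E := \sum_{i \ge 2} \lambda_i \gamma_i^2 .
\]
On the event $E \le \eps n \beta^2$ this is at least $\frac{\lambda_1^2 \beta^2}{(\lambda_1 + \eps n)\beta^2} = \frac{\lambda_1^2}{\lambda_1 + \eps n} \ge \lambda_1 - \eps n$, finishing the argument. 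So it remains to prove $E \le \eps n \beta^2$ with probability $\ge 3/4$, which I would do by bounding $\beta$ below and $E$ above.

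The bounded-entry hypothesis enters as eigenvector incoherence: since $A \succeq 0$ and $\norm{A}{\infty} \le 1$, $1 \ge A_{jj} = \sum_i \lambda_i (v_i)_j^2 \ge \lambda_1 (v_1)_j^2$, hence $\norm{v_1}{\infty}^2 \le 1/\lambda_1$. For $\beta$: $\E \beta = p$ and $\Var(\beta) = p(1-p)\sum_j (v_1)_j^4 \le p\norm{v_1}{\infty}^2 \le p/\lambda_1$, so Chebyshev gives $\beta \ge p/2$ except with probability $\le 4/(p\lambda_1) \le 4/c$ (using $\lambda_1 \ge \eps n$). For $E$: since $\E \gamma_i = p\langle v_i, v_1\rangle = 0$ for $i \ge 2$ we have $\E[\gamma_i^2] = p(1-p)\sum_j (v_i)_j^2 (v_1)_j^2$, so
\[
\E E \;\le\; p \sum_j (v_1)_j^2 \sum_{i \ge 2} \lambda_i (v_i)_j^2 \;=\; p \sum_j (v_1)_j^2 \bigl(A_{jj} - \lambda_1 (v_1)_j^2\bigr) \;\le\; p\sum_j (v_1)_j^2 \;=\; p,
\]
using $\sum_i \lambda_i (v_i)_j^2 = A_{jj} \le 1$; by Markov, $E > \eps n p^2/4$ with probability $\le 4/(\eps n p) = 4/c$. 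Choosing $c$ a large enough absolute constant (say $c = 32$) and a union bound, with probability $\ge 3/4$ both $\beta \ge p/2$ and $E \le \eps n p^2/4 \le \eps n \beta^2$ hold, as needed.

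The step I expect to be the main obstacle is the bound $\E E \le p$. The naive estimate $\sum_j (v_i)_j^2 (v_1)_j^2 \le \norm{v_1}{\infty}^2 \le 1/\lambda_1$ yields only $\E E \le pn/\lambda_1$, which beats $\eps n \beta^2$ (of order $\eps n p^2$) only when $\lambda_1 \gtrsim n/c$ — it fails precisely in the regime $\eps n \le \lambda_1 \ll n$, where the top eigenspace can be highly degenerate. The telescoping identity $\sum_{i \ge 2}\lambda_i (v_i)_j^2 = A_{jj} - \lambda_1 (v_1)_j^2 \le 1$ removes the spurious $n/\lambda_1$ factor, and is exactly what lets a column-sampling probability of order $1/(\eps n)$ — with no hidden dependence on $n$ or on $\lambda_1$ — suffice.
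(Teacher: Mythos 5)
Your proof is correct and follows essentially the same route as the paper: the same witness (the top eigenvector restricted to the sampled coordinates), the same split of the generalized Rayleigh quotient into the $\lambda_1$ term plus a $w^T A_{-1} w$ error term, the same incoherence and Chebyshev bound for the overlap $\beta$, the same $\E[w^T A_{-1} w]\le p$ plus Markov bound (your telescoping identity $(A_{-1})_{jj}=A_{jj}-\lambda_1 (v_1)_j^2\le 1$ is the paper's Lemma~\ref{lem:diag_entries_decrease}, computed in the eigenbasis instead of via a factorization $A_{-1}=U^TU$), and the same Cauchy--Schwarz half-power-step to pass to the returned vector $ASx/\norm{ASx}{2}$. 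The differences are purely presentational, so no changes are needed.
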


\paragraph{Faster sketching.}  \cite{swartworth2023optimal} studied the problem of sketching the eigenvalues of $A$ to $\eps\norm{A}{F}$ additive error, and showed that this can be accomplished with a Gaussian sketch of dimension $O(1/\eps^4).$   Their sketch takes roughly $\frac{1}{\eps^2}n^2$ time to apply, so if $\eps$ were say $n^{-1/4},$ this would be substantially slower than linear in the size of $A$.  It was left open whether faster sketching is possible here when $A$ is dense.  We observe that our row-norm sampling analysis implies the existence of a sketch that runs in $\tilde{O}(n^2)$ time with sketching dimension $\tilde{O}(1/\eps^4).$

\subsection{Our Techniques}

\paragraph{Bounded Entry Matrices.}
We use the same algorithm that was introduced in \cite{bhattacharjee2024sublinear}, however we show how to reduce the sample size without sacrificing in terms of the approximation guarantee.  At a high level we follow a similar approach of splitting the matrix $A$ as $A_o + A_m$ into its ``outer" and ``middle" eigenvalues.  The matrix $A_o$ will zero out all eigenvalues of $A$ with magnitude smaller than $\eps n.$  We will think of $A_o$ as containing all the eigenvalues that we need to approximate, and $A_m$ as adding some additional noise whose contribution we must bound.  For a bounded entry matrix $A$, note that $\norm{A}{F}^2\leq n^2$, so that there can be at most $1/\eps^2$ nonzero eigenvalues in $A_o.$  Thus it is at least reasonable to hope that an $O(1/\eps^2)$ sized principal submatrix is sufficient.
  
If $S$ is the (appropriately rescaled) sampling matrix, then we can write $S A S^T = S A_o S^T + S A_m S^T.$  Note that $SAS^T$ is simply our principal submatrix sample of $A.$

Given this decomposition, the analysis naturally decomposes into two parts: showing that the eigenvalues of $SA_oS^T$ concentrate appropriately around the large eigenvalues, and showing that the perturbation from $SA_m S^T$ is small.  Prior work \cite{bhattacharjee2024sublinear} handles the middle eigenvalues nearly optimally (although we give a small improvement here that we describe later).  So our main technical innovation is to show that the eigenvalues of $S A_o S^T$ concentrate to within $\eps n$ of the eigenvalues of $A_o$ when $S$ samples only $\tilde{O}(1/\eps^2)$ rows.  

To see the main ideas, write the spectral decomposition of $A_o$ as $A_o = V_o \Lambda_o V_o^T$, where $V_o$ contains the top eigenvectors of $A_o$ (where ``top" eigenvectors  have eigenvalues at least $\eps n$ in magnitude).  Consider the extreme case where $A_o$ has a flat spectrum -- say $k$ copies of the eigenvalue $\lambda$ so that $A_o$
 is the scaled projection $\lambda V_o V_o^T.$  After applying $S$, we are left with the matrix $\lambda SV_o V_o^T S^T$ whose spectrum coincides with $\lambda V_o^T S^T S V_o.$

 This suggests showing that $S$ is a subspace embedding for $V_o$.  Indeed, if $S$ distorts by at most $(1\pm \alpha)$ on $V_o$, then for all $x$ we would have 
 \[
 x^T (\lambda V_o^T S^T S V_o) x 
 = \lambda \norm{SV_o x}{2}^2
 = (1 \pm \alpha) \lambda,
 \]
 so that the quadratic form for $S V_o S^T$ is always at most $\lambda + \alpha\lambda$ on unit vectors $x$, and is at least $\lambda - \alpha\lambda$ on a $k$-dimensional subspace.  By Courant-Fischer this would be sufficient to show concentration of the $k$ eigenvalues of $SV_oS^T$ to within $\eps n$ of those of $V_o$ as long as $\alpha \leq \frac{\eps n}{\lambda}.$ 

In the worst case, $\lambda$ could be as large as $n$ if $A$ were the all-ones matrix for example.  Also since we allow negative eigenvalues it is possible to have up to $1/\eps^2$ eigenvalues of magnitude $\eps n.$  This suggests that we choose $S$ so that it yields a $(1\pm \eps)$-distortion subspace embedding over a $1/\eps^2$ dimensional space.  Unfortunately, this would require $S$ to have at least $1/\eps^3$ rows, which would only allow us to match the sampling bound of \cite{bhattacharjee2024sublinear}.  In order to do better we observe similarly to \cite{swartworth2023optimal}, that large eigenvalues cannot occur many times in the spectrum.  Since $A$ has bounded entries, its Frobenius norm is at most $n$, which means that there are at most $n^2/\lambda^2$ eigenvalues of magnitude at least $\lambda.$  So we could hope to only require a $1\pm (\eps n/\lambda)$ distortion subspace embedding over a space of dimension $n^2/\lambda^2$, which can plausibly be achieved by an $S$ with roughly $(n^2/\lambda^2)\cdot(\lambda/(\eps n))^2 = 1/\eps^2$ rows.

Indeed, by using the ``incoherence bound" of \cite{bhattacharjee2024sublinear}, we observe that the leverage scores of $V_o$ are sufficiently small so that uniform sampling approximates leverage score sampling.  This allows to argue that sampling roughly $1/\eps^2$ rows is sufficient to obtain a $1 \pm (\eps n / \lambda)$ distortion subspace embedding on $V_{\geq \lambda}$, the eigenvectors with associated eigenvalue at least $\lambda$ in magnitude.  It is possible however that the spectrum of $A$ is not completely flat, and contains a range of outlying eigenvalues between $\eps n$ and $n.$  To handle this, we ask for $S$ to satisfy the following deterministic condition:

\begin{center}
$S$ is a $1 \pm (\eps n / \lambda)$ distortion subspace embedding on $V_{\geq \lambda}$ \textit{for all $\lambda \geq \eps n$}.
\end{center}

Perhaps surprisingly, we show that this simple condition is enough to approximate all eigenvalues of $A_o$ to within $\eps n$ additive error.

While the above paragraphs motivate this condition, there are a few technical challenges.  The first is handling negative eigenvalues. If $A$ were PSD, then the subspace embedding argument given above would essentially be sufficient to show that $S A S^T$ is at least $\lambda_k - \eps n$\  on a subspace of dimension $k$, which would yield the right lower bound on the $k$th eigenvalue of $S A S^T.$  However we have to worry that the negative eigenvalues of $A$ could bring down this  eigenvalue estimate. We indeed find a $k$-dimensional subspace on which $SAS^T$ is at least $\lambda_k - \eps n$ which is mostly aligned with the top $k$ eigenvectors of $A$. However we choose it carefully so as to avoid the large negative eigenvectors of $A.$  This is similar to the approach suggested in \cite{swartworth2023optimal}, however their argument crucially relied on $S$ being Gaussian so that the positive and negative eigen-spans of $SAS^T$ are nearly orthogonal.  Our argument shows that this is not necessary; the simple deterministic guarantee stated above suffices.

The other main challenge is upper bounding the top eigenvalue of $S A S^T$ (which turns out to be sufficient for upper bounding all eigenvalues).  \cite{swartworth2023optimal} gave a bound on the top eigenvalue of a Gaussian matrix with non-identity covariance.  Unfortunately in the sampling setting, we cannot rely on Gaussianity.  Instead, we show that that the same deterministic subspace embedding condition given above that was natural for lower-bounding eigenvalues also suffices for obtaining the necessary upper bounds on the eigenvalues.  The proof works by first partitioning eigenvalues into level sets.  It is simple to prove the desired upper bound on the top eigenvalue of $SAS^T$ if $A$ were restricted to be just the top level set. Then by carefully keeping track of the interaction between each pair of level sets, we are able to show that the additional level sets do not add much mass to the top eigenvalue.

Finally we remove the extra $\log n$ arising from the bound on the middle eigenvalues \cite{bhattacharjee2024sublinear}.  The idea is simply to observe that after applying the uniform sampling algorithm once, we are again left with a roughly $s\times s$ matrix, whose spectrum we would like to approximate to additive $O(\eps n)$ error.  After rescaling, the entries of this matrix are bounded by $n/s,$ and so we can now apply the uniform sampling algorithm again to obtain an $O(\eps\cdot \frac{n}{s}\cdot s) = O(\eps n)$ additive approximation to the spectrum of the submatrix, this time by sampling $O(\frac{1}{\eps^2}\poly\log s)$ rows.  Note that $\log n$ is now replaced with $\log s$ which is potentially much smaller.  If needed, this procedure can be repeated until the dependence on $\log n$ is removed.

\paragraph{Squared row-norm sampling.} As with uniform sampling, our algorithm is based off the algorithm of \cite{bhattacharjee2024sublinear}, although with a tighter analysis.  Their result requires that one samples a principal submatrix of dimension roughly $s\times s$ with $s = O(\frac{1}{\eps^8}\poly\log n)$ in order to obtain an additive $\eps \norm{A}{F}$ approximation to the spectrum of $A.$  Notably, their algorithm does not simply return the spectrum of the sampled submatrix but instead judiciously zeros out certain entries in order to reduce variance. To see why this is necessary, it is instructive to consider the case where $A$ is the identity matrix.  If one samples a $1/\eps^2$ sized principal submatrix and rescales it by $n \eps^2$, then the resulting matrix will have eigenvalues of size $\eps^2 n.$ This would be fine if one wanted to achieve $\eps n$ additive error.  However, for $\eps \norm{A}{F} = \eps \sqrt{n}$ additive error, this is unacceptable.  Fortunately, \cite{bhattacharjee2024sublinear} gives an entry zeroing procedure which zeros out most diagonal entries as well as entries in particularly sparse rows and columns.  We are able to employ their zeroing procedure essentially as a black-box.

We do not  modify their zeroing procedure, but instead improve their bounds for both the outer and middle eigenvalues of the sampled submatrix.  We use the same decomposition $SAS^T = SA_o S^T + S A_m S^T$ as for the uniform sampling analysis discussed above, where $S$ now samples each row $i$ with probability $p_i$ proportional to its squared row norm, and rescales by $1/\sqrt{p_i}.$  Our task is two-fold.  We first show that the eigenvalues of $S A_o S^T$ concentrate within $\eps \norm{A}{F}$ to the eigenvalues of $A_o.$ Then we bound the operator norm of $SA_m S^T$ by $\eps \norm{A}{F}$, and so Weyl's inequality shows that the eigenvalues of $S A S^T$  are within $O(\eps n)$ of the eigenvalues of $A.$

Bounding the eigenvalues of $S A_o S^T$ is similar to the uniform sampling case.  Squared row-norm sampling provides an approximation to leverage score sampling, which is sufficient to obtain the subspace embedding guarantees that we require.

Bounding the operator norm of $SA_m S^T$ is what required \cite{bhattacharjee2024sublinear} to have a $1/\eps^8$ dependence.  It turns out that a fairly simple technical improvement to their argument improves this to the optimal $1/\eps^2$ dependence, at least when $p_i := \frac{s\norm{A_i}{2}^2}{\norm{A}{F}^2} \leq 1,$ so that the sampling probabilities for each row are all at most $1.$  Their argument contains a technical fix for the situation where $p_i > 1,$ however we do not need this.  Instead we can duplicate the rows with large norm, say $N$ times, while scaling them down by a factor of $1/\sqrt{N}.$  This does not change the spectrum, but reduces to the situation where all $p_i$'s are at most $1.$  The advantage of this is that rows of $A$ with large norm can now be sampled multiple times if the value of $p_i$ dictates that this should occur.  We note that one could likely also sample $s$ rows i.i.d. from the squared-row-norm distribution to achieve a similar result.

\paragraph{Top eigenvector estimation.}  For PSD $A\in\R^{n\times n}$ with entries bounded by $1$, we are interested in producing a vector $u$ that nearly maximizes the Rayleigh quotient $\frac{u^T A u}{u^T u}.$  In the same spirit as our eigenvalue estimation results, we are willing to tolerate $\eps n$ additive error, so we would like to find $u$ that satisfies $\frac{u^T A u}{u^T u} \geq \lambda_1 - \eps n.$

Perhaps the simplest attempt would be to choose a column sampling matrix $S$ and form $S^T A S.$  We now have access to the Rayleigh quotient on the image of $S$, and so we could choose $u\in \text{Im}(S)$ to maximize $\frac{u^T A u}{u^T u}.$  Unfortunately it is clear that this doesn not work.  For example consider the situation where $A$ is the all ones matrix and $S$ samples $k$ columns. Then $\lambda_1 = n,$ but $u$ is supported on $k$ coordinates which means that the Rayleigh quotient is at most $k.$  Even to obtain $n/2$ additive error, we would need to set $k = \Omega(n)$, meaning that we would sample a constant fraction of $A$'s entries.

Searching for $u \in \text{Im}(S)$ is not good enough.   A better approach would be to optimize the Rayleigh quotient over $\text{Im}(AS).$  Indeed if we could implement this approach it would work.  The issue is that computing the Rayleigh quotient for such $U$ would require us to calculate expressions of the form 
\[
\frac{(ASx)^T A (AS x)}{(ASx)^T (ASx)}
= \frac{x^T S^T A^3 S x}{x^T S^T A^2 S x}.
\]
Unfortunately $AS$ does not give us enough information to calculate the numerator. However it does give us enough information to calculate the denominator.  This suggests that we should instead optimize $u$ over $\text{Im}(A^{1/2}S)$. Indeed this only requires us to compute quantities of the form
\[
\frac{x^T S^T A^2 S x}{x^T S^T A S x},
\]
which we can do if we have $AS.$  We show that this works.  That is, for some $x$, the generalized Rayleigh quotient above is at least $\lambda_1 - \eps n.$  Thus if $x$ is optimal, then we can take $A^{1/2} S x$ as our approximate eigenvector. Given $x$, we do not have a direct way to compute $A^{1/2}Sx.$  However we can compute $ASx$.  This can be thought of as applying an additional half iteration of power method to $A^{1/2}Sx$, and so as we observe, the Rayleigh quotation for $ASx$ will be at least as large as for $A^{1/2}Sx.$  Thus our algorithm will ultimately return $ASx.$

\paragraph{Fast eigenvalue sketching.}
Previous work, \cite{swartworth2023optimal} considered the problem of sketching the eigenvalues of a matrix $A$ up to $\eps\norm{A}{F}$ additive error.  While this sketch achieved the optimal $O(\frac{1}{\eps^4})$ sketching dimension, it requires roughly $\frac{n^2}{\eps^2}$ time to apply when $A$ is a dense $n\times n$ matrix.  We can use our sampling analysis to obtain a faster sketch.

The idea is to rotate by a random orthogonal matrix that supports fast matrix multiplication.  Specifically we take a matrix $U$ which is a Hadamard matrix composed with random sign flips and compute $U^T A U.$ This flattens the row-norms of $A$ so that one can simply take the sketch to be a random principle submatrix of $U^T A U$.  Then applying our analysis for squared row-norm sampling shows that the sketch can be used to obtain an additive $\eps\norm{A}{F}$ approximation to the spectrum of $A.$

\subsection{Additional Related Work}

\paragraph{Spectral density estimation.}
Recently, spectral density estimation was studied by \cite{braverman2022linear}, for normalized graph Laplacians and adjacency matrices.  This result measures error with respect to Wasserstein distance for the (normalized) spectral histogram, and queries $\Omega(n/\poly(\eps))$ entries of $A$, so is not directly comparable to our setting.  A related work \cite{cohen2018approximating}, achieves the same type of Wasserstein guarantee but with $\exp(1/\eps)$ queries and access to a random walk on the underlying graph.  Since we aim for $\poly(1/\eps)$ queries this is again not directly comparable to our setting.

\paragraph{Matrix sparsification.}
\cite{bhattacharjee2023universal} studies the problem of deterministically constructing a sparsifier of bounded entry matrix $A$ by using a sublinear number of queries.  Their approximation result would yield an additive $\eps n$ approximation to all eigenvalues but requires $\Omega(n/\eps^2)$ entry queries.  Similarly, for bounded entry PSD matrices they could achieve the same $\Omega(n/\eps^2)$ bound for approximating the top eigenvector of a PSD matrix, but this is worse than what we require for sampling by a factor of $1/\eps.$

\paragraph{Low rank approximation.}
A line of work  \cite{musco2017sublinear, bakshi2018sublinear, bakshi2020robust} considered the problem of constructing low-rank approximations to PSD matrices $A$ using a sublinear number of queries. Notably \cite{bakshi2020robust} gave an algorithm with optimal $O(kn/\eps)$ query complexity for obtaining a PSD rank $k$ approximation $\hat{A}$ to $A$ satisfying
\[
\norm{A - \hat{A}}{F} \leq (1 + \eps) \norm{A - A_{\langle k \rangle}}{F},
\]
where $A_{\langle k \rangle}$ is the optimal rank $k$ approximation of $A.$  For the case $k=1$, this is related to the top eigenvector approximation problem that we consider.  In a similar spirit, \cite{musco2017recursive} gives an additive spectral approximation to $A$ which also implies our top eigenvector bound. However notably these algorithms are all adaptive, whereas we are interested in algorithms that sample non-adaptively. Thus the techniques involved are quite different from what we consider.  Additionally these algorithms all require at least an extra $O(\log\frac{1}{\eps})$ overhead compared to our result.

\paragraph{Sketching.}  In a somewhat different setting to sampling, eigenvalue sketching was previously studied by \cite{swartworth2023optimal} where an optimal $O(1/\eps^4)$ sketching dimension was given for obtaining $\eps\norm{A}{F}$ additive error.  Previously \cite{andoni2013eigenvalues} gave a sketch for the top eigenvalues.  By rearranging their bounds, this implies an $O(1/\eps^6)$ for sketching all eigenvalues to $\eps\norm{A}{F}$ additive error; however this only applies to PSD matrices.  Finally, we note the question of sketching the operator norm was considered in \cite{li2016tight}.  These results imply an $\Omega(n^2)$ lower bound for sketching the top eigenvalue to relative error, which justifies the focus on additive error for sketching and sampling algorithms.
 
\section{Preliminaries}
\subsection{Notation}
Throughout, we use $c$ to denote an absolute constant, which may change between uses. The notation $\tilde{O}(f)$ means $O(f\log^c f)$ for some absolute constant $c.$

Unless otherwise stated, the matrix $A$ always refers to a symmetric (not necessarily PSD) matrix.  We use the notation $A_{\langle k \rangle}$ to denote optimal rank $k$ approximation to $A.$ That is, $A_{\langle k \rangle}$ minimizes $\norm{A - A_{\langle k \rangle}}{F}$ over rank $k$ matrices, where $\norm{A}{F}$ is the Frobenius norm.  In other words, $A_{\langle k \rangle}$ is $A$ with all but its $k$ largest magnitude eigenvalues zeroed out.  The notation $A_{\langle -k \rangle}$ means $A - A_{\langle k \rangle}.$  For a matrix $A,$ the notation $A_i$ denotes the $i$th row of $A.$  A norm without subscripts is always the $\ell_2$ norm for vectors and the operator norm for matrices.
We sometimes denote the eigenvalues of a matrix $A\in\R^{n\times n}$ as $\lambda_1(A) \geq \lambda_2(A) \geq \lambda_n(A).$  The notation $\lambda_{\max}(A)$ is equivalent to $\lambda_1(A).$ We use $\norm{A}{\infty}$ to denote the largest magnitude of an entry in $A.$  When we say that $A$ has ``bounded entries" we will always mean that $\norm{A}{\infty}\leq 1.$ 
 \subsection{Basic Definitions}

We recall the definition of a subspace embedding embedding in the form that we will use it here.  This definition is standard in sketching literature (see \cite{woodruff2014sketching} for example).
\begin{Definition}
We say that $S\in\R^{k\times n}$ is an $1\pm \eps$ distortion subspace embedding for $X\in \R^{n\times m} $ if for all $v\in\R^m$,
\[
(1-\eps)\norm{Xv}{}^2 \leq \norm{SXv}{}^2 \leq (1+\eps) \norm{Xv}{}^2.
\]
\end{Definition}

We will also use the notion of leverage scores early on in order to show that the subspace embedding guarantee that we require is satisfied.  We recall the basic definition here.
\begin{Definition}
Let $X \in \R^{n\times d}$ be a matrix.  The leverage score for row $i$ of $X$ is defined by
\[
\tau_i = e_i^T X (X^T X)^{\dagger}X^T e_i,
\]
where $(X^T X)^{\dagger}$ denotes the Moore-Penrose pseudo-inverse of $X^T X$ and $e_i \in \R^n$ is the $i$th standard basis vector.
\end{Definition}

\subsection{Incoherence Bound}
We use an incoherence bound from \cite{bhattacharjee2024sublinear} in order to control the leverage scores associated to the top eigenvectors. Roughly these bounds say that eigenvectors corresponding to large eigenvalues must be spread out, and moreover that the supports of such eigenvectors cannot overlap much with one another.  For completeness, we supply a short proof here.

\begin{lemma}
\label{lem:incoherence_bound}
Let $A$ be a matrix and let $A_{o}$ denote the projection of $A$ onto the eigenvectors with eigenvalue at least $\alpha.$  Write $A_o = V_o \Lambda_o V_o^T$ where $V_o$ has the eigenvectors of $A$ with eigenvalue at least $\alpha$ as its columns, and $\Lambda_o$ is diagonal matrix containing the associated eigenvalues. Then 
\[
\norm{(V_{o})_i}{}^2 \leq \frac{\norm{A_i}{}^2}{\alpha^2}.
\]
\end{lemma}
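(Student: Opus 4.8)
The plan is to sandwich the quantity $\|A_o e_i\|^2$ between $\alpha^2\|(V_o)_i\|^2$ and $\|A_i\|^2$. It is worth noting first that $\|(V_o)_i\|^2$ is exactly the $i$-th leverage score of $V_o$: since $V_o$ has orthonormal columns, $(V_o^T V_o)^\dagger = I$ and $\tau_i = e_i^T V_o V_o^T e_i = \|(V_o)_i\|^2$, which is why this bound is precisely what is needed later to compare uniform sampling with leverage-score sampling.

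First I would record the algebraic facts used throughout. Let $P = V_o V_o^T$ be the orthogonal projector onto the span of the eigenvectors of $A$ with eigenvalue of magnitude at least $\alpha$. Since $A$ is symmetric it has an orthonormal eigenbasis, and $P$ projects onto a sub-collection of it; hence $P$ commutes with $A$ and $A_o = PA = AP$. For the lower bound, write $A_o e_i = V_o \Lambda_o V_o^T e_i = V_o \Lambda_o (V_o)_i^T$ and expand
\[
\|A_o e_i\|^2 = (V_o)_i \Lambda_o^2 (V_o)_i^T = \sum_j \lambda_j^2 (V_o)_{ij}^2 \;\geq\; \alpha^2 \sum_j (V_o)_{ij}^2 = \alpha^2 \|(V_o)_i\|^2,
\]
where the inequality uses that every eigenvalue $\lambda_j$ appearing on the diagonal of $\Lambda_o$ satisfies $|\lambda_j| \geq \alpha$.

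For the upper bound I would use that $A_o e_i = A P e_i$ and split $A e_i = A P e_i + A(I-P)e_i$. Because $P e_i$ lies in the span of the selected eigenvectors while $(I-P) e_i$ lies in the span of the remaining, mutually orthogonal, eigenvectors, the two images $A P e_i$ and $A(I-P)e_i$ are orthogonal; hence $\|A_i\|^2 = \|A e_i\|^2 = \|A P e_i\|^2 + \|A(I-P)e_i\|^2 \geq \|A P e_i\|^2 = \|A_o e_i\|^2$. Combining the two inequalities yields $\alpha^2 \|(V_o)_i\|^2 \leq \|A_i\|^2$, which is the claimed bound. The argument is short and I do not expect a real obstacle; the only step that needs care is the appeal to the symmetry of $A$ — this is what guarantees both that $P$ commutes with $A$ and that $A e_i$ decomposes orthogonally along the ranges of $P$ and $I-P$.
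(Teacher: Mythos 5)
Your proof is correct and sandwiches the same quantity as the paper: since $\norm{A_oe_i}{}^2=\norm{(AV_o)_i}{}^2=\sum_j\lambda_j^2 (V_o)_{ij}^2$, your lower bound is identical to the paper's, and your upper bound (via $A_o=AP$, commutation of the spectral projector $P$ with $A$, and the Pythagorean decomposition of $Ae_i$) is just a slightly longer route to the paper's one-line observation that $\norm{(AV_o)_i}{} = \norm{A_iV_o}{}\leq\norm{A_i}{}$ because $V_o$ has orthonormal columns. So this is essentially the same argument, and all steps (including the appeal to symmetry) are sound.
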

\begin{proof}
We have $AV_o = V_o \Lambda_o.$  Therefore
\[
\norm{(AV_o)_i}{}^2 = \norm{(V_o \Lambda_o)_i}{}^2 \geq \alpha^2 \norm{(V_{o})_i}{}^2,
\]
since the diagonal entries of $\Lambda_o$ are all at least $\alpha$ in magnitude. On the other hand,
\[
\norm{(AV_o)_i}{}^2 = \norm{A_i V_o}{}^2 \leq \norm{A_i}{}^2,
\]
since $V_o$ has orthonormal columns.  The lemma follows from combining these two bounds.
\end{proof}

\subsection{Leverage Score Sampling}



We use the following leverage score sampling bound, which is a consequence of a matrix Chernoff bound \cite{kyng2018tutorial, tropp2012user}.

\begin{theorem}
\label{thm:lev_score_sampling}
Let $V\in \R^{n\times d}$, and form a sampling matrix $S$ by including each index $i$ with probability $p_i$ and rescaling by $\frac{1}{\sqrt{p_i}}.$

Let $\tau_i$ denote the leverage score for row $i$ of $V.$ If $p_i \geq \min(\frac{2}{\eps^2} \tau_i\log\frac{d}{\delta}, 1)$, then with probability at least $1 - \delta$ we have
$(1-\eps)V^T V \preceq (SV)^T (SV) \preceq (1+\eps) V^T V.$
\end{theorem}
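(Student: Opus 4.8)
\emph{Proof plan.} The plan is to reduce to the isotropic case and then invoke the matrix Chernoff (equivalently, matrix Bernstein) bound of \cite{tropp2012user, kyng2018tutorial}. Writing a reduced SVD $V = U\Sigma W^{T}$ with $U,W$ having orthonormal columns and $\Sigma \succ 0$, one has $(SV)^{T}(SV) = W\Sigma (SU)^{T}(SU)\Sigma W^{T}$ and $V^{T}V = W\Sigma^{2}W^{T}$, so the desired inequality $(1-\eps)V^{T}V \preceq (SV)^{T}(SV) \preceq (1+\eps)V^{T}V$ is equivalent (it is trivial on $\ker(V^{T}V)$, and on the orthogonal complement one conjugates by the inverse square root of $V^{T}V$) to $(1-\eps)I \preceq (SU)^{T}(SU) \preceq (1+\eps)I$. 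Moreover the leverage scores are unchanged, $\tau_i = e_i^{T}U U^{T}e_i = \|U_i\|^2$. Hence we may assume from the outset that $V$ has orthonormal columns; write $v_i^{T}$ for its $i$-th row, so $\tau_i = \|v_i\|^2$ and $\sum_i v_i v_i^{T} = I$, and the goal becomes $\|(SV)^{T}(SV) - I\| \le \eps$.

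Next express $(SV)^{T}(SV) = \sum_i X_i$ with $X_i := \tfrac{b_i}{p_i} v_i v_i^{T}$, where $b_i \in \{0,1\}$ indicates whether row $i$ was sampled; the $X_i$ are independent and PSD with $\sum_i \E X_i = \sum_i v_i v_i^{T} = I$. Separate out the indices with $p_i = 1$, for which $b_i \equiv 1$ and $X_i = v_i v_i^{T}$ is deterministic: let $D := \sum_{i : p_i = 1} v_i v_i^{T}$ (so $0 \preceq D \preceq I$) and $Y := \sum_{i : p_i < 1} X_i$. Then $(SV)^{T}(SV) = D + Y$ and $\E Y = I - D$, so the target reads $\|(D + Y) - I\| = \|Y - \E Y\| \le \eps$ --- the deterministic mass $D$ simply cancels. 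Every summand of $Y$ obeys $\lambda_{\max}(X_i) \le \tau_i / p_i \le R := \tfrac{\eps^2}{2\log(d/\delta)}$, because for an index with $p_i < 1$ the hypothesis forces $\tfrac{2}{\eps^2}\tau_i\log\tfrac{d}{\delta} \le p_i$.

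It remains to bound $\|Y - \E Y\|$, where $Y - \E Y = \sum_{i : p_i < 1}(X_i - \E X_i)$ is a sum of independent, mean-zero, symmetric matrices with per-term spectral norm at most $R$ and matrix variance $\bigl\|\sum_i \E[(X_i - \E X_i)^2]\bigr\| \le \bigl\|\sum_i \E[X_i^2]\bigr\| = \bigl\|\sum_i \tfrac{\tau_i}{p_i} v_i v_i^{T}\bigr\| \le R\,\bigl\|\sum_i v_i v_i^{T}\bigr\| \le R$. The matrix tail inequality of \cite{tropp2012user, kyng2018tutorial} then yields $\Pr[\|Y - \E Y\| \ge \eps] \le 2d\exp(-c\,\eps^2 / R) = 2d\,(d/\delta)^{-c}$ for an absolute constant $c$, which is at most $\delta$; tracking the constants through the sharp matrix-Chernoff tail gives the threshold $\tfrac{2}{\eps^2}\tau_i\log\tfrac{d}{\delta}$ as stated (with a cruder Bernstein estimate one would need a slightly larger absolute constant in its place, which suffices for every application of the lemma in this paper). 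I expect the only delicate point to be the treatment of always-sampled rows: applying matrix Chernoff directly to the full sum $\sum_i X_i$ fails, since such a term can have $\lambda_{\max}(X_i) = \tau_i$ as large as $\Theta(1)$, destroying the uniform bound $R$ the tail inequality needs; peeling these rows off into the fixed matrix $D$ removes the difficulty at no cost, precisely because the two-sided inequality to be proved only constrains the random part $Y$ around its own mean. Carrying the reduction to $V^{T}V = I$ through consistently when $V$ is rank-deficient is routine.
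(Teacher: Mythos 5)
Your proposal is correct and follows essentially the same route the paper takes: the paper does not prove this theorem but cites it as a consequence of the matrix Chernoff/Bernstein bounds of \cite{kyng2018tutorial,tropp2012user}, and your argument (reduce to orthonormal $V$, peel off the always-sampled $p_i=1$ rows into a deterministic part, apply the matrix tail bound to the remaining independent terms with $\lambda_{\max}(X_i)\le \eps^2/(2\log(d/\delta))$ and matching variance) is exactly the standard proof behind that citation. The only looseness is the precise absolute constant $2$ in the threshold, which you correctly flag and which is immaterial to every use of the theorem in the paper.
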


\subsection{Approximate matrix product}

The following observation was Lemma 1 of \cite{cohen2015optimal}. We supply a short proof here for completeness.
\begin{lemma}
\label{lem:subspace_emb_to_apx_matrix_prod}
Let $A$ and $B$ be real matrices and let $[A| B]$ be the concatenation of $A$ and $B$.  Let $S$ be an $\eps$-distortion subspace embedding for $[A|B].$ Then 
\[
\norm{A^T S^T S B - A^T B}{} \leq \eps \norm{A}{}\norm{B}{}.
\]
\end{lemma}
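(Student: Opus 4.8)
The plan is to reduce the off-diagonal bound $\norm{A^T S^T S B - A^T B}{}$ to the action of $S$ on the concatenated matrix $[A\mid B]$, and then exploit a polarization-type identity: for any vectors $u$ in the column span of $A$ and $v$ in the column span of $B$, the inner product $u^T S^T S v - u^T v$ can be expressed via squared norms of $u+v$ and $u-v$ (or $u + tv$ for suitable scalars $t$), each of which is distorted by at most $\eps$ since $u \pm v$ lies in the column span of $[A\mid B]$. First I would reduce to the case $\norm{A}{} = \norm{B}{} = 1$ by homogeneity, so it suffices to prove $\abs{u^T S^T S v - u^T v} \leq \eps$ whenever $u = Ax$, $v = By$ with $\norm{u}{} \leq 1$ and $\norm{v}{} \leq 1$; indeed, bounding the operator norm of $A^T S^T S B - A^T B$ amounts to bounding $u^T(S^T S - I)v$ over unit vectors in the respective column spans.

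The key computation is the identity
\[
u^T S^T S v - u^T v = \tfrac14\left( \norm{S(u+v)}{}^2 - \norm{u+v}{}^2 \right) - \tfrac14\left( \norm{S(u-v)}{}^2 - \norm{u-v}{}^2 \right).
\]
Since $u+v$ and $u-v$ are both in the column span of $[A\mid B]$, the subspace-embedding hypothesis gives $\abs{\norm{S(u\pm v)}{}^2 - \norm{u\pm v}{}^2} \leq \eps \norm{u \pm v}{}^2$. Therefore
\[
\abs{u^T S^T S v - u^T v} \leq \tfrac{\eps}{4}\left( \norm{u+v}{}^2 + \norm{u-v}{}^2 \right) = \tfrac{\eps}{2}\left( \norm{u}{}^2 + \norm{v}{}^2 \right) \leq \eps,
\]
using the parallelogram law and $\norm{u}{}, \norm{v}{} \leq 1$. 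Taking the supremum over such $u, v$ and rescaling by $\norm{A}{}\norm{B}{}$ gives the claim.

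I expect the main (minor) obstacle to be purely bookkeeping: being careful that the operator norm of $A^T S^T S B - A^T B$ is genuinely controlled by the bilinear form $u^T(S^TS-I)v$ with $u \in \mathrm{colspan}(A)$, $v \in \mathrm{colspan}(B)$, and that $u+v, u-v$ indeed lie in $\mathrm{colspan}([A\mid B])$ so the subspace embedding applies — both are immediate but should be stated. One alternative to the $\pm$ polarization, if one wants to shave the constant, is to optimize over $u + tv$ for a free scalar $t$ and take $t \to $ the minimizing value, but the factor given above already yields exactly the stated bound $\eps\norm{A}{}\norm{B}{}$, so no optimization is needed.
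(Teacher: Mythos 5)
Your proposal is correct and follows essentially the same route as the paper: reduce the operator norm to the bilinear form $x^T(A^TS^TSB-A^TB)y$ over unit $x,y$ and invoke the fact that a subspace embedding for $[A\mid B]$ preserves inner products of vectors in its column span; the paper simply asserts this last fact, while you justify it explicitly via the polarization identity (and your normalization $\norm{A}{}=\norm{B}{}=1$ with $\norm{u}{},\norm{v}{}\leq 1$ correctly turns the $\tfrac{\eps}{2}(\norm{u}{}^2+\norm{v}{}^2)$ bound into the stated $\eps\norm{A}{}\norm{B}{}$ guarantee).
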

 
\begin{proof}
Let $x$ and $y$ be unit vectors of appropriate dimension such that
\[
\norm{A^T S^T S B - A^T B}{} 
= x^T (A^T S^T S B - A^T B) y
= \inner{SAx}{SBy} - \inner{Ax}{By}.
\]

Since $S$ is a subspace embedding for $[A|B]$, it preserves the inner product $\inner{Ax}{By}$ up to additive error:
\[
\abs{\inner{SAx}{SBy} - \inner{Ax}{By}}
\leq \eps \norm{Ax}{}\norm{By}{}
\leq \eps \norm{A}{}\norm{B}{}.
\]
The lemma follows.
\end{proof}

Combining with Theorem~\ref{thm:lev_score_sampling} above immediately gives a generalization of Theorem~\ref{thm:lev_score_sampling} for constructing approximating matrix products.  This will account for most of our use cases for leverage score sampling guarantees, so we record this fact here for easy reference.

\subsection{Additional facts}
\begin{proposition}
\label{prop:row_norms_stay_bounded}
Let $A$ be a symmetric matrix.  Let $P$ be a projection onto the span of a subset of $A$'s eigenvectors.  Then $\norm{(PAP^T)_k}{} \leq \norm{A_k}{}$ where $A_k$ is the $k$th row of $A.$
\end{proposition}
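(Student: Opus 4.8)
## Proof Plan for Proposition~\ref{prop:row_norms_stay_bounded}

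The plan is to exploit the fact that $P$ projects onto a span of eigenvectors of $A$, which means $P$ commutes with $A$. Write $A = \sum_j \lambda_j v_j v_j^T$ in its spectral decomposition, and let $\mathcal{I}$ be the index set of eigenvectors onto which $P$ projects, so that $P = \sum_{j \in \mathcal{I}} v_j v_j^T$. Since the $v_j$ are eigenvectors, $AP = PA = \sum_{j \in \mathcal{I}} \lambda_j v_j v_j^T$, and because $P$ is a projection ($P = P^T = P^2$), we get $PAP^T = PAP = PA = AP$. So $PAP^T$ is itself a symmetric matrix whose eigenvectors are a subset of those of $A$, with the complementary eigenvalues zeroed out.

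From here, the key observation is that $PAP^T = P A$ (equivalently $AP$), so the $k$th row of $PAP^T$ is $e_k^T P A = (Pe_k)^T A$. Then
\[
\norm{(PAP^T)_k}{2} = \norm{(Pe_k)^T A}{2} = \norm{A P e_k}{2} \leq \norm{A}{2}\,\norm{P e_k}{2} \leq \norm{A}{2},
\]
where I used that $A$ is symmetric so $(Pe_k)^T A = (A P e_k)^T$, that the operator norm bounds $\norm{A x}{2} \leq \norm{A}{2}\norm{x}{2}$, and that $P$ is a projection so $\norm{Pe_k}{2} \leq \norm{e_k}{2} = 1$. However, this bounds the row norm by the \emph{operator} norm $\norm{A}{2}$, which could be larger than $\norm{A_k}{2}$, so this crude approach is not quite enough and I need to be more careful.

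The better route: since $PAP^T = AP$, the $k$th row of $PAP^T$ equals $e_k^T A P = A_k P$, i.e. the $k$th row of $A$ multiplied on the right by the projection $P$. Since $P$ is an orthogonal projection, it is a contraction, so $\norm{A_k P}{2} \leq \norm{A_k}{2}$. This is exactly the claimed inequality. So the real content is just the identity $(PAP^T)_k = A_k P$, which follows from $PAP^T = AP$ (using $P = P^T = P^2$ and $AP = PA$ since $P$ projects onto an invariant subspace of $A$). The main—and only—obstacle is verifying this commutation/identity cleanly; once that is in place, the norm bound is the elementary fact that projections do not increase $\ell_2$ norm. I would write it as: $(PAP^T)_k = e_k^T PAP^T = e_k^T (AP)P = e_k^T A P = A_k P$, hence $\norm{(PAP^T)_k}{2} = \norm{A_k P}{2} \leq \norm{P}{2}\norm{A_k}{2} = \norm{A_k}{2}$, where the first equality in this chain uses $PAP^T = PAP = PPA = PA = AP$.
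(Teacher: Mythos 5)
Your proof is correct, but it takes a different route from the paper's. The paper expands the squared row norm in the eigenbasis: writing the spectral decomposition, it computes $\norm{A e_k}{}^2$ as a sum of nonnegative contributions, one per eigenvector, and observes that the $k$th row of $PAP^T$ retains only the subset of terms corresponding to the kept eigenvectors, so the squared norm can only decrease. You instead use the structural fact that $P$ projects onto an $A$-invariant subspace, hence commutes with $A$, giving $PAP^T = AP$; then $(PAP^T)_k = A_k P$ and the bound follows because an orthogonal projection is a contraction. Your version is arguably cleaner and more general (it works verbatim for any orthogonal projection commuting with $A$, not just one built from eigenvectors), and it sidesteps a small notational wrinkle in the paper's proof, which writes $A = \sum_i u_i u_i^T$ — literally valid only for PSD $A$, though the paper's computation is easily repaired by carrying signs. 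What the paper's term-by-term argument buys is an explicit accounting of exactly which nonnegative contributions to the squared row norm are dropped, but both arguments establish the same inequality. One small remark: the first half of your write-up (the operator-norm bound $\norm{(PAP^T)_k}{} \leq \norm{A}{}$) is, as you note yourself, not sufficient and should simply be omitted from a final version; the commutation-plus-contraction argument alone is complete, provided you state (as you do) that the span of a subset of eigenvectors is $A$-invariant and that this is what yields $AP = PA$.
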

\begin{proof}
Using the spectral decomposition of $A$, we can write $A = \sum_i u_i u_i^T$ and $PAP^T = \sum_j v_j v_j^T$ where the $u_i$'s are mutually orthogonal, and where the $v_j$'s are a subset of the $u_i$'s.  Then 
\begin{align*}
\norm{A_k}{}^2 
&= \norm{Ae_k}{}^2 \\
&= \norm{\sum_i \inner{u_i}{e_k} u_i}{}^2\\
&= \sum_i \norm{u_i}{}^2 \abs{\inner{u_i}{e_k}}^2\\ 
&\geq \sum_j \norm{v_j}{}^2 \abs{\inner{v_j}{e_k}}^2\\
&= \norm{(PAP^T)_k}{}^2.
\end{align*}
\end{proof}

\section{Our subspace embedding condition}

In the next section we obtain both upper and lower bounds on the outlying eigenvalues.  Importantly, we show how to obtain these bounds using only subspace embedding guarantees.  The leverage score estimates will show that we can use a sampling procedure to construct the desired subspace embeddings.

We will let $S$ be our sampling matrix throughout.  The key subspace embedding property that we require is as follows.

\begin{assumption}
\label{assump:subspace_embedding}
For all $\lambda \geq \outlyingthresh$, $S$ is a $\min(\outlyingthresh/\lambda, 1/10)$ distortion subspace embedding on the span of eigenvectors of $A$ with associated eigenvalue at least $\lambda$ in magnitude
\end{assumption}

 To streamline our analysis, we use this assumption in full generality. One should think of $\outlyingthresh$ as representing the threshold defining the outlying eigenvalues. We will later specialize to $\outlyingthresh = \eps n$ and $\outlyingthresh = \eps \norm{A}{F}$.  Our first step is to show that Assumption \ref{assump:subspace_embedding} is realized by our sampling schemes.  Then we treat this property as a black-box for the remainder of the argument, and show that it implies an additive $O(\outlyingthresh)$ spectral approximation guarantee.

 \subsection{Realizing Assumption \ref{assump:subspace_embedding}}

 As a consequence of the incoherence bound, and leverage score sampling guarantee, we show that a version of Assumption~\ref{assump:subspace_embedding} is achieved both for uniform sampling and row sampling.  For uniform sampling we will assume that $A$ has bounded entries and set $\outlyingthresh = \eps n.$  For row-norm sampling, we will make no assumptions on $A$ and set $\outlyingthresh = \eps\norm{A}{F}$.  In both cases, we will need our sample to have size roughly $\frac{1}{\eps^2}.$
 
\begin{lemma}
    \label{lem:subspace_embedding_from_uniform_sampling}
     Let $A\in\R^{n\times n}$ have all entries bounded in magnitude by $1$.  Let $S$ be a sampling matrix which samples each row of $A$ with probability $s/n$.  
     Then for $s\geq \frac{c}{\eps^2}\left(\log\log\frac{1}{\eps} + \log\frac{1}{\eps^2 \delta}\right)$, $S$ satisfies Assumption \ref{assump:subspace_embedding} with $\outlyingthresh = \eps n.$
\end{lemma}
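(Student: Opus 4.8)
The plan is to verify Assumption~\ref{assump:subspace_embedding} by combining the incoherence bound (Lemma~\ref{lem:incoherence_bound}) with the leverage score sampling theorem (Theorem~\ref{thm:lev_score_sampling}), and then to handle the continuum of thresholds $\lambda \geq \eps n$ by a union bound over a discrete net of dyadic scales. First, fix a single threshold $\lambda \geq \eps n$ and let $V_{\geq \lambda}$ be the orthonormal matrix whose columns are the eigenvectors of $A$ with eigenvalue at least $\lambda$ in magnitude. By Lemma~\ref{lem:incoherence_bound} (applied with $\alpha = \lambda$), the leverage scores of $V_{\geq\lambda}$ — which are exactly its squared row norms — satisfy $\tau_i \leq \norm{A_i}{2}^2/\lambda^2 \leq n/\lambda^2$, using $\norm{A_i}{2}^2 \leq n$ since $A$ has bounded entries. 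The number of columns $d$ of $V_{\geq\lambda}$ is at most $\norm{A}{F}^2/\lambda^2 \leq n^2/\lambda^2$, again by bounded entries. Now apply Theorem~\ref{thm:lev_score_sampling} with distortion parameter $\eta = \min(\eps n/\lambda, 1/10)$: the required sampling probability for row $i$ is $\min\!\left(\frac{2}{\eta^2}\tau_i \log\frac{d}{\delta'},\, 1\right) \leq \min\!\left(\frac{2}{\eta^2}\cdot\frac{n}{\lambda^2}\log\frac{n^2/\lambda^2}{\delta'},\, 1\right)$. When $\lambda \leq 10\eps n$ this is $\frac{2}{\eps^2}\cdot\frac{\lambda^2}{n^2}\cdot\frac{n}{\lambda^2}\log(\cdot) = \frac{2}{\eps^2 n}\log(\cdot)$; when $\lambda > 10\eps n$ we have $\eta = 1/10$ and the bound is $\frac{200 n}{\lambda^2}\log(\cdot) \leq \frac{200}{100\eps^2 n}\log(\cdot)$. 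In both regimes the per-row probability is at most $\frac{c}{\eps^2 n}\log\frac{n}{\eps\delta'}$, so a uniform sampling probability of $s/n$ with $s \geq \frac{c}{\eps^2}\log\frac{n}{\eps\delta'}$ suffices for that fixed $\lambda$.

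The remaining issue is that Assumption~\ref{assump:subspace_embedding} quantifies over \emph{all} $\lambda \geq \eps n$ simultaneously, and the naive union bound over a net would introduce a $\log n$ into the sample size — exactly the dependence we are trying to avoid. The key observation is that the family of subspaces $\{\spn(V_{\geq\lambda})\}_{\lambda \geq \eps n}$ is \emph{nested} and only changes at the (at most $1/\eps^2$) distinct eigenvalue magnitudes of $A$ lying in $[\eps n, n]$. More usefully, it suffices to check the embedding condition on the $O(\log\frac{1}{\eps})$ dyadic thresholds $\lambda_j = 2^j \eps n$ for $j = 0, 1, \ldots, \lceil\log_2\frac1\eps\rceil$: if $S$ is a $\min(\eps n/\lambda_j, 1/10)$-distortion embedding on $\spn(V_{\geq \lambda_j})$ for each such $j$, then for an arbitrary $\lambda \in [\lambda_j, \lambda_{j+1})$ we have $\spn(V_{\geq\lambda}) \subseteq \spn(V_{\geq\lambda_j})$ so $S$ is automatically a $\min(\eps n/\lambda_j, 1/10)$-distortion embedding on it, and since $\eps n/\lambda_j \leq 2\eps n/\lambda$ this gives distortion at most $\min(2\eps n/\lambda, 1/5)$ — which, after adjusting the constant $c$ (running the argument with $\eps/2$ in place of $\eps$), yields the claimed $\min(\eps n/\lambda, 1/10)$ bound. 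So I apply the fixed-$\lambda$ argument above with $\delta' = \delta/\lceil\log_2\frac1\eps\rceil$ at each of the $O(\log\frac1\eps)$ dyadic scales and union bound.

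This brings the failure probability to $\delta$ with $s \gtrsim \frac{1}{\eps^2}\log\frac{n\log(1/\eps)}{\eps\delta}$, which still carries a $\log n$. To remove it, note that at scale $\lambda_j$ the subspace $V_{\geq\lambda_j}$ has dimension $d_j \leq n^2/\lambda_j^2 = 1/(4^j\eps^2) \leq 1/\eps^2$, so the logarithmic factor inside Theorem~\ref{thm:lev_score_sampling} is $\log\frac{d_j}{\delta'} \leq \log\frac{1}{\eps^2\delta'} = \log\frac{\lceil\log_2(1/\eps)\rceil}{\eps^2\delta}$, with \emph{no} $\log n$ — the $\log n$ from the naive crude bound $d_j \le n^2/\lambda_j^2$ was illusory because $\lambda_j \geq \eps n$ forces $d_j \leq 1/\eps^2$ independently of $n$. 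Absorbing $\log\lceil\log_2(1/\eps)\rceil$ into $\log\log\frac1\eps$ gives the stated requirement $s \geq \frac{c}{\eps^2}\bigl(\log\log\frac1\eps + \log\frac{1}{\eps^2\delta}\bigr)$. I expect the main obstacle to be the bookkeeping in the dyadic-net step — specifically, making sure the constant-factor slack in the distortion ($2\eps n/\lambda$ versus $\eps n/\lambda$) is cleanly absorbed by rescaling $\eps$, and confirming that the nestedness of $\spn(V_{\geq\lambda})$ is used correctly (an embedding on a space restricts to an embedding on any subspace with the same or better distortion, which is immediate from the definition).
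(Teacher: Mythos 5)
Your proposal is correct and follows essentially the same route as the paper's proof: incoherence (Lemma~\ref{lem:incoherence_bound}) bounds the leverage scores of $V_{\geq\lambda}$ by $n/\lambda^2$, Theorem~\ref{thm:lev_score_sampling} with distortion $\approx \eps n/\lambda$ gives the fixed-$\lambda$ embedding, the dimension bound $d\leq 1/\eps^2$ (from $\norm{A}{F}^2\leq n^2$) kills the $\log n$, and a union bound over the $O(\log\frac1\eps)$ dyadic thresholds with $\delta'=\delta/\lfloor\log_2\frac1\eps\rfloor$, using nestedness of the spans and a factor-$2$ slack in the distortion, finishes the argument. The only blemish is that you label the two regimes of $\eta=\min(\eps n/\lambda,1/10)$ backwards (it is $\lambda\leq 10\eps n$ that forces $\eta=1/10$), but both computations still give a per-row probability of $O\bigl(\frac{1}{\eps^2 n}\log\frac{d}{\delta'}\bigr)$, so the conclusion is unaffected.
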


\begin{proof}
Consider a fixed value of $\lambda$, and let $V_{\geq\lambda} \in \R^{n\times d}$ be an orthonormal matrix whose columns are the eigenvectors of $A$ with associated eigenvalue at least $\lambda.$

By Lemma~\ref{lem:incoherence_bound}, 
\[
\norm{V_{\geq\lambda,i}}{}^2 \leq \frac{1}{\lambda^2}\norm{A_i}{}^2 \leq \frac{n}{\lambda^2},
\]
since $A$ has bounded entries.  Since $V_{\geq\lambda,i}$ has orthonormal columns, this says that all leverage scores of $V_{\geq\lambda}$ are at least $n/\lambda^2.$

Let $p = s/n,$ and set $\eps = \frac{1}{2}(\outlyingthresh/\lambda)$ in Theorem~\ref{thm:lev_score_sampling}.  This shows that if 
\[p\geq 8 \frac{\lambda^2}{\outlyingthresh^2} \left(\frac{n}{\lambda^2}\right)\log\frac{d}{\delta}
= 8\frac{n}{\outlyingthresh^2} \log\frac{d}{\delta}
= \frac{8}{\eps^2 n} \log \frac{d}{\delta},
\]
then $S$ is a a $\frac{1}{2}(\outlyingthresh/\lambda)$ distortion subspace embedding for $V_{\geq\lambda}$, with probability at least $1-\delta.$  Note that $\norm{A}{F}^2 \leq n^2$, so $A$ has at most $1/\eps^2$ eigenvalues that are at least $\eps n.$  This implies that $d\leq \frac{1}{\eps^2}$ so it suffices to have
$
p \geq \frac{8}{\eps^2 n} \log\frac{1}{\eps^2 \delta},
$
and hence to choose $s \geq \frac{8}{\eps^2}\log\frac{1}{\eps^2\delta}.$

In order for Assumption~\ref{assump:subspace_embedding} to hold, it suffices to have $S$ be a $1/10$ distortion embedding for $V$, a $\frac{1}{2}$ distortion embedding for $V_{\geq \eps n}$, a $\frac{1}{4}$ distortion embedding for $V_{\geq 2\eps n}$, and in general a $\frac{1}{2^{r+1}}$ distortion embedding for $V_{\geq 2^r \eps n}$ for $r = 0, \ldots, \lfloor \log_2\frac{1}{\eps}\rfloor.$  By the same calculations as above, we can achieve the $1/10$ distortion embedding for $V$ using a sampling probability of $p \geq \frac{c}{\eps^2 n}\log\frac{1}{\delta}.$
Then replacing $\delta$ by $\delta/\lfloor \log_2\frac{1}{\eps}\rfloor$ and taking a union bound over all $r$ yields the lemma.

\end{proof}

The following lemma gives the analogous result for squared row-norm sampling.  For technical reasons, we give a version that will allow some entries of $A$ to be zeroed out prior to sampling, although this does not change much.
\begin{lemma}
    \label{lem:subspace_embedding_from_row_norm_sampling}
     Let $A\in\R^{n\times n}$ be arbitrary.  Let $S$ be a sampling matrix which samples each row index $i$ with probability $p_i = \min\left(s\frac{\norm{A_i}{}^2}{\norm{A}{F}^2}, 1\right)$.  
     Let $A'$ be the matrix $A$, but possibly with some of its entries replaced with $0.$
     Then for $s\geq \frac{c}{\eps^2}\left(\log\log\frac{1}{\eps} + \log\frac{1}{\eps^2 \delta}\right)$, $S$ satisfies Assumption \ref{assump:subspace_embedding} for $A'$ with $\outlyingthresh = \eps \norm{A}{F}.$
\end{lemma}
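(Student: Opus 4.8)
The plan is to follow the template of Lemma~\ref{lem:subspace_embedding_from_uniform_sampling} almost verbatim, with the only change being that the bound on the leverage scores of the outlying eigenvectors comes from combining the incoherence bound with the row-norm sampling probabilities, rather than from the uniform bound $\norm{A_i}{}^2 \le n$. The key point that makes row-norm sampling behave like leverage-score sampling is exactly the incoherence bound (Lemma~\ref{lem:incoherence_bound}): applied to $A'$ with threshold $\lambda$, it gives $\norm{(V'_{\geq \lambda})_i}{}^2 \le \norm{A'_i}{}^2/\lambda^2 \le \norm{A_i}{}^2/\lambda^2$, since zeroing out entries only decreases row norms. So the leverage score $\tau_i$ of row $i$ of $V'_{\geq\lambda}$ satisfies $\tau_i \le \norm{A_i}{}^2/\lambda^2$, which is precisely (up to the factor $s/\norm{A}{F}^2$) the quantity controlling $p_i$.

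First I would fix $\lambda \ge \outlyingthresh = \eps\norm{A}{F}$, let $V'_{\geq\lambda} \in \R^{n\times d}$ be an orthonormal eigenbasis for the eigenvectors of $A'$ with eigenvalue at least $\lambda$ in magnitude, and bound $\tau_i \le \norm{A_i}{}^2/\lambda^2$ as above. Then I would check the hypothesis of Theorem~\ref{thm:lev_score_sampling} with distortion parameter $\frac{1}{2}(\outlyingthresh/\lambda)$: we need $p_i \ge \min\!\left(\frac{8\lambda^2}{\outlyingthresh^2}\,\tau_i \log\frac{d}{\delta},\,1\right)$. Using $\tau_i \le \norm{A_i}{}^2/\lambda^2$ and $\outlyingthresh^2 = \eps^2\norm{A}{F}^2$, the right-hand side is at most $\min\!\left(\frac{8}{\eps^2}\frac{\norm{A_i}{}^2}{\norm{A}{F}^2}\log\frac{d}{\delta},\,1\right)$, which is exactly $\min\!\left(s\frac{\norm{A_i}{}^2}{\norm{A}{F}^2},\,1\right) = p_i$ once $s \ge \frac{8}{\eps^2}\log\frac{d}{\delta}$ (here I use that $\min(a,1)\le\min(b,1)$ whenever $a\le b$, to handle the rows with $p_i = 1$). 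Since $\norm{A'}{F}^2 \le \norm{A}{F}^2$ and every outlying eigenvalue has magnitude $\ge \eps\norm{A}{F}$, there are at most $1/\eps^2$ of them, so $d \le 1/\eps^2$ and it suffices to take $s \ge \frac{8}{\eps^2}\log\frac{1}{\eps^2\delta}$.

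To upgrade a single-scale subspace embedding into Assumption~\ref{assump:subspace_embedding} for all $\lambda$, I would use the same geometric-net argument as in Lemma~\ref{lem:subspace_embedding_from_uniform_sampling}: it suffices to be a $1/10$ distortion embedding on all of $V'$ (the full eigenbasis of $A'$, a space of dimension at most $\norm{A'}{F}^2/\outlyingthresh^2 \le 1/\eps^2$ if one restricts to eigenvalues $\ge\outlyingthresh$, together with the trivial observation that the condition for $\lambda < \outlyingthresh$ is vacuous since then $\min(\outlyingthresh/\lambda,1/10) = 1/10$), and a $2^{-(r+1)}$ distortion embedding on $V'_{\geq 2^r\outlyingthresh}$ for $r = 0,\dots,\lfloor\log_2\frac{1}{\eps}\rfloor$; any $\lambda$ lies between two consecutive powers $2^r\outlyingthresh$ and $2^{r+1}\outlyingthresh$, and a $2^{-(r+1)}$ distortion embedding on the larger space $V'_{\geq 2^r\outlyingthresh} \supseteq V'_{\geq\lambda}$ is in particular a $\le \outlyingthresh/\lambda$ distortion embedding on $V'_{\geq\lambda}$. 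Replacing $\delta$ with $\delta/\lceil\log_2\frac{1}{\eps}\rceil$ and union bounding over the $O(\log\frac{1}{\eps})$ scales produces the stated sample size $s \ge \frac{c}{\eps^2}\!\left(\log\log\frac{1}{\eps} + \log\frac{1}{\eps^2\delta}\right)$.

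I do not anticipate a serious obstacle here — this lemma is genuinely the row-norm mirror image of the uniform case, and the incoherence bound is robust to zeroing out entries. The one point that deserves a sentence of care is the reduction from "$p_i$ proportional to row norm, possibly clipped at $1$" to the clean hypothesis of Theorem~\ref{thm:lev_score_sampling}, i.e. verifying the $\min(\cdot,1)$ comparison for the rows where the clipping is active; but this is immediate from monotonicity of $t\mapsto\min(t,1)$. A second minor subtlety is that the eigenvectors and Frobenius norm that enter the net argument are those of $A'$, not $A$, while the sampling probabilities and the threshold $\outlyingthresh$ are defined via $A$; these are reconciled by the two inequalities $\norm{A'_i}{}\le\norm{A_i}{}$ and $\norm{A'}{F}\le\norm{A}{F}$, which I would state explicitly at the start of the proof.
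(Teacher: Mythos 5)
Your proposal is correct and follows essentially the same route as the paper's proof: the incoherence bound (Lemma~\ref{lem:incoherence_bound}) applied to $A'$ with $\norm{A'_i}{}\leq\norm{A_i}{}$, Theorem~\ref{thm:lev_score_sampling} at distortion $\frac{1}{2}(\outlyingthresh/\lambda)$, the observation that $d\leq 1/\eps^2$, and a union bound over the dyadic scales $V'_{\geq 2^r\outlyingthresh}$ together with a $1/10$-distortion embedding at the base scale. The only differences are expository: you make explicit the $\min(\cdot,1)$ clipping comparison and the inequalities $\norm{A'_i}{}\leq\norm{A_i}{}$, $\norm{A'}{F}\leq\norm{A}{F}$, which the paper uses implicitly.
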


\begin{proof}
Consider a fixed value of $\lambda$, and let $V'_{\geq\lambda} \in \R^{n\times d}$ be an orthonormal matrix whose columns are the eigenvectors of $A'$ with associated eigenvalue at least $\lambda.$

By Lemma~\ref{lem:incoherence_bound}, 
\[
\norm{V'_{\geq\lambda,i}}{}^2 \leq \frac{1}{\lambda^2}\norm{A'_i}{}^2
\leq \frac{1}{\lambda^2}\norm{A_i}{}^2.
\]

Set $\eps = \frac{1}{2}(\outlyingthresh/\lambda) = \frac{1}{2}\frac{\eps\norm{A}{F}}{\lambda}$ in Theorem~\ref{thm:lev_score_sampling}.  This shows that if 
\[p_i 
\geq \min\left(8\frac{\lambda^2}{\outlyingthresh^2}\left(\frac{\norm{A_i}{}^2}{\lambda^2}\right)\log\frac{d}{\delta}, 1\right)
= \min\left(\frac{8\norm{A_i}{}^2}{\eps^2 \norm{A}{F}^2}\log\frac{d}{\delta}, 1\right),
\]
then $S$ is a a $\frac{1}{2}(\outlyingthresh/\lambda)$ distortion subspace embedding for $V'_{\geq\lambda}$, with probability at least $1-\delta.$  Note that $A'$ has at most $1/\eps^2$ eigenvalues that are at least $\eps \norm{A}{F}.$  This implies that $d\leq \frac{1}{\eps^2}$ so it suffices to have
\[
p_i \geq \min\left(\frac{8\norm{A_i}{}^2}{\eps^2 \norm{A}{F}^2} \log\frac{1}{\eps^2 \delta}, 1\right),
\]
and hence to choose $s \geq \frac{8}{\eps^2}\log\frac{1}{\eps^2\delta}.$

In order for Assumption~\ref{assump:subspace_embedding} to hold, it suffices to have $S$ be a $1/10$ distortion embedding for $V'$, a $\frac{1}{2}$ distortion embedding for $V_{\geq \eps n}$, a $\frac{1}{4}$ distortion embedding for $V'_{\geq 2\eps n}$, and in general a $\frac{1}{2^{r+1}}$ distortion embedding for $V'_{\geq 2^r \eps n}$ for $r = 0, \ldots, \lfloor \log_2\frac{1}{\eps}\rfloor.$  By the same calculations as above, we can achieve the $1/10$ distortion embedding for $V'$ using a sampling probability of $p \geq \frac{c}{\eps^2 n}\log\frac{1}{\delta}.$
Then replacing $\delta$ by $\delta/\lfloor \log_2\frac{1}{\eps}\rfloor$ and taking a union bound over all $r$ yields the lemma.
\end{proof}

\section{Outlying Eigenvalue Bounds}

Using Assumption~\ref{assump:subspace_embedding} we show how to obtain both upper and lower bound on the outlying eigenvalues of $A.$ 

\subsection{Lower Bounds}

\begin{proposition}
\label{prop:principle_angle_fact}
Let $W_1$ and $W_2$ be subspaces of $\R^n$, and let $P$ be the orthogonal projection onto $W_2.$  Let $x$ be a nonzero vector in $W_1.$ Then
\[
\frac{\norm{Px}{}}{\norm{x}{}} \leq \max_{w_1\in W_1, w_2 \in W_2} \frac{\inner{w_1}{w_2}}{\norm{w_1}{}\norm{w_2}{}}. 
\]
\end{proposition}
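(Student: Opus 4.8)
The plan is to exhibit a single pair $(w_1,w_2)\in W_1\times W_2$ for which the ratio $\inner{w_1}{w_2}/(\norm{w_1}{}\,\norm{w_2}{})$ already equals the left-hand side; since the right-hand side is a maximum over all such pairs, this immediately yields the inequality. The natural witness is $w_1=x$ and $w_2=Px$.

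First I would dispose of the degenerate case $Px=0$ (which in particular covers $W_2=\{0\}$): then the left-hand side is $0$, while the right-hand side is nonnegative, since the expression $\inner{w_1}{w_2}/(\norm{w_1}{}\,\norm{w_2}{})$ is invariant up to sign under $w_1\mapsto -w_1$, so its maximum over any nonempty set is $\geq 0$. So assume $Px\neq 0$; then $x\neq 0$ as well, and $w_1=x\in W_1$, $w_2=Px\in W_2$ are both legitimate nonzero choices.

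The one computation to record is $\inner{x}{Px}=\norm{Px}{}^2$. This holds because $P$ is the \emph{orthogonal} projection onto $W_2$: the residual $x-Px$ is orthogonal to $W_2$, and $Px\in W_2$, so $\inner{x-Px}{Px}=0$, i.e.\ $\inner{x}{Px}=\inner{Px}{Px}=\norm{Px}{}^2$. Substituting,
\[
\frac{\inner{x}{Px}}{\norm{x}{}\,\norm{Px}{}}=\frac{\norm{Px}{}^2}{\norm{x}{}\,\norm{Px}{}}=\frac{\norm{Px}{}}{\norm{x}{}},
\]
and since the right-hand side of the proposition is, by definition, the maximum of $\inner{w_1}{w_2}/(\norm{w_1}{}\,\norm{w_2}{})$ over all nonzero $w_1\in W_1$, $w_2\in W_2$, it is at least this particular value. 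This completes the argument.

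I do not expect any real obstacle here; the only points requiring mild care are the degenerate case $Px=0$ and the observation that we only need a \emph{lower} bound on the maximum (a single witness suffices), not a characterization of the maximizer. One may optionally remark that the right-hand side equals the cosine of the smallest principal angle between $W_1$ and $W_2$, so the statement says a vector of $W_1$ cannot be shrunk by projection onto $W_2$ by more than that factor; this interpretation motivates later use but is not needed in the proof.
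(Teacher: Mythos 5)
Your proof is correct and follows essentially the same route as the paper's: both exhibit the witness pair $(x, Px)$ and use $\inner{x}{Px} = \norm{Px}{}^2$ to show the maximum is at least $\norm{Px}{}/\norm{x}{}$. Your explicit handling of the degenerate case $Px = 0$ is a small tidiness improvement over the paper's one-line chain, which implicitly assumes $Px \neq 0$ in a denominator, but it does not change the argument.
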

\begin{proof}
For $w_1\in W_1$ and $w_2 \in W_2$ we have
\[
\max_{w_1, w_2}\frac{\inner{w_1}{w_2}}{\norm{w_1}{}\norm{w_2}{}}
= \max_{w_1}\frac{\inner{w_1}{Pw_1}}{\norm{w_1}{}\norm{Pw_1}{}}
\geq \frac{\inner{x}{Px}}{\norm{x}{}\norm{Px}{}} = \frac{\norm{Px}{}}{\norm{x}{}}.
\]
\end{proof}

\begin{proposition}
\label{prop:rayleigh_quotient_perturbation}
Let $A$ be symmetric and $x$ and $h$ be vectors.  Suppose that $\norm{x}{} \geq 1-\alpha$ and $\norm{h}{} \leq \alpha$ with $\alpha \leq 1/4.$  Then 
\[
\bigg|\frac{(x+h)^T A (x+h)}{\norm{x+h}{}^2} - \frac{x^T A x}{\norm{x}{}^2} \bigg| 
\leq 8 \norm{A}{} \alpha.
\]
\end{proposition}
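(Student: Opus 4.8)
The plan is to bound the change in the Rayleigh quotient by splitting it into two effects: the change in the numerator's quadratic form $x^TAx \to (x+h)^TA(x+h)$, and the change in the normalizing denominator $\norm{x}{}^2 \to \norm{x+h}{}^2$. I would write
\[
\frac{(x+h)^TA(x+h)}{\norm{x+h}{}^2} - \frac{x^TAx}{\norm{x}{}^2}
= \underbrace{\frac{(x+h)^TA(x+h) - x^TAx}{\norm{x+h}{}^2}}_{(\mathrm{I})}
+ \underbrace{x^TAx\left(\frac{1}{\norm{x+h}{}^2} - \frac{1}{\norm{x}{}^2}\right)}_{(\mathrm{II})},
\]
and bound the two pieces separately.

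For piece $(\mathrm{I})$, expand the numerator as $(x+h)^TA(x+h) - x^TAx = 2x^TAh + h^TAh$, so its magnitude is at most $2\norm{A}{}\norm{x}{}\norm{h}{} + \norm{A}{}\norm{h}{}^2 \le 2\norm{A}{}\norm{h}{} + \norm{A}{}\norm{h}{}^2$ using $\norm{x}{}\le 1$ (which follows since $x+h$ is morally a unit vector perturbation — actually I should be careful here: the hypothesis only gives $\norm{x}{}\ge 1-\alpha$, not an upper bound. I would instead observe that we may assume $\norm{x+h}{}=1$ after noting the Rayleigh quotient is scale-invariant, OR just carry $\norm{x}{}$ through; a cleaner route is to use $\norm{x}{} \le \norm{x+h}{} + \norm{h}{}$. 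Let me use the normalization $\norm{x+h}{}=1$, which is WLOG since multiplying $x$ and $h$ by a common positive scalar leaves both Rayleigh quotients unchanged and only rescales $\alpha$ — hmm, it rescales $\norm{h}{}$ and $\norm{x}{}$ together, so the ratio $\norm{h}{}/\norm{x}{}$ is preserved but the hypotheses $\norm{x}{}\ge 1-\alpha$, $\norm{h}{}\le\alpha$ are not obviously preserved. Safest: just bound $\norm{x}{}\le \norm{x+h}{}+\norm{h}{} \le \norm{x+h}{} + \alpha$, and separately note $\norm{x+h}{} \ge \norm{x}{}-\norm{h}{} \ge 1-2\alpha \ge 1/2$.) With $\norm{x+h}{}\ge 1/2$, piece $(\mathrm{I})$ is at most $\frac{2\norm{A}{}\norm{x}{}\alpha + \norm{A}{}\alpha^2}{1/4} \le c\norm{A}{}\alpha$ for a small constant, using $\norm{x}{}$ bounded by a constant and $\alpha\le 1/4$.

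For piece $(\mathrm{II})$, I would use $|x^TAx| \le \norm{A}{}\norm{x}{}^2 \le \norm{A}{}\,\norm{x+h}{}^2\cdot\frac{\norm{x}{}^2}{\norm{x+h}{}^2}$, and bound the difference of reciprocals by
\[
\left|\frac{1}{\norm{x+h}{}^2} - \frac{1}{\norm{x}{}^2}\right|
= \frac{\big|\norm{x}{}^2 - \norm{x+h}{}^2\big|}{\norm{x}{}^2\norm{x+h}{}^2}
\le \frac{2\norm{x}{}\norm{h}{} + \norm{h}{}^2}{\norm{x}{}^2\norm{x+h}{}^2},
\]
so that $(\mathrm{II})$ is at most $\norm{A}{}\cdot\frac{2\norm{x}{}\norm{h}{} + \norm{h}{}^2}{\norm{x+h}{}^2} \le c\norm{A}{}\alpha$, again using $\norm{x+h}{}\ge 1/2$ and $\norm{x}{},\norm{h}{}$ bounded by constants. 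Adding the two bounds and tracking the constants carefully against the claimed factor of $8$ finishes the proof. The main obstacle is purely bookkeeping: making sure the constant works out to at most $8$, which requires using the $\alpha \le 1/4$ hypothesis sharply (for instance to get $\norm{x+h}{}\ge 1/2$ and to absorb the quadratic-in-$\alpha$ terms into the linear ones), rather than any conceptual difficulty.
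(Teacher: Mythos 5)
Your decomposition into the two pieces $(\mathrm{I})$ and $(\mathrm{II})$ is a legitimate elementary route, and your self-correction about $\norm{x}{}$ having no a priori upper bound is handled correctly if you keep the ratio $\norm{x}{}/\norm{x+h}{}^2$ together (it is maximized at $\norm{x}{}=1-\alpha$, $\norm{h}{}=\alpha$). The genuine gap is the final claim that "tracking the constants carefully" yields the factor $8$: it does not, with the bounds you use. Bounding $(\mathrm{I})$ by $\norm{A}{}\frac{2\norm{x}{}\norm{h}{}+\norm{h}{}^2}{\norm{x+h}{}^2}$ and $(\mathrm{II})$ by $\norm{A}{}\frac{2\norm{x}{}\norm{h}{}+\norm{h}{}^2}{\norm{x+h}{}^2}$ and adding gives at worst $\frac{2(2-\alpha)}{(1-2\alpha)^2}\norm{A}{}\alpha$, which equals $14\norm{A}{}\alpha$ at $\alpha=1/4$ and exceeds $8\norm{A}{}\alpha$ for all $\alpha\gtrsim 0.16$. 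The two pieces can in fact cancel (e.g.\ $A=I$ makes the true difference zero while both of your upper bounds are tight), so no purely bookkeeping refinement of this triangle-inequality accounting reaches $8$; you must exploit that cancellation.

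Two ways to close the gap. The paper's proof is different in spirit: it bounds the gradient of the Rayleigh quotient, $\norm{\nabla F(v)}{}\leq \frac{2}{\norm{v}{}}\norm{A-F(v)I}{}\leq \frac{4\norm{A}{}}{\norm{v}{}}$, notes $\norm{v}{}\geq \norm{x}{}-\norm{h}{}\geq 1-2\alpha\geq 1/2$ on the segment from $x$ to $x+h$, and concludes $F$ is $8\norm{A}{}$-Lipschitz there, so the difference is at most $8\norm{A}{}\norm{h}{}\leq 8\norm{A}{}\alpha$. Alternatively, you can salvage your algebraic route by first replacing $A$ with $B=A-F(x)I$ where $F(x)=\frac{x^TAx}{\norm{x}{}^2}$ (this leaves the difference of Rayleigh quotients unchanged and makes your piece $(\mathrm{II})$ vanish, since $x^TBx=0$), and then using the sharpened bound $\norm{Bx}{}^2=\norm{Ax}{}^2-F(x)^2\norm{x}{}^2\leq \norm{A}{}^2\norm{x}{}^2$ together with $\norm{B}{}\leq 2\norm{A}{}$; this gives a bound of $\frac{2\norm{A}{}\norm{h}{}(\norm{x}{}+\norm{h}{})}{(\norm{x}{}-\norm{h}{})^2}\leq \frac{2\norm{A}{}\alpha}{(1-2\alpha)^2}\leq 8\norm{A}{}\alpha$ for $\alpha\leq 1/4$. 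As written, though, your proposal does not prove the stated constant.
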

\begin{proof}
Let $F(x) = \frac{x^T A x}{x^T x}$ be the Rayleigh quotient.  Recall that its gradient is given by
\[
\nabla F(v) = \frac{2}{\norm{v}{}^2}\left(A - \frac{v^T A v}{\norm{v}{}^2}I\right) v.
\]
Then
\[
\norm{\nabla F(v)}{} 
\leq \frac{2}{\norm{v}{}^2}\norm{A - \frac{v^T A v}{\norm{v}{}^2} I}{}\norm{v}{}
\leq \frac{2}{\norm{v}{}}\left(\norm{A}{} + \norm{\frac{v^T A v}{\norm{v}{}^2}I}{}\right)
\leq 4\frac{\norm{A}{}}{\norm{v}{}}.
\]
This together with the hypotheses implies that $F$ is $(8\norm{A}{})$-Lipschitz on a $\norm{h}{}$-neighborhood of $x$, and the claim follows.
\end{proof}

\begin{lemma}
\label{lem:SPiv_lower_bound}
Let $\Pi$ have orthonormal columns and suppose that $S$ is an $\alpha$-distortion subspace embedding for $\Pi.$ Then for all nonzero $v,$
\[
\frac{(S\Pi v)^T S\Pi\Pi^T S^T (S\Pi v)}{\norm{S\Pi v}{}^2}
\geq 1 - \alpha.
\]
\end{lemma}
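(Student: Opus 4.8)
The plan is to reduce the claim to a Rayleigh-quotient bound for the positive-semidefinite matrix $M := (S\Pi)^T(S\Pi)$. First I would rewrite both quadratic forms in terms of $M$. Setting $z = S\Pi v$, the numerator is $z^T (S\Pi)(S\Pi)^T z = \norm{(S\Pi)^T z}{}^2 = \norm{(S\Pi)^T(S\Pi)v}{}^2 = \norm{Mv}{}^2$, while the denominator is $\norm{z}{}^2 = v^T(S\Pi)^T(S\Pi)v = v^T M v$. So the quantity to be lower bounded is exactly $\norm{Mv}{}^2 / (v^T M v)$.

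Next I would record the spectral information about $M$. Since $\Pi$ has orthonormal columns, $\norm{\Pi u}{} = \norm{u}{}$ for every $u$, so the $\alpha$-distortion subspace embedding property applied to $\Pi$ gives $(1-\alpha)\norm{u}{}^2 \le \norm{S\Pi u}{}^2 \le (1+\alpha)\norm{u}{}^2$ for all $u$, i.e. $(1-\alpha)I \preceq M \preceq (1+\alpha)I$. In particular $M$ is symmetric PSD with all eigenvalues in $[1-\alpha,1+\alpha]$, and since $\alpha<1$ it is invertible; this also ensures $v^T M v > 0$ for $v\neq 0$, so the ratio is well-defined.

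Finally I would substitute $u := M^{1/2}v$ (legitimate as $M\succeq 0$), which is nonzero since $M$ is invertible and $v\neq 0$. Then $\norm{Mv}{}^2 = \norm{M^{1/2}u}{}^2 = u^T M u$ and $v^T M v = \norm{M^{1/2}v}{}^2 = \norm{u}{}^2$, so the ratio equals $u^T M u / \norm{u}{}^2 \ge \lambda_{\min}(M) \ge 1-\alpha$, which is the claim. (Equivalently, diagonalize $M$, expand $v$ in its eigenbasis $v = \sum_i c_i w_i$ with $Mw_i = \mu_i w_i$, and observe that $\norm{Mv}{}^2/(v^T M v) = \sum_i \mu_i^2 c_i^2 / \sum_i \mu_i c_i^2$ is a weighted average of the $\mu_i \ge 1-\alpha$ with nonnegative weights $\mu_i c_i^2$.)

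There is essentially no obstacle here; the only points requiring a little care are the algebraic identity $\norm{Mv}{}^2 = u^T M u$ for $u = M^{1/2}v$, and checking that the denominator is strictly positive so that the $\lambda_{\min}$ bound genuinely applies.
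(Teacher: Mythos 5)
Your proof is correct and is essentially the paper's argument: both reduce the quantity to $\norm{Mv}{}^2/(v^T M v)$ with $M=(S\Pi)^T S\Pi$ and then invoke only the lower side of the embedding guarantee, $\norm{S\Pi u}{}^2\geq(1-\alpha)\norm{u}{}^2$. The only cosmetic difference is the final step: the paper gets the bound via a one-line Cauchy--Schwarz inequality, $\norm{Mv}{}^2\geq \inner{v}{Mv}^2/\norm{v}{}^2$, whereas you pass to $u=M^{1/2}v$ and use $\lambda_{\min}(M)\geq 1-\alpha$, which additionally (but harmlessly) asks for $\alpha<1$ so that $M$ is invertible.
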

\begin{proof}
The numerator above is equal to $\norm{\Pi^T S^T S \Pi v}{}^2$.  By Cauchy-Schwarz,
\[
\norm{\Pi^T S^T S \Pi v}{}^2
\geq \frac{\inner{v}{\Pi^T S^T S \Pi v}^2}{\norm{v}{}^2}
= \frac{\norm{S\Pi v}{}^4}{\norm{v}{}^2}.
\]
So the fraction in the lemma is at least
$
\frac{\norm{S\Pi v}{}^2}{\norm{v}{}^2}.
$
Since $S$ is an $\alpha$ distortion subspace embedding for $\Pi,$ we have
\[
\norm{S\Pi v}{}^2 \geq (1 - \alpha)\norm{\Pi v}{}^2
= (1-\alpha)\norm{v}{}^2,
\]
and the lemma follows.

\end{proof}

\begin{lemma}
\label{lem:projecting_doesnt_hurt}
Let $U$ and $\Pi$ have mutually orthonormal columns and let $M$ be symmetric.  Suppose that $S$ is an $\alpha$-distortion embedding for $[U|\Pi],$ where $\alpha\leq 1/7.$ Let $P$ be the orthogonal projection onto the column space of $SU$ and let $P^{\perp} = I - P.$  Then for all vectors $x$, $P^{\perp} S\Pi x$ is nonzero and
\[
\bigg|
\frac{(P^{\perp}S\Pi x)^T M (P^{\perp}S\Pi x)}{\norm{P^{\perp}S\Pi x}{}^2}
- 
\frac{(S\Pi x)^T M (S\Pi x)}{\norm{S\Pi x}{}^2}
\bigg|
\leq 10\alpha\norm{M}{}.
\]
\end{lemma}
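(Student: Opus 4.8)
The plan is to use the Rayleigh-quotient perturbation bound (Proposition~\ref{prop:rayleigh_quotient_perturbation}) applied with $x = S\Pi x$ and $h = -PS\Pi x$, so that $x + h = P^\perp S\Pi x$. To invoke that proposition we need two size estimates, both measured in units where $\norm{S\Pi x}{} = 1$ (we may assume $\norm{\Pi x}{} = 1$ by scaling, so $\norm{S\Pi x}{}^2 \in [1-\alpha, 1+\alpha]$, which is a bounded quantity we can normalize away at the end). First, $\norm{h}{} = \norm{PS\Pi x}{}$ must be small; second, $\norm{x+h}{} = \norm{P^\perp S\Pi x}{}$ must be bounded below (in particular nonzero). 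Both follow from controlling the angle between $\mathrm{Im}(SU)$ and $S\Pi x$, which is where the subspace embedding hypothesis on the \emph{joint} matrix $[U|\Pi]$ enters.

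First I would bound $\norm{PS\Pi x}{}$. Since $P$ projects onto $\mathrm{Im}(SU)$, we have $\norm{PS\Pi x}{} \le \max_{w_1 \in \mathrm{Im}(SU),\, w_2 \in \mathrm{Im}(S\Pi)} \frac{\inner{w_1}{w_2}}{\norm{w_1}{}\norm{w_2}{}} \cdot \norm{S\Pi x}{}$ by Proposition~\ref{prop:principle_angle_fact}. Writing $w_1 = SUy$ and $w_2 = S\Pi z$, the inner product $\inner{SUy}{S\Pi z}$ is close to $\inner{Uy}{\Pi z} = 0$ (the columns of $U$ and $\Pi$ are mutually orthonormal): by Lemma~\ref{lem:subspace_emb_to_apx_matrix_prod} applied to the subspace embedding for $[U|\Pi]$, we get $\abs{\inner{SUy}{S\Pi z}} = \abs{y^T U^T S^T S \Pi z} \le \alpha \norm{Uy}{}\norm{\Pi z}{} = \alpha\norm{y}{}\norm{z}{}$. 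Meanwhile $\norm{SUy}{} \ge \sqrt{1-\alpha}\,\norm{y}{}$ and $\norm{S\Pi z}{} \ge \sqrt{1-\alpha}\,\norm{z}{}$ by the embedding property. Hence the cosine of the angle is at most $\alpha/(1-\alpha)$, so $\norm{PS\Pi x}{} \le \frac{\alpha}{1-\alpha}\norm{S\Pi x}{}$, and therefore $\norm{P^\perp S\Pi x}{}^2 = \norm{S\Pi x}{}^2 - \norm{PS\Pi x}{}^2 \ge \left(1 - \frac{\alpha^2}{(1-\alpha)^2}\right)\norm{S\Pi x}{}^2$, which is positive for $\alpha \le 1/7$; in particular $P^\perp S\Pi x \ne 0$.

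Now set $v = S\Pi x$ and normalize so that $\norm{\Pi x}{} = 1$; then $\norm{v}{}^2 \in [1-\alpha, 1+\alpha]$. Apply Proposition~\ref{prop:rayleigh_quotient_perturbation} to the rescaled pair $\tilde x = v/\norm{v}{}$ (unit norm, so the hypothesis $\norm{\tilde x}{} \ge 1 - \alpha'$ holds trivially) and $\tilde h = -Pv/\norm{v}{}$, which has $\norm{\tilde h}{} \le \frac{\alpha}{(1-\alpha)}\cdot\frac{\norm{v}{}}{\norm{v}{}} = \frac{\alpha}{1-\alpha} \le 2\alpha$ for $\alpha \le 1/2$; taking $\alpha' = 2\alpha \le 2/7 < 1/4$ is valid. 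Proposition~\ref{prop:rayleigh_quotient_perturbation} then yields
\[
\left| \frac{(P^\perp v)^T M (P^\perp v)}{\norm{P^\perp v}{}^2} - \frac{v^T M v}{\norm{v}{}^2}\right| \le 8 \norm{M}{}\cdot 2\alpha = 16\alpha\norm{M}{},
\]
since the Rayleigh quotient is invariant under the scaling by $\norm{v}{}$ and $P^\perp v / \norm{v}{} = P^\perp \tilde x + $ nothing else — i.e. $\tilde x + \tilde h = P^\perp v/\norm{v}{}$ has the same Rayleigh quotient as $P^\perp v$. This is a factor $16$ rather than the claimed $10$; to get $10$ one tightens the constants, e.g. by using $\alpha \le 1/7 \Rightarrow \frac{\alpha}{1-\alpha} \le \frac{7}{6}\alpha$ and rerunning the Lipschitz estimate in Proposition~\ref{prop:rayleigh_quotient_perturbation} with the sharper bound $\norm{\nabla F}{} \le 4\norm{M}{}$ on the relevant neighborhood, which I would carry out in the writeup. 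The main obstacle is precisely this constant bookkeeping: the conceptual content — angle bound from approximate orthogonality via Lemma~\ref{lem:subspace_emb_to_apx_matrix_prod}, then Rayleigh-quotient Lipschitzness — is straightforward, but threading the constants through the normalization so the final bound is genuinely $10\alpha\norm{M}{}$ requires care about whether one measures $\norm{h}{}$ relative to $\norm{v}{}$ or to $1$.
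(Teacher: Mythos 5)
Your proposal is correct and follows essentially the same route as the paper's proof: bound the principal angle between the image of $SU$ and $S\Pi x$ via Proposition~\ref{prop:principle_angle_fact} together with the approximate-matrix-product consequence (Lemma~\ref{lem:subspace_emb_to_apx_matrix_prod}) of the joint embedding of $[U|\Pi]$, then invoke the Rayleigh-quotient perturbation bound of Proposition~\ref{prop:rayleigh_quotient_perturbation}. The constant bookkeeping you flag does close: with $\alpha\leq 1/7$ your relative bound $\norm{PS\Pi x}{}/\norm{S\Pi x}{}\leq \alpha/(1-\alpha)\leq \tfrac{7}{6}\alpha\leq 1/4$ plugged into Proposition~\ref{prop:rayleigh_quotient_perturbation} already gives $8\cdot\tfrac{7}{6}\alpha=\tfrac{28}{3}\alpha<10\alpha$ (no sharper gradient estimate needed), whereas the paper reaches exactly $10\alpha$ via the slightly looser absolute bound $\norm{PS\Pi x}{}\leq 1.25\alpha$ for unit $x$.
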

\begin{proof}
We first show that $P^{\perp}S\Pi x$ is close to $S\Pi x.$  By Proposition~\ref{prop:principle_angle_fact} we have
\[
\frac{\norm{P S\Pi x}{}}{\norm{S\Pi x}{}} \leq \max_{y,z} \frac{\inner{S\Pi y}{SUz}}{\norm{S\Pi y}{}\norm{SUz}{}}
= \max_{y,z}\frac{y^T \Pi^T S^T S U z}{\norm{S\Pi y}{}\norm{S U z}{}}
\leq \max_{y,z} \frac{\norm{y}{}\norm{z}{}\norm{\Pi^T S^T S U}{}}{\norm{S\Pi y}{}\norm{S U z}{}}.
\]
By our hypothesis that $S$ is a subspace embedding, along with Lemma~\ref{lem:subspace_emb_to_apx_matrix_prod},we bound the operator norm in the numerator by
\[
\norm{\Pi^T S^T S U}{}
= \norm{\Pi^T S^T S U - \Pi^T U}{} \leq \alpha \norm{\Pi}{}\norm{U}{} = \alpha.
\]
For the denominator, 
the subspace embedding property implies that 
\[
(1-\alpha)^{1/2}\norm{y}{}\leq \norm{S\Pi y}{} \leq (1 + \alpha)^{1/2}\norm{y}{}
\]
and similarly for $\norm{SUz}{}$ and $\norm{S\Pi x}{}.$ Plugging into the bound above gives that for all unit vectors $x,$
\[
\norm{P S\Pi x}{}
\leq \frac{\alpha}{(1 - \alpha)^{1/2}(1-\alpha)^{1/2}} \norm{S\Pi x}{}
\leq \frac{\alpha}{1-\alpha} (1+\alpha)^{1/2}
< 1.25\alpha,
\]
for $\alpha \leq 1/7.$
By the triangle inequality,
\[
0 = \norm{PS\Pi x + P^{\perp}S\Pi x - S\Pi x}{}
\geq \norm{P^{\perp}S\Pi x - S\Pi x}{} - \norm{PS\Pi x}{},
\]
and so $\norm{P^{\perp}S\Pi x - S\Pi x}{} \leq 1.25\alpha$ for all unit vectors $x.$

Note that $\norm{S\Pi x}{}\geq (1-\alpha)^{1/2}\norm{x}{} \geq (1-\alpha){}.$
To apply Proposition~\ref{prop:rayleigh_quotient_perturbation} we need $1.25\alpha\leq 1/4$, which is true by hypothesis.

So by Proposition~\ref{prop:rayleigh_quotient_perturbation} we have 
\[
\bigg|
\frac{(P^{\perp}S\Pi x)^T M (P^{\perp}S\Pi x)}{\norm{P^{\perp}S\Pi x}{}^2}
-
\frac{(S\Pi x)^T M (S\Pi x)}{\norm{S\Pi x}{}^2}
\bigg|
\leq (8\cdot 1.25)\alpha \norm{M}{}
= 10\alpha\norm{M}{},
\]
as desired.
\end{proof}

\begin{lemma}
\label{lem:eigval_lower_bound}
Let $A \in \R^{n\times n}$ be symmetric, not necessarily PSD.  Assume that all nonzero eigenvalues of $A$ are at least $\outlyingthresh$ in magnitude. 
Suppose that $S$ satisfies Assumption~\ref{assump:subspace_embedding}.
Then for all $k$ with $\lambda_k(A) > 0$ we have
$\lambda_k(SAS^T) \geq \lambda_k(A) - 51\outlyingthresh$.

\end{lemma}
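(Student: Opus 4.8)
The statement to prove is: if $A$ is symmetric with all nonzero eigenvalues at least $\outlyingthresh$ in magnitude, and $S$ satisfies Assumption~\ref{assump:subspace_embedding}, then for every $k$ with $\lambda_k(A) > 0$ we have $\lambda_k(SAS^T) \geq \lambda_k(A) - 51\outlyingthresh$. By Courant--Fischer, it suffices to exhibit a $k$-dimensional subspace $W$ of $\R^k$ (the ambient space of $SAS^T$) on which the quadratic form $x^T SAS^T x / \norm{x}{}^2$ is always at least $\lambda_k(A) - 51\outlyingthresh$. The natural candidate is the image under $S$ of the top-$k$ eigenspace of $A$. The obstacle, as the introduction flags, is that applying $S$ mixes the top eigenspace with the large negative eigenvectors of $A$, and the quadratic form of $SAS^T$ on the mixed subspace could be dragged down by those negative directions.

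**Key steps.** I would proceed as follows.

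\textbf{Step 1: Isolate the dangerous negative eigenvectors.} Let $\mu = \lambda_k(A) > 0$. Fix a dyadic threshold, say let $U$ collect the orthonormal eigenvectors of $A$ with eigenvalue $\leq -\mu$ (or perhaps $\leq -c\outlyingthresh$ for a suitable constant), and let $\Pi$ collect the top $k$ eigenvectors, whose eigenvalues are all $\geq \mu$. These columns are mutually orthonormal. The remaining eigenvectors have eigenvalue in magnitude either small — but there are none strictly between $0$ and $\outlyingthresh$ by hypothesis — so in fact all eigenvalues outside $[\,-\mu,\mu)$ on the negative side are captured by $U$, and on the positive side the ones above $\mu$ include the top $k$. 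I want $[U\,|\,\Pi]$ to sit inside a space $V_{\geq \lambda}$ for $\lambda$ comparable to $\mu$, so that Assumption~\ref{assump:subspace_embedding} gives $S$ an $\alpha$-distortion embedding on $[U\,|\,\Pi]$ with $\alpha \lesssim \outlyingthresh/\mu$.

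\textbf{Step 2: Project $S\Pi$ away from $SU$.} Let $P$ be the orthogonal projection onto the column span of $SU$ and $P^\perp = I - P$. By Lemma~\ref{lem:projecting_doesnt_hurt} (applied with $M$ ranging over the relevant pieces of $A$ conjugated by $S$, or directly as stated), for every unit $x$ the vector $P^\perp S \Pi x$ is nonzero and its Rayleigh quotient against $SAS^T$ differs from that of $S\Pi x$ by at most $10\alpha \norm{A}{}$. Set $W = \mathrm{Im}(P^\perp S \Pi)$; since $S\Pi$ has trivial kernel (subspace embedding) and $P^\perp$ distorts it by only $O(\alpha)$, $W$ has dimension exactly $k$. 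Crucially, $W \perp \mathrm{Im}(SU)$, so the negative eigenvectors collected in $U$ contribute nothing to the quadratic form on $W$.

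\textbf{Step 3: Lower bound the quadratic form on $\mathrm{Im}(S\Pi)$.} Write $A = A_{\geq} + A_{\text{neg}} + A_{\text{rest}}$ where $A_{\geq} = \Pi_{\geq\mu}\Lambda \Pi_{\geq\mu}^T$ restricted to all eigenvalues $\geq \mu$ (a PSD piece $\succeq \mu \,\Pi\Pi^T$ on the top-$k$ part), $A_{\text{neg}} = UU^T$-part (eigenvalues $\leq -\mu$), and $A_{\text{rest}}$ the remaining middle eigenvectors. For $y = S\Pi x$: the $A_{\geq}$ contribution is at least $\mu \cdot (S\Pi x)^T S\Pi\Pi^T S^T (S\Pi x)/\norm{S\Pi x}{}^2 \geq \mu(1-\alpha)$ by Lemma~\ref{lem:SPiv_lower_bound} — actually I want $\lambda_k$ not $\mu$, so I should keep the full positive part and use that on the top-$k$ span the form is $\geq \lambda_k$, losing $\alpha\lambda_k$. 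The $A_{\text{neg}}$ contribution on $W = P^\perp S\Pi$ vanishes because $W \perp SU$. The $A_{\text{rest}}$ contribution is handled by noting $\norm{A_{\text{rest}}}{} $ — wait, this could be large; instead I should fold $A_{\text{rest}}$'s positive part into the lower bound and its negative part into $U$ as well, i.e., define $U$ to capture \emph{all} negative eigenvectors and keep \emph{all} positive eigenvectors, so that $A = (\text{PSD on positive span}) - (\text{PSD on }U\text{ span})$ exactly, and the PSD-on-positive-span part is $\succeq \lambda_k \Pi\Pi^T$ on the relevant directions.

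\textbf{Step 4: Assemble.} Combining: for unit $x$, the Rayleigh quotient of $SAS^T$ at $P^\perp S\Pi x$ is $\geq$ (quotient at $S\Pi x$) $- 10\alpha\norm{A}{} \geq \lambda_k(A) - \alpha\lambda_k(A) - (\text{cross terms between }SU\text{ and }S\Pi) - 10\alpha\norm{A}{}$. The cross-term bound uses Lemma~\ref{lem:subspace_emb_to_apx_matrix_prod} to control $\norm{U^T S^T S \Pi}{} \leq \alpha$. Each error term is $O(\alpha\,\norm{A}{})$; with $\alpha \leq \min(\outlyingthresh/\lambda, 1/10)$ applied at each relevant dyadic level $\lambda \approx |\lambda_j(A)|$ — here is where the dyadic structure of Assumption~\ref{assump:subspace_embedding} does its work: one sums the contributions level by level, each scaling like $|\lambda_j| \cdot (\outlyingthresh/|\lambda_j|) = \outlyingthresh$, over $O(\log(\norm{A}{}/\outlyingthresh))$ levels — and bookkeeping the absolute constants yields the claimed $51\outlyingthresh$. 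Then Courant--Fischer finishes: $\lambda_k(SAS^T) \geq \min_{x\in W}\frac{x^TSAS^Tx}{\norm{x}{}^2} \geq \lambda_k(A) - 51\outlyingthresh$.

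**Main obstacle.** The crux is Step 3 combined with the level-set summation: controlling the "noise" from the positive eigenvectors whose eigenvalues lie strictly between $\outlyingthresh$ and $\lambda_k(A)$, and from all negative eigenvectors, without a Gaussian near-orthogonality property. The resolution is to put \emph{every} negative eigenvector into $U$ (so $P^\perp$ kills them all at once) and to exploit that the positive part of $A$ dominates $\lambda_k \Pi\Pi^T$ on the top-$k$ span, so that Lemma~\ref{lem:SPiv_lower_bound} delivers the main term with only multiplicative $(1-\alpha)$ loss; the cross-terms between the (large) positive and negative spans after applying $S$ are then $O(\alpha\norm{A}{})$ by approximate matrix product, and applying Assumption~\ref{assump:subspace_embedding} at the correct dyadic scale for each block turns $\alpha\norm{A}{}$ into a telescoping sum of $\outlyingthresh$'s. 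Getting the constant down to $51$ is just careful accounting of the factors of $8$, $10$, and $1.25$ from Propositions~\ref{prop:rayleigh_quotient_perturbation} and Lemmas~\ref{lem:projecting_doesnt_hurt}, \ref{lem:SPiv_lower_bound}.
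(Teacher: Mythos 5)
Your overall scaffolding (Courant--Fischer on the $k$-dimensional subspace $W = \mathrm{Im}(P^{\perp}S\Pi)$, the projection lemma, and approximate matrix products from the subspace-embedding hypothesis) matches the paper, and your initial Step 1 --- taking $U$ to be only the eigenvectors with eigenvalue $\leq -\mu$ --- is exactly the paper's choice $U_{\geq\lambda}$. The gap is the switch you make in Step 3 and reaffirm in your ``Main obstacle'' paragraph: putting \emph{every} negative eigenvector into $U$ so that one projection kills them all. This breaks the quantitative mechanism. Lemma~\ref{lem:projecting_doesnt_hurt} (and the cross-term bound via Lemma~\ref{lem:subspace_emb_to_apx_matrix_prod}) requires $S$ to be an $\alpha$-distortion embedding on $[U|\Pi]$ for a single $\alpha$; once $U$ contains eigenvectors whose eigenvalue magnitude is barely above $\outlyingthresh$, Assumption~\ref{assump:subspace_embedding} only guarantees $\alpha = \min(\outlyingthresh/\outlyingthresh,1/10)=1/10$ on that span, and the resulting Rayleigh-quotient perturbation $10\alpha\,\lambda\,\norm{S\Pi\Pi^TS^T}{}$ is of order $\lambda_k(A)$, not $O(\outlyingthresh)$. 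The paper's proof avoids this precisely by projecting out only the negative eigenvectors of magnitude at least $\lambda=\lambda_k(A)$, where the embedding on $[\Pi|U_{\geq\lambda}]$ has distortion $f(\lambda)=\min(\outlyingthresh/\lambda,1/7)$, so the projection cost is $O(\lambda f(\lambda))=O(\outlyingthresh)$; the remaining negative eigenvalues in $[\outlyingthresh,\lambda)$ are \emph{not} projected out but handled by the dyadic level-set bound.

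Relatedly, your Step 4 accounting cannot reach the stated $51\outlyingthresh$. You sum ``one $\outlyingthresh$ per dyadic level over $O(\log(\norm{A}{}/\outlyingthresh))$ levels,'' which yields $O(\outlyingthresh\log(\norm{A}{}/\outlyingthresh))$ --- the shape of the paper's \emph{upper}-bound lemma, not of Lemma~\ref{lem:eigval_lower_bound}. The paper gets a constant because the contribution of level $V_k$ enters through $\norm{V_k^TS^TS\Pi}{}^2 \leq f(2^{k-1}\outlyingthresh)^2$, i.e.\ the cross term appears \emph{squared} (it is the squared norm $\norm{V_k^TS^TS\Pi x}{}^2$ in the quadratic form), so the weighted sum $\sum_k (2^k\outlyingthresh)\bigl(\tfrac{7}{6}f(2^{k-1}\outlyingthresh)^2 + 12 f(\lambda)\bigr)$ telescopes geometrically to $\leq 50\outlyingthresh$. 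To repair your argument, revert to your original choice of $U$ (only eigenvalues $\leq -\lambda_k(A)$), keep the middle negative levels in the quadratic form, and bound them with the squared approximate-matrix-product estimate rather than the first-power one.
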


\begin{proof}

Suppose that the $k$th largest eigenvalue of $A$ is $\lambda>0.$  We will show that $S A S^T$ is at least $\lambda - O(\outlyingthresh)$ on a $k$-dimensional subspace $W$, which will imply that $\lambda_k(SAS^T) \geq \lambda - O(\outlyingthresh).$

Towards this end, decompose $A = A_+ - A_-$ into its positive and negative parts.  Since $A_+$ has $k$ eigenvalues that are at least $\lambda$, we can write $A_+ \succeq \lambda \Pi\Pi^T$ where the columns of $\Pi$ are the $k$ largest eigenvectors of $A_+$ (chosen to be orthonormal when there are duplicate eigenvalues).  Let $U_{\geq \lambda}$ have as its columns the eigenvectors of $A_-$ with associated eigenvalue at least $\lambda,$ and similarly for $U_{<\lambda}$. Let $P$ be the orthogonal projection onto the column span of $SU_{\geq\lambda}$ and set $P^{\perp} = I - P.$ As suggested by the previous lemma, we choose our subspace $W$ to be the column span of $P^{\perp} S\Pi.$

Write $f(\lambda) := \min(1/7, \outlyingthresh/\lambda)$ and recall that $S$ is an $f(\lambda)$-distortion subspace on the span of the eigenvectors of $A$ with associated eigenvalue at least $\lambda$ in magnitude, in other words on $[\Pi|U_{\geq\lambda}]$. Lemma~\ref{lem:projecting_doesnt_hurt} applies and gives that for all $x,$
\begin{align*}
\frac{(P^{\perp}S\Pi x)^T (S\Pi\Pi^TS^T) (P^{\perp} S \Pi x)}{\norm{P^{\perp} S\Pi x}{}^2}
&\geq \frac{(S\Pi x)^T (S\Pi\Pi^TS^T) (S \Pi x)}{\norm{S\Pi x}{}^2} - 10f(\lambda)\norm{S\Pi\Pi^TS^T}{}\\
&\geq (1 - f(\lambda)) - 10f(\lambda)\norm{S\Pi\Pi^T S^T}{}\\
&\geq 1 - 12f(\lambda),
\end{align*}
where we used Lemma~\ref{lem:SPiv_lower_bound} as well as as the observation that 
\[
\norm{S\Pi\Pi^T S^T}{} = \norm{\Pi^T S^T S \Pi}{} \leq (1 + f(\lambda))^{1/2} \leq (1+1/7)^{1/2} \leq 1.07,
\]
by the subspace embedding property.  Also note that the dimension of $W$ is $k$ since Lemma~\ref{lem:projecting_doesnt_hurt} also states that $P^{\perp} S\Pi$ has trivial kernel. Now for $w\in W$, with $\norm{w}{} = 1$ we have 
\[
w^T S A_+ S^T w
\geq \lambda w^T S\Pi \Pi^T S^T w
= \lambda\norm{\Pi^T S^T w}{}^2
\geq \lambda(1 - f(\lambda)).
\]

Next we study $w^T A_- w$ where $w\in W$ is a unit vector.  We first split the spectral decomposition of $A_-$ into two pieces as follows:
\[
A_- = U_{\geq\lambda}\Lambda_{\geq\lambda}U_{\geq\lambda}^T + U_{<\lambda}\Lambda_{<\lambda}U_{<\lambda}^T.
\]

Then we have
\begin{align}
\label{eq:A_minus_level_set_decomp}
w^T (SA_- S^T)w 
&=  w^T SU_{\geq\lambda}\Lambda_{\geq\lambda}U_{\geq\lambda}^T S^T w + w^T S U_{<\lambda}\Lambda_{<\lambda}U_{<\lambda}^T S^T w\\
&= w^T S U_{<\lambda}\Lambda_{<\lambda}U_{<\lambda}^T S^T w,
\end{align}
where the second inequality is because $w$ is orthogonal to the image of $SU_{\geq \lambda}$, which comes from the definition of $W.$

We now partition the eigenvalues into approximate level sets.  Let $V_k$ have as its columns the columns of $U_{<\lambda}$ with associated eigenvalue in $[2^{k-1}\outlyingthresh , 2^k \outlyingthresh).$  Then we have the spectral bound
\[
U_{\leq \lambda} \Lambda_{\leq\lambda}U_{\leq \lambda}^T \preceq (2\outlyingthresh) V_1 V_1^T + (4\outlyingthresh)V_2V_2^T + \ldots + (2^r \outlyingthresh) V_r V_r^T,
\]
where $r$ is $\lceil\log_2(\lambda/\outlyingthresh)\rceil.$ 

We would like to bound $w^T S V_kV_k^T S^T w.$  Write $w = P^{\perp} S\Pi x$ for some $x.$  Then similar to our application of Lemma~\ref{lem:projecting_doesnt_hurt} above,
\[
\frac{(P^{\perp}S\Pi x)^T (SV_k V_k^T S^T)(P^{\perp}S\Pi x)}{\norm{P^{\perp}S\Pi x}{}^2}
\leq 
\frac{(S\Pi x)^T (SV_k V_k^T S^T)(S\Pi x)}{\norm{S\Pi x}{}^2} + 10f(\lambda)\norm{SV_kV_k^T S^T}{}.
\]
Since $S$ a $1/7$ distortion embedding for $A$ and in particular $V_k,$ we have 
\[
\norm{SV_k V_k^T S^T}{}
= \norm{V_k^T S^T S V_k}{}
= \max_{\norm{x}{}=1} \norm{SV_k x}{}^2
\leq \frac{8}{7}.
\] For the first term above,
\[
\frac{(S\Pi x)^T (SV_k V_k^T S^T)(S\Pi x)}{\norm{S\Pi x}{}^2}
=
\frac{\norm{V_k^T S^T S \Pi x}{}^2}{\norm{S\Pi x}{}^2}
\leq
\frac{\norm{V_k^T S^T S \Pi}{}^2\norm{x}{}^2}{\norm{S\Pi x}{}^2}.
\]
Since $S$ is an $f(2^{k-1}\outlyingthresh)$-distortion subspace embedding for $[V_k | \Pi],$ we have 
\[
\norm{V_k^T S^T S\Pi}{} \leq f(2^{k-1}\outlyingthresh).
\]
Also $\norm{S\Pi x}{}^2 \geq \frac{6}{7} \norm{x}{}^2$ again since $S$ is a subspace embedding for $\Pi$, and so 
\[
\frac{(P^{\perp} S\Pi x)^T (SV_k V_k^T S^T)(P^{\perp}S\Pi x)}{\norm{P^{\perp}S\Pi x}{}^2} \leq \frac{7}{6}f(2^{k-1}\outlyingthresh)^2 + 12 f(\lambda).
\]
In particular, for $w\in W$ a unit vector,
\[
w^T (SV_k V_k^T S^T) w \leq \frac{7}{6}f(2^k\outlyingthresh)^2 + 12f(\lambda).
\]
Plugging into \eqref{eq:A_minus_level_set_decomp} above gives
\begin{align*}
w^T (S A_- S^T) w
&= w^T S U_{\leq \lambda} \Lambda_{\leq\lambda}U_{\leq \lambda}^T S^T w\\
&\leq w^T S \left((2\outlyingthresh) V_1 V_1^T + (4\outlyingthresh)V_2V_2^T + \ldots + (2^r \outlyingthresh) V_r V_r^T\right) S^T w\\
&= \sum_{k=1}^r (2^k\outlyingthresh) w^T SV_k V_k^T S^T w\\
&\leq \sum_{k=1}^r(2^k\outlyingthresh)\left(\frac{7}{6}f(2^k\outlyingthresh)^2 + 12f(\lambda)\right).
\\
&\leq \sum_{k=1}^r (2^k \outlyingthresh)\left(\frac{7}{6}(\frac{\outlyingthresh}{2^k\outlyingthresh})^2 + 12\frac{\outlyingthresh}{\lambda}\right)\\
&= \outlyingthresh \sum_{k=1}^r \left(\frac{7}{6}2^{-k} + 2^k\frac{12\outlyingthresh}{\lambda}\right)\\
&\leq \frac{7}{6}L + 12L\sum_{k=1}^r \frac{2^k L}{\lambda}\\
&\leq \frac{7}{6}L + 48L, 
\end{align*}
since $2^r \outlyingthresh \leq 2\lambda.$

Putting the pieces together,
\begin{align*}
w^T S A S^T w
&= w^T S A_+ S^T w - w^T S A_- S^T w\\
&\geq \lambda(1-f(\lambda)) - 50\outlyingthresh\\
&\geq \lambda\left(1 - \frac{\outlyingthresh}{\lambda}\right) - 50\outlyingthresh\\
& = \lambda - 51\outlyingthresh,
\end{align*}
for all unit vectors $w\in W.$  Since the dimension of $W$ is $k$, this implies that $\lambda_k(SAS^T) \geq \lambda - 51\outlyingthresh$ as desired. 
\end{proof}

\subsection{Upper Bounds}

\begin{lemma}
Suppose that the nonzero eigenvalues of $A$ are all at least $\outlyingthresh$ and let $\lambda_1 = \lambda_{\max}(A).$  Suppose that $S$ satisfies Assumption~\ref{assump:subspace_embedding}.  Then $\lambda_{\max}(SAS^T) \leq \lambda_1 + c\outlyingthresh\log\frac{\lambda_1}{L}.$
\end{lemma}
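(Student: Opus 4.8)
The plan is to mimic the level-set strategy used in Lemma~\ref{lem:eigval_lower_bound}, but now for upper-bounding the top eigenvalue. Write the spectral decomposition of $A$ and partition its eigenvectors into approximate level sets: let $W_0$ carry the eigenvectors with $\abs{\lambda}\in[\outlyingthresh, 2\outlyingthresh)$, $W_1$ those with $\abs{\lambda}\in[2\outlyingthresh,4\outlyingthresh)$, and in general $W_j$ those with $\abs{\lambda}\in[2^j\outlyingthresh, 2^{j+1}\outlyingthresh)$, for $j = 0,\dots,r$ where $r = \lceil\log_2(\lambda_1/\outlyingthresh)\rceil$. Since the nonzero eigenvalues of $A$ are all at least $\outlyingthresh$ in magnitude, these level sets span all of $A$'s row/column space, and we get the two-sided spectral bound $-\sum_j 2^{j+1}\outlyingthresh\, W_j W_j^T \preceq A \preceq \sum_j 2^{j+1}\outlyingthresh\, W_j W_j^T$, so it suffices to upper-bound $\lambda_{\max}\!\big(\sum_j 2^{j+1}\outlyingthresh\, S W_j W_j^T S^T\big)$.

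The key step is to control, for a fixed unit vector $w$ (which in the end will be the top eigenvector of $SAS^T$), the quadratic form $w^T S A S^T w = \sum_{j} w^T S W_j \Lambda_j W_j^T S^T w$ where $\Lambda_j$ is diagonal with entries bounded by $2^{j+1}\outlyingthresh$. The diagonal ($j=k$) terms are easy: $\norm{SW_j x}{}^2 \le (1+f(2^j\outlyingthresh))\norm{x}{}^2 \le \frac{8}{7}\norm{x}{}^2$ by the subspace embedding property, so each contributes at most $O(2^{j}\outlyingthresh)$ times the squared length of $w$'s component there. The heart is bounding the cross terms: I want to show that the off-diagonal interaction between level sets $j$ and $k$ is small because $S$ approximately preserves the orthogonality of $W_j$ and $W_k$. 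Concretely, by Assumption~\ref{assump:subspace_embedding} applied to $[W_j | W_k]$ (which sits inside the span of eigenvectors with eigenvalue at least $2^{\min(j,k)}\outlyingthresh$ in magnitude) together with Lemma~\ref{lem:subspace_emb_to_apx_matrix_prod}, we get $\norm{W_j^T S^T S W_k}{} \le f(2^{\min(j,k)}\outlyingthresh) \le \outlyingthresh/2^{\min(j,k)}\outlyingthresh = 2^{-\min(j,k)}$. The cross-term between levels $j$ and $k$ in $w^T SAS^T w$ is then at most roughly $2^{\max(j,k)}\outlyingthresh \cdot 2^{-\min(j,k)} \cdot \|S^Tw\|$-type factors times the component lengths, and summing over the $O(\log\frac{\lambda_1}{\outlyingthresh})$-by-$O(\log\frac{\lambda_1}{\outlyingthresh})$ grid of level-set pairs — using a Cauchy--Schwarz / geometric-series bookkeeping that the weights $2^{-\abs{j-k}}$ decay off the diagonal — yields the claimed $c\,\outlyingthresh\log\frac{\lambda_1}{\outlyingthresh}$ slack on top of $\lambda_1$. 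The $\log$ factor comes precisely from there being $r+1 = O(\log\frac{\lambda_1}{\outlyingthresh})$ diagonal blocks each of which can contribute $O(\outlyingthresh)$.

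More carefully, I would decompose $w$ (or rather a preimage under $S$ restricted to the relevant span) into its pieces $x_j$ along each $W_j$, set $a_j = \norm{x_j}{}$, and show $w^T SAS^T w \le \lambda_1 \sum_j a_j^2 + \outlyingthresh \sum_{j,k} c\, 2^{-\abs{j-k}} 2^{\max(j,k)} a_j a_k \cdot (\text{lower-order})$, then argue the main quadratic form $\sum_j 2^j a_j^2$ is itself at most $\lambda_1/\outlyingthresh$ (since $w^T SAS^T w$ is being maximized and can't exceed $\lambda_1$ by more than we're trying to prove — this needs a careful self-improving/bootstrapping setup, or alternatively a direct bound using that $\sum_j 2^j \outlyingthresh a_j^2$ is comparable to the $A_+$ part which is at most $\lambda_1 + O(\text{slack})$). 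The main obstacle I anticipate is exactly this last point: unlike the lower bound, where we got to \emph{choose} the subspace $W$, here $w$ is adversarial (the actual top eigenvector of $SAS^T$), so I cannot assume its mass is concentrated on the top level sets. I expect the resolution is the "carefully keeping track of the interaction between each pair of level sets" argument alluded to in the techniques section: one shows that even if $w$ has mass spread across many level sets, the near-orthogonality of the $SW_j$ forces $\sum_j 2^j\outlyingthresh a_j^2 \le \lambda_1 + O(\outlyingthresh\log\frac{\lambda_1}{\outlyingthresh})$, and then the cross-term sum is dominated by this same quantity times a convergent geometric factor. Getting the constants and the single power of $\log$ (rather than $\log^2$) out of the double sum over level-set pairs is the delicate bookkeeping I'd need to do carefully.
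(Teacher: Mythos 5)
Your overall strategy (dyadic level sets of the spectrum, Assumption~\ref{assump:subspace_embedding} plus Lemma~\ref{lem:subspace_emb_to_apx_matrix_prod} to control the interaction between level sets) is the same as the paper's, but the step that actually produces the bound $\lambda_1 + cL\log\frac{\lambda_1}{L}$ is missing, and the heuristic you give for it is incorrect. With the correct splitting, the cross-term bound between levels $i$ and $j$ is
\[
\norm{\bigl(SU^{(i)}(\Lambda^{(i)})^{1/2}\bigr)^T\bigl(SU^{(j)}(\Lambda^{(j)})^{1/2}\bigr)}{}\;\le\; 2L\,\sqrt{2}^{\,\abs{i-j}},
\]
so the off-diagonal weights \emph{grow} away from the diagonal (up to $\approx\sqrt{L\lambda_1}$ at distance $r=\lceil\log_2(\lambda_1/L)\rceil$); they do not decay like $2^{-\abs{i-j}}$ as your ``geometric-series bookkeeping'' assumes (your own asymmetric estimate $2^{\max(j,k)}L\cdot 2^{-\min(j,k)}=L\,2^{\abs{j-k}}$ already grows). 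Consequently a Cauchy--Schwarz/row-sum treatment of the $r\times r$ grid gives an excess of order $r\sqrt{L\lambda_1}$ or worse, which swamps the target $O(L\log\frac{\lambda_1}{L})$; this is exactly the point you defer as ``delicate bookkeeping'' or a ``self-improving setup,'' so the proposal does not contain a proof. The paper closes this gap by comparing the resulting quadratic form to an explicit $r\times r$ matrix $M$ with $M_{11}=\lambda_1$, $M_{ii}=\lambda_1/2$ for $i>1$, and all off-diagonal entries equal to $\alpha=L\sqrt{2}^{\,r}$, exploiting symmetry to reduce $\norm{M}{}$ to a $2\times 2$ eigenvalue problem; the characteristic equation together with the gap $\lambda_1/2-\alpha(r-2)\ge\lambda_1/4$ (valid once $\lambda_1\ge CL$) forces the excess $T$ to satisfy $T\lambda_1/4\le \alpha^2(r-1)\le 2L\lambda_1 r$, i.e.\ $T\le 8Lr$. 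Some mechanism of this kind, using that every level below the top has diagonal block at most $\lambda_1/2$, is the essential idea your plan lacks.

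Two smaller problems: anchoring the dyadic levels at $L$ and rounding every eigenvalue up to $2^{j+1}L$ inflates the top level by up to a factor $2$, so your ``it suffices to bound $\lambda_{\max}(\sum_j 2^{j+1}L\,SW_jW_j^TS^T)$'' reduction can only yield $2\lambda_1+\cdots$; anchor the levels at $\lambda_1$ (the paper uses $(\lambda_1 2^{-i},\lambda_1 2^{-i+1}]$) or keep the top block exact. Also, the worry about decomposing an adversarial $w$ ``or a preimage under $S$'' disappears if you pass to coefficient space: the nonzero eigenvalues of $SAS^T$ coincide with those of $\Lambda^{1/2}U^TS^TSU\Lambda^{1/2}$, so one maximizes $\norm{SU\Lambda^{1/2}x}{}^2$ over unit $x$, where the level-set decomposition $x=\sum_i x^{(i)}$ is immediate (and, as in the paper's use of this lemma, one first reduces to the PSD part of $A$ so that $\Lambda^{1/2}$ makes sense).
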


\begin{proof}
Let $A = U\Lambda U^T$ be the spectral decomposition of $A.$

We partition the columns of $U$ based on the eigenvalues.  For $i=1,\ldots, r$ with $r = \lceil \log_2\frac{\lambda_1}{\outlyingthresh}\rceil$, let $U^{(i)}$ consist of the columns of $U$ corresponding to eigenvalues in $(\lambda_1 2^{-i}, \lambda_1 2^{-i + 1}].$  Define $\Lambda^{(i)}$ to be the diagonal matrix consisting of the associated eigenvalues.



Let $j\leq i$ be an arbitrary index. Note that if $S$ is a subspace embedding for $U^{(i)}$ then $S$ is also a subspace embedding for $U^{(i)}\Lambda^{(i)}$ with the same parameters.

By Assumption~\ref{assump:subspace_embedding}, $S$ is a $\outlyingthresh/(\lambda_1 2^{-i}) = 2^i \outlyingthresh/\lambda_1$ distortion subspace for $[U^{(i)} | U^{(j)}].$  Therefore we have 

\begin{align*}
\norm{\left(S U^{(i)}(\Lambda^{(i)})^{1/2}\right)^T \left(S U^{(j)}(\Lambda^{(j)})^{1/2}\right) - \left(U^{(i)}(\Lambda^{(i)})^{1/2}\right)^T \left(U^{(j)}(\Lambda^{(j)})^{1/2}\right)}{}
&\leq \frac{2^i \outlyingthresh}{\lambda_1} \norm{U^{(i)}(\Lambda^{(i)})^{1/2}}{} \norm{U^{(j)}(\Lambda^{(j)})^{1/2}}{} \\
&\leq \frac{2^i \outlyingthresh}{\lambda_1}  (\lambda_1 2^{-i +1})^{1/2}(\lambda_1 2^{-j+1})^{1/2}\\
&= 2L \sqrt{2}^{i - j}.
\end{align*}

For $i > j$, $U^{(i)}$ and $U^{(j)}$ are orthognal to one another, so 
\[
\norm{\left(S U^{(i)}(\Lambda^{(i)})^{1/2}\right)^T \left(S U^{(j)}(\Lambda^{(j)})^{1/2}\right)}{} 
\leq 
2L \sqrt{2}^{i-j}.
\]
Therefore for all $i\neq j$,
\[
\norm{\left(S U^{(i)}(\Lambda^{(i)})^{1/2}\right)^T \left(S U^{(j)}(\Lambda^{(j)})^{1/2}\right)}{} 
\leq 2L \sqrt{2}^{\abs{i-j}}.  
\]

Similarly for the $i=j$ case, we have
\[
\norm{\left(S U^{(i)}(\Lambda^{(i)})^{1/2}\right)^T \left(S U^{(i)}(\Lambda^{(i)})^{1/2}\right) - \left(U^{(i)}(\Lambda^{(i)})^{1/2}\right)^T \left(U^{(i)}(\Lambda^{(i)})^{1/2}\right)}{}
\leq 2L, 
\]
which by the triangle inequality implies that
\[
\norm{\left(S U^{(i)}(\Lambda^{(i)})^{1/2}\right)^T \left(S U^{(i)}(\Lambda^{(i)})^{1/2}\right)}{}
\leq \lambda_1 2^{-i + 1} + 2L.
\]

Now let $x$ be a unit vector, and partition the coordinates of $x$ into pieces $x^{(i)}$ corresponding to the decomposition of $U$ into $U^{(i)}$'s. Then
\[
\norm{S U \Lambda^{1/2} x}{}^2
= \norm{\sum_{i} S U^{(i)}(\Lambda^{(i)})^{1/2}x^{(i)} }{}^2
= \sum_i \norm{S U^{(i)}(\Lambda^{(i)})^{1/2}x^{(i)}}{}^2 + 2\sum_{i < j} \inner{S U^{(i)}(\Lambda^{(i)})^{1/2}x^{(i)} }{S U^{(j)}(\Lambda^{(j)})^{1/2}x^{(j)} } 
\]

By the bounds above, the cross terms are all bounded by $\outlyingthresh \sqrt{2}^{\abs{i-j}} \norm{x^{(i)}}{} \norm{x^{(j)}}{}$ and the diagonal terms are bounded by $(\lambda_1 2^{-i + 1} + \outlyingthresh) \norm{x^{(i)}}{}^2.$

Thus 
\begin{align*}
\norm{S U \Lambda^{1/2} x}{}^2
&\leq \sum_{i=1}^r \left[(\lambda_1 2^{-i+1} + \outlyingthresh)\norm{x^{(i)}}{}^2\right] + 2\sum_{i<j}\outlyingthresh \sqrt{2}^{\abs{i-j}}\norm{x^{(i)}}{}\norm{x^{(j)}}{}\\
&=
\outlyingthresh + \sum_{i=1}^r \lambda_1 2^{-i+1}\norm{x^{(i)}}{}^2 + 2\outlyingthresh\sum_{i < j}\sqrt{2}^{\abs{i-j}} \norm{x^{(i)}}{}\norm{x^{(j)}}{}.
\end{align*}

Let $M$ be an $r\times r$ matrix defined by $M_{1,1} = \lambda_1$, $M_{i,i} = \lambda_1/2$ for $i>1$, and for $i\neq j$ $M_{i,j} = M_{j,i} = \outlyingthresh \sqrt{2}^r := \alpha.$

Comparing the entries of $M$ with the coefficients above gives
\[
\norm{S U \Lambda^{1/2} x}{}^2
\leq \outlyingthresh + \norm{M}{}.
\]
To calculate $\norm{M}{}$, note from the symmetry of the last $r-1$ coordinates that $v^T M v$ is maximized for a $v$ of the form $[a,b/\sqrt{r-1},b/\sqrt{r-1},\ldots b/\sqrt{r-1}],$ where $a^2 + b^2 = 1.$  For this choice of $v$, we have
\[
v^T M v = \lambda_1 a^2 + 2\alpha\sqrt{r-1}ab + (\lambda_1/2 + (r-2)\alpha)b^2. 
\]

So the operator norm of $M$ is the same as the operator norm of 
\[
M' := 
\begin{bmatrix}
\lambda_1& \alpha\sqrt{r-1}\\
\alpha\sqrt{r-1}& \lambda_1/2 + \alpha(r-2)\\
\end{bmatrix}.
\]

Now suppose that $\lambda_1 + T$ is an eigenvalue of $M'.$ From the characteristic equation,
\[
T(T + \lambda_1/2 - \alpha(r-2)) = \alpha^2 (r-1).
\]

We claim that $\lambda_1/2 - \alpha(r-2) \geq \lambda_1/4$ when $\lambda_1\geq C\outlyingthresh$ for an absolute constant $C.$  To see this, note that the desired bound is implied by $\lambda_1/4 \geq \alpha r.$ Note that $\alpha = \outlyingthresh \sqrt{2^r}\leq \sqrt{2} \sqrt{\outlyingthresh\lambda_1},$ so it suffices to check that $\lambda_1/4 \geq \sqrt{2} \sqrt{\outlyingthresh\lambda_1}(\log_2(\lambda_1/\outlyingthresh) + 1),$ or equivalently $\sqrt{\lambda_1/\outlyingthresh} \geq 4\sqrt{2} (\log_2(\lambda_1/\outlyingthresh + 1).$ The latter is true for large enough $C$, since we have the real inequality $\sqrt{x} \geq 4\sqrt{2}(\log_2(x)+1)$ for large enough $x.$ 

Then for $\lambda_1 \geq C\outlyingthresh$ we have
\[
T(T + \lambda_1/2 - \alpha(r-2)) \geq T(T + \lambda_1/4) \geq T\lambda_1/4,
\]
while on the other hand $\alpha^2 (r-1) \leq 2\outlyingthresh \lambda_1 r.$  So we conclude that
\[
T\lambda_1/4 \leq 2\outlyingthresh \lambda_1 r,
\]
which implies that $T \leq 8\outlyingthresh r = 8 L \lceil 
\log_2(\frac{\lambda_1}{\outlyingthresh})\rceil.$ This in turn gives the bound
\[
\norm{S U \Lambda^{1/2} x}{}^2 \leq \lambda_1 + 8 L \lceil 
\log_2(\frac{\lambda_1}{\outlyingthresh})\rceil.
\]
But $x$ was arbitrary so 
\[\norm{S A S^T}{} = \norm{\Lambda^{1/2}U^T S^T S U \Lambda^{1/2}}{}
= \norm{S U \Lambda^{1/2}}{}^2
\leq \lambda_1 + 8 L \lceil 
\log_2(\frac{\lambda_1}{\outlyingthresh})\rceil
\]
whenever $\lambda_1 \geq C\outlyingthresh$. The lemma follows after adjusting constants.



\end{proof}

\begin{lemma}
\label{lem:eigval_upper_bounds}
Let $A\in \R^{n\times n}$ with $\lambda_i(A) > 0.$ 
Suppose that $S$ satisfies Assumption~\ref{assump:subspace_embedding} for $A.$ Then $\lambda_i(S A S^T)\leq \lambda_i(A) + \outlyingthresh\log\frac{\lambda_i(A)}{\outlyingthresh}.$
\end{lemma}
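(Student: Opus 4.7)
My plan is to reduce the $i$-th eigenvalue bound to the top-eigenvalue bound established in the preceding lemma. I would decompose $A = A_{\mathrm{top}} + B$, where $A_{\mathrm{top}}$ is the projection of $A$ onto its top $i-1$ eigenvectors scaled by their associated eigenvalues --- these are all positive since $\lambda_{i-1}(A)\geq \lambda_i(A)>0$ --- and $B := A - A_{\mathrm{top}}$ is the remainder. Since $A_{\mathrm{top}} \succeq 0$ has rank at most $i-1$, the matrix $SA_{\mathrm{top}}S^T$ is PSD with rank at most $i-1$, and therefore $\lambda_i(SA_{\mathrm{top}}S^T) = 0$. Weyl's inequality then gives
\[
\lambda_i(SAS^T) \leq \lambda_i(SA_{\mathrm{top}}S^T) + \lambda_{\max}(SBS^T) = \lambda_{\max}(SBS^T),
\]
so it suffices to bound the top eigenvalue of $SBS^T$.

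By construction $\lambda_{\max}(B) = \lambda_i(A)$, so once I can apply the previous lemma to $B$, I obtain $\lambda_{\max}(SBS^T) \leq \lambda_i(A) + c\outlyingthresh \log(\lambda_i(A)/\outlyingthresh)$, which is exactly the claim. I therefore just need to verify that the hypotheses of the preceding lemma transfer from $A$ to $B$. The eigenvectors of $B$ are a subset of those of $A$ (we removed exactly those corresponding to the top $i-1$ positive eigenvalues), so for every $\lambda\geq \outlyingthresh$ the span of $B$'s eigenvectors with $|\mu|\geq \lambda$ is a subspace of the corresponding span for $A$. Since a subspace embedding on a space automatically restricts to a subspace embedding on any subspace with the same distortion, $S$ satisfies Assumption~\ref{assump:subspace_embedding} for $B$ with the same parameters. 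The nonzero eigenvalues of $B$ are likewise a subset of those of $A$, so the implicit requirement that nonzero eigenvalues be at least $\outlyingthresh$ in magnitude carries over without loss.

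The main obstacle here is quite minor --- essentially just bookkeeping inheritance of hypotheses from $A$ to the restricted matrix $B$. All of the delicate work (the level-set partition of the spectrum, the $2\times 2$ reduction via the symmetry of the off-diagonal entries, and the logarithmic aggregation across dyadic scales) has already been carried out in the preceding top-eigenvalue lemma, so nothing new is required here beyond the decomposition plus a single application of Weyl's inequality.
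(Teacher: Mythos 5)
Your approach (rank-$(i-1)$ decomposition plus Weyl's inequality) is a genuinely different route from the paper, which instead uses the min-max characterization of $\lambda_i(SAS^T) = \lambda_i(A^{1/2}S^TSA^{1/2})$ and restricts the test vector $x$ to the orthogonal complement of the top $i-1$ eigenvectors of $A$, thereby reducing to a top-eigenvalue bound for a truncated version of $A$. Both reductions land in the same place and are of comparable length; your version of the bookkeeping is arguably slightly cleaner.

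However, there is a gap in the hypothesis transfer. The preceding top-eigenvalue lemma is stated under the condition that \emph{all nonzero eigenvalues of the matrix are at least $\outlyingthresh$} (not ``at least $\outlyingthresh$ in magnitude''), and its proof works with the square root $\Lambda^{1/2}$ of the eigenvalue matrix and with level sets of the form $(\lambda_1 2^{-i},\lambda_1 2^{-i+1}]$, so it genuinely requires the matrix to be PSD. Your remainder $B = A - A_{\mathrm{top}}$ carries all of $A$'s negative eigenvalues, so $B$ is not PSD whenever $A$ isn't, and you cannot feed it directly into that lemma as written. Your phrase ``the implicit requirement that nonzero eigenvalues be at least $\outlyingthresh$ in magnitude'' misreads the hypothesis as a magnitude condition, which is what lets the gap slip through.

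The fix is small. Either first replace $A$ by its positive part $A_+$, using $\lambda_i(SAS^T)\le\lambda_i(SA_+S^T)$ (this is how the paper starts), and then apply your decomposition to $A_+$; or, after your Weyl step, observe that $SBS^T = SB_+S^T - SB_-S^T \preceq SB_+S^T$, so $\lambda_{\max}(SBS^T)\le \lambda_{\max}(SB_+S^T)$, and then apply the top-eigenvalue lemma to the PSD matrix $B_+$, whose top eigenvalue is still $\lambda_i(A)$ and whose eigenvectors remain a subset of $A$'s so Assumption~\ref{assump:subspace_embedding} is inherited. With either patch in place the rest of your argument goes through.
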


\begin{proof}
First note that we may as well prove this result for PSD $A$, since $\lambda_i(SAS^T) \leq \lambda_i(S A_+ S^T)$, where $A_+$ is the PSD part of $A.$ So we assume from now on that $A$ is PSD. Let $A_{\langle -i \rangle}$ denote $A$ with the top $i$ eigenvalues zeroed out, and let $S_{i}$ denote the orthogonal complement of the span of the top $i-1$ eigenvectors of $A.$ Note that $S_{i}$ has dimension $n-i+1.$ By the min-max theorem,
\begin{align*}
\lambda_i(S A S^T)
&= \lambda_i(A^{1/2}S^T S A^{1/2})\\
&= \min_{\substack{T\subseteq \R^n\\ \dim(T) = n-i+1} }\max_{\substack{x\in T\\ \norm{x}{}=1}} x^T A^{1/2}S^T S A^{1/2} x\\
&\leq \max_{x\in S_{i}, \norm{x}{}=1} x^T A^{1/2}S^T S A^{1/2} x\\
&= \max_{x\in S_{i}, \norm{x}{}=1} x^T A_{-i}^{1/2}S^T S A_{\langle -i \rangle}^{1/2} x\\
&\leq \max_{ \norm{x}{}=1} x^T A_{\langle -i \rangle}^{1/2}S^T S A_{-i}^{1/2} x\\
&= \lambda_{\max}(A_{\langle -i \rangle}^{1/2}S^T S A_{-i}^{1/2})\\
&= \lambda_{\max}(S A_{\langle -i \rangle} S^T).
\end{align*}

It is clear that if Assumption~\ref{assump:subspace_embedding} is satisfied for $A$ then it is satisfied for $A_{\langle -i \rangle}$ as well.  Thus the previous lemma applies and gives
\[
\lambda_i(SAS^T) \leq \lambda_i(A) + \outlyingthresh \log\frac{\lambda_i(A)}{\outlyingthresh},
\]
as desired.

\end{proof}

\subsection{Two-sided eigenvalue bound.}
By simply combining our upper and lower eigenvalue bounds, we obtain a two-sided bound that we will use throughout.
\begin{theorem}
\label{thm:two_sided_eigenvalue_bound}
Suppose that $S$ is such that Assumption~\ref{assump:subspace_embedding} holds for $A$ where all nonzero eigenvalues of $A$ are at least $\outlyingthresh$ in magnitude.  Then for all $i$ with $\lambda_i(A)\neq 0$ we have 
\[
\abs{\lambda_i(SAS^T) - \lambda_i(A)}
\leq \outlyingthresh \log \frac{\norm{A}{}}{\outlyingthresh}.
\]
\end{theorem}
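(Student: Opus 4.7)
The plan is to combine the one-sided bounds of Lemma~\ref{lem:eigval_lower_bound} and Lemma~\ref{lem:eigval_upper_bounds}, handling positive and negative eigenvalues separately via a sign-flip reduction. All constants are absorbed into the $\log(\|A\|/L)$ factor (using $\log(\|A\|/L) \geq 1$, possibly after enlarging $L$ by a constant or adjusting the statement by an absolute constant).

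\textbf{Positive eigenvalues.} Suppose $\lambda_i(A) > 0$. Lemma~\ref{lem:eigval_lower_bound} gives directly that
\[
\lambda_i(SAS^T) \geq \lambda_i(A) - 51L.
\]
For the matching upper bound, Lemma~\ref{lem:eigval_upper_bounds} gives
\[
\lambda_i(SAS^T) \leq \lambda_i(A) + L \log\frac{\lambda_i(A)}{L} \leq \lambda_i(A) + L\log\frac{\|A\|}{L},
\]
since $\lambda_i(A) \leq \|A\|$. Combining these two gives the bound $|\lambda_i(SAS^T) - \lambda_i(A)| \leq L\log(\|A\|/L)$ (after absorbing the $51L$ additive constant into the logarithmic factor).

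\textbf{Negative eigenvalues.} For indices $i$ with $\lambda_i(A) < 0$, the plan is to apply the positive case to $B := -A$. Since Assumption~\ref{assump:subspace_embedding} depends only on the magnitudes of the eigenvalues of $A$ (and the spans of eigenvectors with eigenvalue of given magnitude coincide for $A$ and $-A$), the assumption transfers automatically to $B$ with the same $L$. Moreover $SBS^T = -SAS^T$, and the eigenvalues of $B$ are the negatives of those of $A$ in reverse order: $\lambda_j(B) = -\lambda_{n-j+1}(A)$. Applying the positive-eigenvalue case to $B$ at the index $j = n-i+1$ (which indeed satisfies $\lambda_j(B) > 0$) yields
\[
\bigl|\lambda_j(SBS^T) - \lambda_j(B)\bigr| \leq L\log\frac{\|B\|}{L} = L\log\frac{\|A\|}{L}.
\]
Translating back via $\lambda_j(SBS^T) = -\lambda_i(SAS^T)$ and $\lambda_j(B) = -\lambda_i(A)$ yields the same bound for $\lambda_i(A)$.

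\textbf{Main obstacle.} There is no real technical obstacle at this stage; the work has been done in the outlying-eigenvalue bound lemmas. The only subtlety is the bookkeeping for negative eigenvalues and verifying that Assumption~\ref{assump:subspace_embedding} is invariant under the sign flip $A \mapsto -A$, which is immediate since the set of eigenvectors with $|\lambda| \geq \lambda_0$ is identical for $A$ and $-A$. The constant $51$ from Lemma~\ref{lem:eigval_lower_bound} and any additive $O(L)$ slack get absorbed into the logarithmic factor, so the statement as written should be understood up to an absolute constant.
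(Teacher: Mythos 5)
Your proof is correct and follows essentially the same route as the paper: combine Lemma~\ref{lem:eigval_lower_bound} with Lemma~\ref{lem:eigval_upper_bounds} for positive eigenvalues, then apply the same to $-A$ for negative ones. Your remark that the $51L$ additive term from the lower bound must be absorbed into a hidden constant on $L\log(\|A\|/L)$ is a fair observation about the theorem statement being implicitly up to an absolute constant, which the paper's one-line proof glosses over.
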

\begin{proof}
Lemma~\ref{lem:eigval_upper_bounds} and Lemma~\ref{lem:eigval_lower_bound} yield the desired bound for all $i$ with $\lambda_i(A) > 0.$  To obtain the analogous bound for the negative eigenvalues, simply apply these Lemmas to $-A.$
\end{proof}

\section{Bounded Entry Matrices}

As in \cite{bhattacharjee2024sublinear}, we decompose $A$ corresponding to its ``middle" and ``outlying" eigenvalues.
\begin{Definition}
    \label{def:middle_and_outlying}
    Let $A$ be a symmetric matrix with spectral decomposition $A = U \Lambda U^T.$ Let $\outlyingthresh$ be the parameter in Assumption~\ref{assump:subspace_embedding}.  Set $\Lambda_o$ to be $\Lambda$ with all entries smaller than $\outlyingthresh$ zeroed out, and set $\Lambda_m = \Lambda - \Lambda_o$.  Then we define $A_o = U \Lambda_o U^T$ and $A_m = U \Lambda_m U^T$.
\end{Definition}

Throughout we will slightly abuse notation by writing $A_{o,i}$ to refer to row $i$ of the matrix $A_o$, and similarly $A_{o,i,j}$ to refer to the $i,j$ entry of $A_o.$ We use the following bound on the middle eigenvalues from \cite{bhattacharjee2024sublinear}.
\begin{lemma}
\label{lem:lemma_4_of_bha}
Let $A \in \R^{n\times n}$ be symmetric with entries bounded by $1$, and let $A_m$ be $A$ restricted to its middle eigenvalues as defined in Definition~\ref{def:middle_and_outlying}.  Let $S$ be a uniform sampling matrix with sampling probability at least $\frac{c\log n}{\eps^2 \delta}.$ Then with probability at least $1-\delta,$ $\norm{S A_m S^T}{2} \leq \eps n.$
\end{lemma}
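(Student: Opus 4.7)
The statement essentially recovers Lemma 4 of \cite{bhattacharjee2024sublinear}, so the plan is to follow their matrix-concentration argument. The strategy is to exploit that $A_m$ has small spectral norm by construction, together with a bound on the infinity norm of $A_m$, and then apply a Rudelson--Tropp--type concentration inequality to the sampled matrix $SA_mS^T$. First, I would record the three norms of $A_m$ that drive the bound: (i) $\|A_m\|_2\le\outlyingthresh=\eps n$ directly from Definition~\ref{def:middle_and_outlying}; (ii) $\|A_m\|_F\le\|A\|_F\le n$, since $A$ has entries bounded by $1$ and projecting onto an eigen-subspace does not increase Frobenius norm; and (iii) $\|A_m\|_\infty=O(1)$, obtained by first showing $\|A_o\|_\infty=O(1)$ (the rows of $V_o$ are incoherent by Lemma~\ref{lem:incoherence_bound}, and there are at most $1/\eps^2$ outlying eigenvalues of magnitude at most $n$, so a direct Cauchy--Schwarz calculation on $A_o=V_o\Lambda_oV_o^T$ bounds each entry) and then invoking $\|A_m\|_\infty\le\|A\|_\infty+\|A_o\|_\infty$.

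The core step is the concentration bound on $\|SA_mS^T\|_2$. I would write $S^TS=\sum_i(\delta_i/p)\,e_ie_i^T$ with $\delta_i\sim\mathrm{Ber}(p)$ independent, so that the nonzero spectrum of $SA_mS^T$ coincides with that of $S^TSA_mS^TS$. Splitting off the diagonal contribution, the off-diagonal remainder is a sum of independent zero-mean random matrices to which one can apply the noncommutative Khintchine / matrix Bernstein bound of \cite{rudelson2007sampling,tropp2008norms}. The resulting expectation bound is roughly
\[
\E\,\|SA_mS^T-A_m\|_2\ \lesssim\ \sqrt{\tfrac{\log n}{p}}\,\|A_m\|_2\ +\ \tfrac{\log n}{p}\,\|A_m\|_\infty,
\]
plus a similarly controlled Bernstein contribution from the diagonal. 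Substituting (i) and (iii) and the hypothesis $p\ge c\log n/(\eps^2\delta)$, each summand is $O(\eps n)$, and a Markov step (the factor $1/\delta$ is absorbed into $c$) converts this to the desired high-probability bound on $\|SA_mS^T\|_2$.

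The main obstacle is the matrix-concentration step itself: the $\sqrt{\log n}$ factor inherent to Rudelson--Tropp is precisely what forces the $\log n$ in the sampling probability, and there is no obvious way to avoid it here. A secondary, easier-to-overlook obstacle is the entry bound on $A_m$: one cannot simply take $\|A_m\|_\infty\le 1$ from the hypothesis on $A$, since $A_m=A-A_o$; the incoherence bound on $V_o$ is essential to keep $\|A_o\|_\infty$ (and hence $\|A_m\|_\infty$) bounded by an absolute constant. Once these two ingredients are in hand, the rest is routine book-keeping, and indeed the $\log n$ loss introduced at this step is exactly what the later bootstrapping argument in the paper is designed to remove without re-opening the concentration black box.
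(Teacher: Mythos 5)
The paper does not re-derive this lemma; its ``proof'' is a one-line citation to Lemma~4 of \cite{bhattacharjee2024sublinear}, together with the remark that $S$ here already carries the $1/\sqrt{p}$ rescaling. Your proposal is a reconstruction of that cited argument, and it has the right skeleton: a Rudelson--Tropp matrix-concentration bound applied to the random principal submatrix of $A_m$, fed by bounds on $\|A_m\|_2$, on the row norms of $A_m$ (controlled via Proposition~\ref{prop:row_norms_stay_bounded}), and on $\|A_m\|_\infty$ (controlled via the incoherence bound), with the $\log n$ in the sample size forced precisely by that concentration step.

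One claim in your sketch is off. The bound $\|A_o\|_\infty = O(1)$ (and hence $\|A_m\|_\infty = O(1)$) is stronger than what the incoherence estimate actually yields. The Cauchy--Schwarz chain on $A_o = V_o\Lambda_o V_o^T$ gives
\[
|(A_o)_{ij}| \;\le\; \Bigl(\textstyle\sum_k \lambda_k^2 (V_o)_{ik}^2\Bigr)^{1/2}\Bigl(\textstyle\sum_k (V_o)_{jk}^2\Bigr)^{1/2}
\;=\; \|(A V_o)_i\|\cdot\|(V_o)_j\| \;\le\; \sqrt{n}\cdot\frac{1}{\eps\sqrt{n}} \;=\; \frac{1}{\eps},
\]
using $\|A_i\|\le\sqrt{n}$ (bounded entries) and $\|(V_o)_j\|^2 \le n/(\eps n)^2$ from Lemma~\ref{lem:incoherence_bound}. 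So the honest bound is $\|A_m\|_\infty = O(1/\eps)$, not $O(1)$. This does not break the conclusion: in the Tropp bound the entrywise term is multiplied by $\log n/p$ with $p = s/n$ and $s\ge c\log n/(\eps^2\delta)$, giving $(\eps^2\delta n/c)\cdot(1/\eps) = O(\eps n)$. But as stated, the $O(1)$ claim would need a justification it cannot have. A smaller point: the displayed Rudelson--Tropp bound is written a bit loosely; for the rescaled $S$ the $\|A_m\|_2$ term appears without a $\sqrt{\log n / p}$ prefactor, and there is a separate term driven by the maximum row norm of $A_m$ ($\le 2\sqrt n$). With the corrected norm bounds and the corrected form of the concentration inequality, your reconstruction follows the same route as the result being cited.
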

\begin{proof}
This is a rephrasing of Lemma 4 of \cite{bhattacharjee2024sublinear}.  Note that our sampling matrix $S$ already rescales by $n/s$, whereas the matrix in their theorem statement is prior to rescaling.
\end{proof}

\begin{theorem}
\label{thm:main_uniform_sampling_result}
Let $A\in \R^{n\times n}$ be symmetric, not necessarily PSD with all entries bounded by $1.$  Then when $s\geq \frac{c}{\eps^2}\frac{\log n}{\delta})$, observing $S A S^T$ allows recovery of all eigenvalues of $A$ to within $\eps n$ additive error with probability at least $1-\delta.$
\end{theorem}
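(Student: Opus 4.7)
The plan is to combine the outer-eigenvalue machinery developed in the previous sections with the middle-eigenvalue bound of Lemma~\ref{lem:lemma_4_of_bha}, applied to the decomposition $A = A_o + A_m$ from Definition~\ref{def:middle_and_outlying}. Since Theorem~\ref{thm:two_sided_eigenvalue_bound} loses a logarithmic factor, I would take the outlying threshold $\outlyingthresh = c'\eps n/\log(1/\eps)$ for a small absolute constant $c'$, so that $\outlyingthresh\log(n/\outlyingthresh) \leq \eps n/3$ and also $\norm{A_m}{} \leq \outlyingthresh \leq \eps n/3$ simultaneously. This shrinkage costs only a $\polylog(1/\eps)$ factor in the sample complexity demanded by Lemma~\ref{lem:subspace_embedding_from_uniform_sampling}; this is dominated by the $\log n$ coming from Lemma~\ref{lem:lemma_4_of_bha} and can be folded into the constant $c$ in the hypothesis $s \geq c\log n/(\eps^2 \delta)$.

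Concretely, I would carry out three steps. First, invoke Lemma~\ref{lem:subspace_embedding_from_uniform_sampling} (with the rescaled parameter $\eps/\log(1/\eps)$ in place of $\eps$) to conclude that with probability at least $1-\delta/2$, the sampling matrix $S$ satisfies Assumption~\ref{assump:subspace_embedding} at threshold $\outlyingthresh$. On this event, Theorem~\ref{thm:two_sided_eigenvalue_bound} applied to $A_o$ yields $|\lambda_i(SA_oS^T) - \lambda_i(A_o)| \leq \eps n/3$ for every index $i$ with $\lambda_i(A_o)\neq 0$. Second, invoke Lemma~\ref{lem:lemma_4_of_bha} to conclude $\norm{SA_mS^T}{} \leq \eps n/3$ with probability at least $1-\delta/2$. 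A union bound secures both events simultaneously. Third, apply Weyl's inequality twice---once to $SAS^T = SA_oS^T + SA_mS^T$ (using $\norm{SA_mS^T}{}\leq \eps n/3$) and once to $A = A_o + A_m$ (using $\norm{A_m}{}\leq \eps n/3$)---to chain these bounds into $|\lambda_i(SAS^T) - \lambda_i(A)| \leq \eps n$ for each index $i$ where $\lambda_i(A_o) \neq 0$.

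The one remaining subtlety, and what I expect to be the main non-routine step, is handling the indices in the ``middle block'' where $\lambda_i(A_o) = 0$, which must be matched with the $n-k$ zeros that the algorithm appends to the spectrum of $SAS^T$. The key observation is that $\norm{A}{F}^2 \leq n^2$ forces $A_o$ to have rank at most $n^2/\outlyingthresh^2 = \polylog(1/\eps)/\eps^2$, which is far smaller than $k \sim s$; moreover the subspace embedding property preserves this rank, since writing $A_o = VMV^T$ in compact spectral form gives $SA_oS^T = (SV)M(SV)^T$ with $(SV)^T(SV)$ a small perturbation of the identity, and by Sylvester's law of inertia the positive and negative signatures of $SA_oS^T$ therefore match those of $A_o$ exactly. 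Consequently, once the $k$ eigenvalues of $SA_oS^T$ are padded with $n-k$ zeros and sorted, the middle block of zeros aligns with the middle block of zeros of $A_o$ index-by-index, so the additive $\eps n$ error propagates through the final Weyl transfers at every index.
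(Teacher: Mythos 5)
Your proposal follows the paper's proof essentially step for step: decompose $A = A_o + A_m$ via Definition~\ref{def:middle_and_outlying}, get Assumption~\ref{assump:subspace_embedding} from Lemma~\ref{lem:subspace_embedding_from_uniform_sampling}, apply Theorem~\ref{thm:two_sided_eigenvalue_bound} to $A_o$, apply Lemma~\ref{lem:lemma_4_of_bha} to $A_m$, and chain with Weyl. The paper's own proof is a four-sentence sketch, and you are more careful on two points it elides, both of which are genuine. First, at $\outlyingthresh=\eps n$ Theorem~\ref{thm:two_sided_eigenvalue_bound} gives error $\outlyingthresh\log(\norm{A}{}/\outlyingthresh)=\eps n\log(1/\eps)$, not $\eps n$; your shrinking of $\outlyingthresh$ by a $\log(1/\eps)$ factor is the right fix, and as you say it only costs $\polylog(1/\eps)$ in $s$. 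Second, the paper's phrase ``the positive eigenvalues of $SA_oS^T$ are $\eps n$ additive approximations to the eigenvalues of $A_o$'' hides an index-alignment issue once the $k$ eigenvalues are padded to length $n$: one must know the zero block of the padded spectrum lines up with the zero block of $A_o$. Your Sylvester-inertia argument (using that $S$ is a $1/10$-embedding on the full span $V$ of $A_o$'s nonzero eigenvectors, hence $SV$ is injective and $(SV)M(SV)^T$ inherits the signature of $M$) settles this correctly. An alternative without Sylvester: $SA_oS^T$ has at most $p$ positive eigenvalues since $(A_o)_+$ has rank $p$, and for any $i\le p$ with $\lambda_i(SA_oS^T)\le 0$, Lemma~\ref{lem:eigval_lower_bound} forces $\lambda_i(A_o)\le 51\outlyingthresh$, so the padded zero is still within budget; either route closes the gap.

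One small imprecision that does not affect correctness: you argue $SV$ has full column rank by noting $\mathrm{rank}(A_o)\ll k$, but the comparison of $\mathrm{rank}(A_o)$ with $k$ is not what does the work — the $1/10$-distortion subspace embedding guarantee on $V$ already forces $SV$ injective (and in particular forces $k\ge\mathrm{rank}(A_o)$ whenever the embedding succeeds).
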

\begin{proof}
Following \cite{bhattacharjee2024sublinear}, write $A = A_o + A_m$ where $A_o$ has the eigenvalues of $A$ that are at least $\eps n$ in magnitude.  For our sampling matrix $S$, Theorem~\ref{thm:two_sided_eigenvalue_bound} implies that the positive eigenvalues of $SA_o S^T$ are $\eps n$ additive approximations to the eigenvalues of $A_o$. Also Lemma~\ref{lem:lemma_4_of_bha} shows that $\norm{SA_mS^T}{} \leq O(\eps n)$ with $0.9$ probability.  So the result follows from Weyl's inequality.
 \end{proof}

\subsection{Removing $\log n$ factors.}  We show via a surprisingly simple trick that the $\log n$ dependence can be replaced with a $\log\frac{1}{\eps \delta}$ dependence, resolving a question left open by \cite{bhattacharjee2024sublinear}. The idea is that we can directly apply our sampling result to get good eigenvalue approximations by sampling a principal submatrix of dimension $\tilde{O}(\log n/\eps^2).$  The new matrix still has a dimension depending on $n$, but is dramatically smaller.  Thus we may apply our sampling procedure again to reduce the dimensions even further.  It turns out that we can repeat this enough times to remove the $n$ dependence entirely.  Since the sampling procedure at each stage is uniform, the final sample is uniform as well.  So this argument does not yield a new algorithm, but rather shows that the $O(\log n /\eps)$ guarantee can be boostrapped to achieve a tighter sampling bound.

\begin{theorem}
Algorithm~\ref{alg:uniform_sampling} approximates all eigenvalues of $A$ to within $\eps n$ additive error, with probability at least $1-\delta$ when $s = \frac{c}{\eps^2\delta}\log\frac{1}{\eps\delta}.$

\end{theorem}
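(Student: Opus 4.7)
The plan is to bootstrap Theorem~\ref{thm:main_uniform_sampling_result} in order to replace its $\log n$ factor with a $\log\frac{1}{\eps\delta}$ factor. The crucial observation is that a uniform sampling of $A$ with parameter $s_k$ is distributionally equivalent to a composition $S_k \circ S_{k-1} \circ \cdots \circ S_1$ of Bernoulli subsamplers with probabilities $p_t = s_t/s_{t-1}$ (setting $s_0 = n$), since $\prod_t p_t = s_k/n$ and the rescalings $\sqrt{1/p_t}$ multiply to $\sqrt{n/s_k}$. Thus it suffices to analyze a multi-stage iterative procedure whose output coincides with that of Algorithm~\ref{alg:uniform_sampling} at final sample size $s_k$.

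For the iterative analysis, I would set $A^{(0)} = A$ with $m_0 = n$ and $\|A^{(0)}\|_\infty \leq 1$. At step $t$, apply Theorem~\ref{thm:main_uniform_sampling_result} with parameters $(\eps_t, \delta_t)$ to obtain $S_t A^{(t-1)} S_t^T$ of dimension $m_t \approx s_t$ with entries bounded by $m_{t-1}/s_t$; after rescaling by $s_t/m_{t-1}$, the resulting matrix $A^{(t)}$ has bounded entries, and its eigenvalues lie within $\eps_t s_t$ of $(s_t/m_{t-1})$ times those of $A^{(t-1)}$. Choosing $s_{t+1} = \tilde{O}(\log m_t / (\eps_t^2 \delta_t))$ causes the dimensions to contract like iterated logarithms, stabilizing at $s_k = \Theta\bigl(\frac{1}{\eps^2\delta}\log\frac{1}{\eps\delta}\bigr)$ after $k = O(\log^* n)$ steps. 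Setting $\eps_t = \eps/k$ and $\delta_t = \delta/k$ (with the $\poly(\log^* n)$ overhead absorbed into the hidden constant), and telescoping the per-step errors through the cumulative rescaling $\prod_{j \leq t} m_{j-1}/s_j = n/s_t$, gives a total eigenvalue error of $O(\eps n)$ in the original scale, while a union bound gives failure probability $\delta$.

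Since the composed sampling matches Algorithm~\ref{alg:uniform_sampling} with $s = s_k$, this proves the theorem. The main obstacle will be the careful bookkeeping: each stage's error $\eps_t m_{t-1}$ (in the scale of $A^{(t-1)}$'s eigenvalues) must be propagated through the remaining rescalings and unrolled back to the original scale of $A$, and one must handle the fact that the intermediate dimensions $m_t$ are random Binomial quantities. The latter can be dealt with either by conditioning on $m_t = \Theta(s_t)$ (standard Chernoff on Binomial sums) or by fixing $p_t$ deterministically as $s_t/s_{t-1}$ so that the composition exactly coincides with a single invocation of Algorithm~\ref{alg:uniform_sampling} with parameter $s_k$.
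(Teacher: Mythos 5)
Your multi-stage iteration is a genuinely different route from the paper's, and it has a real gap. You iterate Theorem~\ref{thm:main_uniform_sampling_result} for $k = O(\log^* n)$ rounds, splitting the error budget as $\eps_t = \eps/k$, $\delta_t = \delta/k$, and then claim the resulting $\poly(\log^* n)$ blowup in the final sample size ``can be absorbed into the hidden constant.'' It cannot: $\log^* n$ is an unbounded function of $n$ (however slowly growing), so your argument only yields $s = \frac{c\,\poly(\log^* n)}{\eps^2\delta}\log\frac{\log^* n}{\eps\delta}$, which still depends on $n$. The theorem asserts an absolute, $n$-independent constant $c$, and your proposal does not establish that. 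This is not a bookkeeping detail — the accumulated error is $\sum_t \eps_t \, n$ over all $k$ stages, so any equal split forces $\eps_t = \eps/k$ and the last stage's sample size scales as $k^3/(\eps^2\delta)$ times $\log$-terms; no reweighting of the per-stage budgets makes this vanish, because at least one stage must carry an $\Omega(\eps/k)$ share and the iterated-log contraction of $\log m_t$ saturates at $\log\frac{1}{\eps\delta}$, not at $1$.

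The paper avoids this entirely by a self-improvement (minimality/fixed-point) argument using only \emph{two} stages, not $\log^* n$. It defines $N$ to be the minimal coefficient such that $s = N/(\eps^2\delta)$ gives the $(\eps,\delta)$-guarantee, then shows that composing one $(\eps/2,\delta/3)$-round (with the current best $N$) with one round driven by Theorem~\ref{thm:main_uniform_sampling_result} on the $O(s)$-dimensional intermediate matrix yields a valid sampling algorithm whose coefficient is $c\log(s/\eps)$. Minimality then forces $N \le c\log\frac{N}{\eps^3\delta}$, which is a scalar inequality solved by $N \le c'\log\frac{1}{\eps\delta}$. The recursion is closed \emph{once}, with no $\log^*$ accumulation, which is exactly what your proposal is missing. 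Your ``equivalence of composed Bernoulli subsamplers with a single one'' observation is correct and essentially identical to the paper's device of noting $S_2 S$ is itself a sampling matrix; the divergence is entirely in how the recursion is terminated.
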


\begin{proof}

Say that a sampling algorithm has a $(\eps, \delta)$ recovery guarantee, if it approximates all eigenvalues of $A$ to within $\eps n$ additive error with failure probability at most $\delta.$  Fix $n$, and let $N\geq 1$ be minimal such that for all $\eps,\delta$ bounded by a sufficiently small constant, setting $s = N/(\eps^2 \delta)$ in Algorithm~\ref{alg:uniform_sampling} suffices for obtaining an $(\eps, \delta)$ approximation guarantee.  (We already know $N$ can be taken to be $c\log n$ by Theorem~\ref{thm:main_uniform_sampling_result}.)

By our definition of $N$, taking $s = 12N/(\eps^2 \delta)$ suffices to obtain an $(\eps/2, \delta/3)$ guarantee.  Let $S$ be the corresponding sampling matrix.  Note that $S$ takes $s$ samples of expectation, and is unlikely to take many more.  By a Chernoff bound, the probability that $S$ takes at most $2s$ samples is at least $1 - \exp(-s/3).$   Note that $s\geq 1/\delta,$ so $\exp(-s/3) \leq \delta$, when $\delta \leq 1/2.$ Also, note that $SAS^T$ has entries bounded by $n/s$ since the entries of $A$ are assumed to be bounded by $1.$

Now we consider sub-sampling again by a new sampling matrix $S_2$. By Theorem~\ref{thm:main_uniform_sampling_result}, if $S_2$ is a uniform sampling matrix that takes $\frac{c}{\delta\eps^2}\log(s/\eps)$ samples in expectation, then with failure probability at most $\delta/3$, the matrix $S_2 S A S^T S_2^T$ yields an additive 
\[
(\eps/4) (2s) \norm{SAS^T}{\infty} = (\eps/2) s (n/s) = \eps n /2
\] approximation to the spectrum of $S A S^T.$ Combining the bounds, we see that $S_2 S A S^T S_2^T$ gives an additive $\eps n$ spectral approximation to $A$ with failure probability at most $\delta.$

Now, note that $S_2 S$ is itself a sampling matrix that takes 
$\frac{c}{\delta\eps^2}\log(s/\eps)$
samples in expectation and yields an $(\eps,\delta)$ guarantee.  By minimality of $N$, this means that
\[
N \leq c\log\frac{s}{\eps}.
\]
But $s = \frac{N}{\eps^2\delta},$ so in fact
\[
N \leq c\log\frac{N}{\eps^3 \delta},
\]
which implies that $N \leq c\log\frac{1}{\eps \delta}.$  Thus we conclude that a sampling matrix with $s = \frac{c}{\eps^2 \delta} \log \frac{1}{\eps \delta}$ is sufficient to obtain an $(\eps, \delta)$ guarantee.
\end{proof}

\section{Squared row-norm Sampling}
In this section we show how to modify the anlaysis of \cite{bhattacharjee2024sublinear} to obtain improved bounds given access to a squared row-norm sampler.

We use the entry zeroing procedure of \cite{bhattacharjee2024sublinear} (the same procedure used in Algorithm~\ref{alg:row_norm_sampling}), which we now recall.
\begin{Definition}
Given a symmetric matrix $A \in \R^{n\times n}$ let $A'$ be the matrix formed by zeroing out all entries $A_{ij}$ satisfying one of the following conditions.
\begin{enumerate}
\item $i = j$ and $\norm{A_i}{}^2 < \frac{\eps^2}{4}\norm{A}{F}^2$
\item $i \neq j$ and $\norm{A_i}{}^2\norm{A_j}{}^2 < \frac{\eps^2\norm{A}{F}^2 \abs{A_{ij}}^2}{c\log^4 n}$ for an absolute constant $c.$
\end{enumerate}
\end{Definition}
We also use a lemma from \cite{bhattacharjee2024sublinear} arguing that the spectrum of $A'$ is close to the spectrum of $A.$
\begin{lemma}
\label{lem:A_prime_and_A_are_close}
For all $i$, $\abs{\lambda_i(A') - \lambda_i(A)} \leq \eps \norm{A}{F}.$
\end{lemma}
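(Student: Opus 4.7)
The plan is to reduce the eigenvalue statement to an operator norm bound via Weyl's inequality: since $|\lambda_i(A) - \lambda_i(A')| \leq \|A - A'\|_2$ for every $i$, it suffices to show $\|E\|_2 \leq \eps\|A\|_F$ with $E := A - A'$.

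Decompose $E = D + F$ into its diagonal and off-diagonal parts. The diagonal contribution is immediate: any zeroed diagonal entry satisfies $|A_{ii}| \leq \|A_i\|_2 < (\eps/2)\|A\|_F$ by the first clause defining $A'$, so $\|D\|_2 = \max_i|D_{ii}| \leq (\eps/2)\|A\|_F$.

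The off-diagonal bound $\|F\|_2 \leq (\eps/2)\|A\|_F$ is the real content, and I would follow the argument in \cite{bhattacharjee2024sublinear}. Two deterministic structural facts drive that proof. First, combining the zeroing inequality $|A_{ij}|^2 \geq c\log^4(n)\,\|A_i\|_2^2\|A_j\|_2^2/(\eps^2\|A\|_F^2)$ with the pointwise bound $|A_{ij}|^2 \leq \min(\|A_i\|_2^2, \|A_j\|_2^2)$ forces both $\|A_i\|_2^2$ and $\|A_j\|_2^2$ to be at most $\eps^2\|A\|_F^2/(c\log^4 n)$; hence $F$ is supported on the principal submatrix indexed by the ``light-row'' set $T = \{i : \|A_i\|_2^2 \leq \eps^2\|A\|_F^2/(c\log^4 n)\}$, and each entry of $F$ is bounded in magnitude by $\eps\|A\|_F/(\sqrt{c}\,\log^2 n)$. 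Second, summing the zeroing lower bound on $|A_{ij}|^2$ across row $i$ and comparing to $\sum_j A_{ij}^2 \leq \|A_i\|_2^2$ yields the row-wise $\ell^2$ budget $\sum_{j:(i,j)\,\text{zeroed}}\|A_j\|_2^2 \leq \eps^2\|A\|_F^2/(c\log^4 n)$.

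Granting these, $\|F\|_2$ is controlled by a dyadic level-set / Schur-test argument as in \cite{bhattacharjee2024sublinear}: bucket the rows of $T$ according to the dyadic magnitude of $\|A_i\|_2^2$, bound the operator norm of each pairwise bucket-to-bucket block of $F$ by its Frobenius norm using the row-wise $\ell^2$ budget above, and sum the contributions across the $O(\log n)$ dyadic levels. The $\log^4 n$ slack in the zeroing condition is calibrated precisely so that the geometric factors produced by summing across levels are absorbed, leaving $\|F\|_2 \leq (\eps/2)\|A\|_F$. Adding $\|D\|_2 + \|F\|_2 \leq \eps\|A\|_F$ and invoking Weyl completes the proof.

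The main obstacle is the off-diagonal piece; the diagonal bound is one line. Because the lemma is purely a deterministic property of the zeroing rule and no sampling is involved, I would keep the proof short and defer to the calculation in \cite{bhattacharjee2024sublinear} for the operator-norm bound on $F$, emphasizing only the two structural observations above as the conceptual content.
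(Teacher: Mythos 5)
Your proof is correct and takes essentially the same route as the paper, which simply cites this lemma from \cite{bhattacharjee2024sublinear} without reproducing its proof; your reconstruction (Weyl to reduce to $\norm{A-A'}{2}\le\eps\norm{A}{F}$, a one-line diagonal bound from the clause $\norm{A_i}{2}^2<\tfrac{\eps^2}{4}\norm{A}{F}^2$, and the two structural observations --- that zeroed off-diagonal entries force both incident rows to be light, and that the zeroing inequality summed against $\sum_j A_{ij}^2\le\norm{A_i}{2}^2$ gives the row-wise $\ell^2$ budget --- feeding a Schur-test across dyadic levels) is an accurate summary of the argument in the cited source. Since the paper treats the statement as a black box, your additional scaffolding is entirely consistent with it; just note that the precise claim $\le\eps\norm{A}{F}$ (rather than $O(\eps\norm{A}{F})$) relies on the constant $c$ in the zeroing rule being chosen so the off-diagonal part contributes at most $\tfrac{\eps}{2}\norm{A}{F}$, which you correctly flag as the calibration role of the $\log^4 n$ slack.
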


We apply the following bound on the operator norm of the sampled matrix, which is given in the proof of Lemma 14 in \cite{bhattacharjee2024sublinear}.
\begin{lemma}
\label{lem:middle_eigenvalue_tropp_bound}
For $s \geq \frac{\log n}{\eps^2},$ we have the bound
\[
\E_2\norm{S A'_m S}{2} \leq 10 \sqrt{\log n}\E_2 \norm{S H_m \hat{S}}{1\rightarrow 2} + 15\eps\norm{A}{F},
\]
where $H_m$ consists of the off-diagonal entries of $A_m'$, $\E_2(X) = \E(X^2)^{1/2}$ and where $\norm{\cdot}{1\rightarrow 2}$ denotes the maximum $\ell_2$ norm of any column.
\end{lemma}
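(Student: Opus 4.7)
The plan is to decompose $A'_m = D_m + H_m$ into its diagonal and off-diagonal parts and apply the triangle inequality in the $\E_2$ semi-norm:
\[
\E_2\norm{SA'_m S}{2} \leq \E_2\norm{SD_m S}{2} + \E_2\norm{SH_m S}{2}.
\]
For the diagonal piece $SD_m S$, I would use the entry-zeroing rule: condition~1 of the zeroing procedure guarantees that every surviving diagonal entry has $\norm{A_i}{}^2 \geq (\eps^2/4)\norm{A}{F}^2$, which forces $p_i \geq s\eps^2/4$. Since $SD_m S$ is itself diagonal, its operator norm equals the largest rescaled surviving entry $(A'_m)_{ii}/p_i$, and combining this lower bound on $p_i$ with the entrywise bound $|(A'_m)_{ii}| \leq \norm{A'_m}{2} \leq \eps\norm{A}{F}$ (valid by the middle-eigenvalue definition) and $s \geq \log n/\eps^2$ yields a deterministic bound of order $\eps\norm{A}{F}$, contributing to the $15\eps\norm{A}{F}$ additive term once constants are tracked.

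For the off-diagonal piece $SH_m S$, I would apply decoupling for the symmetric zero-diagonal bilinear form. A standard Bernoulli-chaos decoupling inequality gives, for an absolute constant $c_0$,
\[
\E_2\norm{SH_m S}{2} \leq c_0\, \E_2\norm{SH_m\hat{S}}{2},
\]
where $\hat{S}$ is an independent copy of $S$. Conditional on $\hat{S}$, the matrix $SH_m\hat{S}$ is a sum of independent rank-one terms driven by the Bernoulli selectors defining $S$, and applying the noncommutative Khintchine inequality (equivalently matrix Rosenthal in second-moment form) produces
\[
\left(\E\!\left[\norm{SH_m\hat{S}}{2}^2 \mid \hat{S}\right]\right)^{1/2} \leq C\sqrt{\log n}\cdot \norm{SH_m\hat{S}}{1\to 2},
\]
since both matrix-variance proxies reduce to the maximum $\ell_2$ column norm, which is exactly the $1\to 2$ norm. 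Taking $\E_2$ over $\hat{S}$ and absorbing absolute constants into a single leading factor of $10$ gives the first term of the claimed bound.

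The main obstacle will be the decoupling step. Although Bernoulli decoupling for zero-diagonal bilinear forms is classical, executing it cleanly here relies on (a) the zero diagonal of $H_m$ to cancel the ``same-index'' terms in which both copies of $S$ would select the same row, and (b) the earlier reduction to $p_i \leq 1$ for every $i$ (via the row-duplication trick described in the techniques section), so that each sampling bit is a genuinely independent Bernoulli. Once decoupling is in place, the matrix-concentration step is mechanical and the diagonal bound is a routine arithmetic check against the zeroing conditions.
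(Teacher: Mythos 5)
The paper does not actually supply a proof of this lemma; it is imported verbatim from the proof of Lemma~14 in \cite{bhattacharjee2024sublinear}, so there is no ``paper-side'' argument to compare against. Your high-level plan --- split $A_m' = D_m + H_m$, apply a Bernoulli decoupling inequality to the zero-diagonal piece, then invoke a Rudelson--Vershynin/Tropp-style matrix concentration bound with the $1\to 2$ norm as variance proxy --- is the standard route and is almost certainly the structure of the argument being cited. The off-diagonal half of your sketch is sound (including the remark that the row-duplication trick makes all $p_i\le 1$, which is indeed needed for clean decoupling).

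There is, however, a real gap in the diagonal step. You write that ``condition~1 of the zeroing procedure guarantees that every surviving diagonal entry has $\norm{A_i}{}^2 \geq (\eps^2/4)\norm{A}{F}^2$,'' but condition~1 controls the diagonal of $A'$, \emph{not} the diagonal of $A_m'$. These are different matrices: $(A_m')_{ii} = (A')_{ii} - (A_o')_{ii}$, and zeroing $(A')_{ii}$ does not zero $(A_m')_{ii}$. In fact the mismatch is total in exactly the regime you invoke --- after the row-duplication reduction (so that $p_i < 1$ everywhere, hence $s\norm{A_i}{}^2/\norm{A}{F}^2 < 1 < s\eps^2/4$), \emph{every} diagonal entry of $A'$ is zeroed by condition~1, so the entries of $D_m$ are precisely $-(A_o')_{ii}$, and these are the entries for which your argument says nothing: they have small $p_i$, so ``$|(A_m')_{ii}| \le \eps\norm{A}{F}$ divided by a not-too-small $p_i$'' does not apply. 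The correct way to control these entries is the incoherence bound, not the zeroing rule. Writing $(A_o')_{ii} = (V_o)_i\Lambda_o(V_o)_i^T$ and applying Cauchy--Schwarz gives
\[
\abs{(A_o')_{ii}}
\;\le\; \bigl(\textstyle\sum_k\lambda_k^2 V_{ik}^2\bigr)^{1/2}\bigl(\textstyle\sum_k V_{ik}^2\bigr)^{1/2}
\;=\; \norm{(A_o')_i}{}\,\norm{(V_o)_i}{}
\;\le\; \norm{A_i}{}\cdot\frac{\norm{A_i}{}}{\eps\norm{A}{F}}
\;=\; \frac{\norm{A_i}{}^2}{\eps\norm{A}{F}},
\]
using Proposition~\ref{prop:row_norms_stay_bounded} and Lemma~\ref{lem:incoherence_bound}. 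This makes the $\norm{A_i}{}^2$ cancel against $p_i = s\norm{A_i}{}^2/\norm{A}{F}^2$ and yields $\abs{(D_m)_{ii}}/p_i \le \norm{A}{F}/(s\eps) \le \eps\norm{A}{F}/\log n$ for $s\ge\log n/\eps^2$. So the magnitude you claimed is right, but the reason is different from the one you gave: you cannot use the zeroing condition as a lower bound on $p_i$ for the entries that actually populate $D_m$, and need the incoherence/Cauchy--Schwarz route instead.
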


Our main task is to improve the bound on $\E_2 \norm{S H_m \hat{S}}{1\rightarrow 2}$.  To do this, we continue following the proof of \cite{bhattacharjee2024sublinear}, but improve on their variance calculation.

\begin{lemma}
\label{lem:preliminary_variance_bound}
(From \cite{bhattacharjee2024sublinear}) Assume that $p_i \leq 1$ for all $i$. For a fixed $i$, define the random variable $z_j$ to be $\frac{1}{p_j}\abs{A'_{m,i,j}}^2$ with probability $p_j$, and $0$ otherwise.  Then
\[
\Var\left(\sum_{j=1}^n z_j\right)
\leq \sum_{j=1}^n \abs{A'_{m,i,j}}^4 + \sum_{j=1}^n \frac{12\norm{A}{F}^2}{s\norm{A_j}{2}^2}(\abs{A'_{i,j}}^4 + \abs{A'_{o,i,j}}^4)\\
\]
\end{lemma}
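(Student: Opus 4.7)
The plan is to exploit the fact that each $z_j$ depends only on whether row $j$ is sampled, and since rows are sampled independently, the $z_j$'s are mutually independent. Therefore
\[
\Var\left(\sum_{j=1}^n z_j\right) = \sum_{j=1}^n \Var(z_j),
\]
and it suffices to bound each $\Var(z_j)$ separately and sum.

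For each $j$, I would first write $\Var(z_j) \leq \E[z_j^2]$, and compute directly from the definition that
\[
\E[z_j^2] = p_j \cdot \frac{1}{p_j^2}\abs{A'_{m,i,j}}^4 = \frac{1}{p_j}\abs{A'_{m,i,j}}^4.
\]
Next I would do a case split on whether $p_j = 1$ or $p_j < 1$. In the first case, $z_j$ is deterministic, so $\Var(z_j) = 0$, but in any event $\E[z_j^2] = \abs{A'_{m,i,j}}^4$, which fits into the first sum of the claimed bound. In the second case, $p_j = s\norm{A_j}{2}^2/\norm{A}{F}^2 < 1$ by definition, so
\[
\frac{1}{p_j}\abs{A'_{m,i,j}}^4 = \frac{\norm{A}{F}^2}{s\norm{A_j}{2}^2}\abs{A'_{m,i,j}}^4.
\]
To convert $\abs{A'_{m,i,j}}^4$ into the $\abs{A'_{i,j}}^4 + \abs{A'_{o,i,j}}^4$ form of the second sum, I would use $A'_{m,i,j} = A'_{i,j} - A'_{o,i,j}$ together with the elementary inequality $(a-b)^4 \leq 8(a^4 + b^4)$ (which follows from $(a-b)^2 \leq 2(a^2+b^2)$ squared), absorbing $8$ into the claimed constant $12$ to leave slack. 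This gives
\[
\Var(z_j) \leq \frac{12\norm{A}{F}^2}{s\norm{A_j}{2}^2}\bigl(\abs{A'_{i,j}}^4 + \abs{A'_{o,i,j}}^4\bigr)
\]
in the case $p_j < 1$.

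Summing the two cases over all $j$ (the $p_j=1$ indices contribute to the first term and the $p_j<1$ indices contribute to the second), bounding each case by the full sum where needed, yields exactly the stated bound. There is no serious technical obstacle here; the whole argument is a one-line variance-of-a-Bernoulli-scaled-indicator calculation combined with independence. The only place one must be a little careful is the case split on $p_j$, since $1/p_j$ behaves differently in the two regimes and the first sum in the bound is what absorbs the $p_j = 1$ contributions (for which there is no $1/p_j$ blow-up to track).
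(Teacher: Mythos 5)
The lemma is cited from \cite{bhattacharjee2024sublinear} and this paper does not reproduce its proof, so there is no internal proof to compare against; I am judging your argument on its own terms. Your proof is correct. The $z_j$'s are indeed mutually independent (each is a deterministic function of the independent Bernoulli indicator for row $j$), so the variance decomposes as a sum; each summand satisfies $\Var(z_j) \leq \E[z_j^2] = \frac{1}{p_j}\abs{A'_{m,i,j}}^4$; and passing from $A'_m = A' - A'_o$ via $(a-b)^4 \le 8(a^4 + b^4)$ gives the stated bound with room to spare (constant $8$ rather than $12$). This is the standard independence-plus-second-moment calculation and is almost certainly the same route taken in the cited source.

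One small remark: your case split on $p_j = 1$ versus $p_j < 1$ is not actually needed under the lemma's hypothesis. Since the lemma assumes $p_j \le 1$, we have $p_j = s\norm{A_j}{2}^2/\norm{A}{F}^2$ without any truncation, so the identity $1/p_j = \norm{A}{F}^2/(s\norm{A_j}{2}^2)$ holds for all $j$, including those with $p_j = 1$. Your argument then gives
\[
\Var\Bigl(\sum_{j=1}^n z_j\Bigr) \le \sum_{j=1}^n \frac{8\norm{A}{F}^2}{s\norm{A_j}{2}^2}\bigl(\abs{A'_{i,j}}^4 + \abs{A'_{o,i,j}}^4\bigr),
\]
which is strictly stronger than the stated bound (the first sum is superfluous in this regime). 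Retaining the case split is harmless and does give a plausible account of why the cited statement carries both a $\sum_j \abs{A'_{m,i,j}}^4$ term and a slacker constant — in the more general setting of \cite{bhattacharjee2024sublinear}, where $p_j = \min(s\norm{A_j}{2}^2/\norm{A}{F}^2, 1)$ and some $p_j$ may be clamped to $1$, the identity $1/p_j = \norm{A}{F}^2/(s\norm{A_j}{2}^2)$ fails for the clamped indices and the first sum is genuinely needed to absorb them.
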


\begin{lemma}
Suppose that $s\geq c\frac{\log^4 n}{\eps^2}$ for some absolute constant $c$. Fix $i$ with $p_i < 1$ and let $z_j$ be as in the above lemma. We have
\[
\Var\left(\sum_{j=1}^n z_j\right) \leq 2\norm{A_i}{2}^4.
\]
\end{lemma}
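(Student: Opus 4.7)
\bigskip

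\noindent\textbf{Proof Proposal.} The plan is to start from the bound in Lemma~\ref{lem:preliminary_variance_bound} and control each of the four resulting sums separately. Write
\[
\sum_j |A'_{m,i,j}|^4 \;\leq\; 8\sum_j |A'_{i,j}|^4 + 8\sum_j |A'_{o,i,j}|^4,
\]
using $|A'_{m,i,j}|^2 = |A'_{i,j} - A'_{o,i,j}|^2 \leq 2|A'_{i,j}|^2 + 2|A'_{o,i,j}|^2$. This reduces the task to bounding four sums by $O(\|A_i\|^4)$, after which constants can be tightened to give the $2\|A_i\|^4$ bound. Throughout, observe that terms with $p_j = 1$ contribute zero variance, so when we need a uniform upper bound on $\|A_j\|^2$ we may freely assume $p_j < 1$, i.e.\ $\|A_j\|^2 < \|A\|_F^2/s$.

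The first key input is the entry-zeroing condition: for $i\neq j$, whenever $A'_{i,j}\neq 0$ we have
\[
|A'_{i,j}|^2 \;\leq\; \frac{c\log^4 n\,\|A_i\|^2\|A_j\|^2}{\eps^2\|A\|_F^2} \;\leq\; \frac{c\log^4 n\,\|A_i\|^2}{\eps^2 s},
\]
which is $O(\|A_i\|^2)$ under the hypothesis $s\geq c\log^4 n/\eps^2$. Combined with $\|A'_i\|^2 \leq \|A_i\|^2$ (zeroing only shrinks rows), this gives $\sum_j |A'_{i,j}|^4 \leq (\max_j|A'_{i,j}|^2)\cdot\|A'_i\|^2 = O(\|A_i\|^4)$. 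The analogous weighted sum uses the same bound twice, yielding
\[
\sum_j \frac{1}{p_j}|A'_{i,j}|^4 \;\leq\; \frac{c\log^4 n\,\|A_i\|^2}{s\eps^2}\sum_j |A'_{i,j}|^2 \;\leq\; \frac{c\log^4 n\,\|A_i\|^4}{s\eps^2},
\]
which is again $O(\|A_i\|^4)$.

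The second key input is the incoherence bound (Lemma~\ref{lem:incoherence_bound}) for the outer part: writing $A'_o = V_o\Lambda_o V_o^T$, we have $\|V_{o,i}\|^2 \leq \|A_i\|^2/L^2 = \|A_i\|^2/(\eps^2\|A\|_F^2)$. Cauchy--Schwarz on $A'_{o,i,j} = e_i^T V_o\Lambda_o V_o^T e_j$ gives $|A'_{o,i,j}| \leq \|V_{o,i}\|\cdot\|A'_o e_j\| \leq \|V_{o,i}\|\cdot\|A_j\|$ (using Proposition~\ref{prop:row_norms_stay_bounded} for the last step). Hence for $p_j<1$,
\[
|A'_{o,i,j}|^2 \;\leq\; \frac{\|A_i\|^2\|A_j\|^2}{\eps^2\|A\|_F^2} \;\leq\; \frac{\|A_i\|^2}{s\eps^2},
\]
so $\sum_j |A'_{o,i,j}|^4 \leq (\max_j|A'_{o,i,j}|^2)\cdot\|A'_{o,i}\|^2 = O(\|A_i\|^4)$ using $\|A'_{o,i}\|^2 \leq \|A_i\|^2$. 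The weighted sum works out similarly:
\[
\sum_j \frac{1}{p_j}|A'_{o,i,j}|^4 \;\leq\; \frac{\|A\|_F^2\|V_{o,i}\|^2}{s}\sum_j |A'_{o,i,j}|^2 \;\leq\; \frac{\|A_i\|^4}{s\eps^2}.
\]

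Summing the four estimates and plugging back into Lemma~\ref{lem:preliminary_variance_bound} produces a constant multiple of $\|A_i\|^4$; choosing the constant $c$ in $s\geq c\log^4 n/\eps^2$ sufficiently large (to absorb the constants from each piece, including the factors of $8$ and $12$) brings the total below $2\|A_i\|^4$. The main technical point is ensuring that all four sums share the same $s\geq c\log^4 n/\eps^2$ regime; the first two pieces (involving $A'_{i,j}$) are the constraint, as they are the ones where the $\log^4 n$ factor from the zeroing threshold must be absorbed by $s$, while the outer pieces only need $s\gtrsim 1/\eps^2$.
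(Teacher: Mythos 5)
Your proof is correct, and it uses the same central new ingredient as the paper — pulling out the factor $|A'_{o,i,j}|^2 \leq \frac{\|A_i\|^2\|A_j\|^2}{\eps^2\|A\|_F^2}$ so that the $\|A_j\|^2$ cancels the $\frac{1}{p_j}$ weight — but it handles the unweighted first term $\sum_j|A'_{m,i,j}|^4$ by a different (and more roundabout) route. The paper bounds this term in one line via the elementary inequality $\sum_j a_j^4 \leq \bigl(\sum_j a_j^2\bigr)^2$, together with $\|A'_{m,i}\| \leq \|A'_i\| \leq \|A_i\|$ (the row-norm monotonicity that Proposition~\ref{prop:row_norms_stay_bounded} gives), yielding the bound $\|A_i\|^4$ unconditionally — no restriction on $p_j$ and no use of the hypothesis $s \gtrsim \log^4 n/\eps^2$ is needed for this piece. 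You instead expand $A'_m = A' - A'_o$ via the quadratic triangle inequality, which forces you to (a) invoke the observation that indices with $p_j=1$ contribute zero variance so that the sums in Lemma~\ref{lem:preliminary_variance_bound} can be restricted to $p_j<1$, and (b) spend the large-$s$ hypothesis on this term as well. Point (a) is intuitively sound but strictly speaking asks you to reopen the proof of Lemma~\ref{lem:preliminary_variance_bound}, which you are using as a black box; the paper's route avoids the issue entirely. On the plus side, you re-derive the bound $|A'_{o,i,j}|\leq \frac{\|A_i\|\|A_j\|}{\eps\|A\|_F}$ from the incoherence bound (Lemma~\ref{lem:incoherence_bound}), Cauchy--Schwarz, and Proposition~\ref{prop:row_norms_stay_bounded}, where the paper just cites \cite{bhattacharjee2024sublinear}; this makes the argument more self-contained. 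Your final constant tracking (factors of $8$ and $12$ absorbed into $c$) is fine. In short: same key idea, slightly clunkier treatment of the $A'_m$ term, otherwise sound.
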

\begin{proof}

From Lemma~\ref{lem:preliminary_variance_bound},
\begin{align*}
\Var\left(\sum_{j=1}^n z_j\right)
&\leq \sum_{j=1}^n \abs{A'_{m,i,j}}^4 + \sum_{j=1}^n \frac{12\norm{A}{F}^2}{s\norm{A_j}{2}^2}(\abs{A'_{i,j}}^4 + \abs{A'_{o,i,j}}^4)\\
&= \sum_{j=1}^n \abs{A'_{m,i,j}}^4 + 12\sum_{j=1}^n \frac{\norm{A}{F}^2}{s\norm{A_j}{2}^2}\abs{A'_{i,j}}^4 + 12\sum_{j=1}^n \frac{\norm{A}{F}^2}{s \norm{A_j}{2}^2}\abs{A'_{o,i,j}}^4.
\end{align*}

Also from the argument of \cite{bhattacharjee2024sublinear}, we have the bound
\[
\abs{A'_{o,i,j}} \leq \frac{\norm{A_i}{2}\norm{A_j}{2}}{\eps\norm{A}{F}}.
\]
(Note that we take their $\delta$ to be $1$.)

Compared to their argument, our main improvement is that we use the bound 
\[
\abs{A'_{o,i,j}}^4 \leq \frac{\norm{A_i}{2}^2 \norm{A_j}{2}^2}{\eps^2 \norm{A}{F}^2}\abs{A'_{o,i,j}}^2.
\]

This gives
\begin{align*}
\sum_{j=1}^n \frac{\norm{A}{F}^2}{s \norm{A_j}{2}^2}\abs{A'_{o,i,j}}^4
&\leq \sum_{j=1}^n \frac{\norm{A}{F}^2}{s \norm{A_j}{2}^2}\left(\frac{\norm{A_i}{2}^2 \norm{A_j}{2}^2}{\eps^2  \norm{A}{F}^2}\abs{A'_{o,i,j}}^2\right)\\
&= \sum_{j=1}^n \frac{\norm{A_i}{2}^2}{s \eps^2 } \abs{A'_{o,i,j}}^2\\
&= \frac{\norm{A_i}{2}^2}{s\eps^2} \norm{A'_{o,i}}{2}^2\\
&\leq \frac{\norm{A_i}{2}^4}{s\eps^2},
\end{align*}
where in the last line, we used that $\norm{A'_{o,i}}{2}^2 \leq \norm{A'_i}{2}^2 \leq \norm{A_i}{2}^2.$

Again following the proof of \cite{bhattacharjee2024sublinear}, we have
\[
\sum_{j=1}^n \abs{A'_{m,i,j}}^4
\leq \norm{A'_{m,i}}{2}^4
\leq \norm{A'_i}{2}^4
\leq \norm{A_i}{2}^4.
\]

Continuing their proof, the thresholding procedure for $A'$ implies that if $i\neq j$ and $A'_{ij}\neq 0$, then
\[
\frac{\norm{A_j}{2}^2}{\abs{A'_{i,j}}^2}
\geq \frac{\eps^2\norm{A}{F}^2}{c\log^4 n \norm{A_i}{2}^2}.
\]
Therefore,
\[
\sum_{j=1}^n \frac{\norm{A}{F}^2}{s\norm{A_j}{2}^2}\abs{A'_{i,j}}^4
= \sum_{j:A'_{ij}\neq 0} \frac{\norm{A}{F}^2}{s\norm{A_j}{2}^2}\abs{A'_{i,j}}^4
\leq \sum_{j=1}^n \frac{c\log^4 n \norm{A_i}{2}^2}{s\eps^2}\abs{A'_{ij}}^2
\leq \frac{c\log^4 n \norm{A_i}{2}^4}{s\eps^2}.
\]
Combining the various bounds gives
\[
\Var\left(\sum_{j=1}^n z_j\right)
\leq \norm{A_i}{2}^4 + \frac{c\log^4 n\norm{A_i}{2}^4}{s\eps^2}.
\]
Thus for $s\geq c\frac{\log^4 n}{\eps^2}$ we have
\[
\Var\left(\sum_{j=1}^n z_j\right) \leq 2\norm{A_i}{2}^4,
\]
as desired.

\end{proof}

With this improved variance bound, we obtain our improved middle eigenvalue bound by continuing the proof of Lemma 14 from \cite{bhattacharjee2024sublinear}.

\begin{lemma}
\label{lem:frobenius_middle_eigenvalue_bound}
We have the operator norm bound $\norm{S A_m' S^T}{} \leq \eps \norm{A}{F}$ with $2/3$ probability provided that $s \geq c\frac{\log^4 n}{\eps^2}$ and $p_i := \frac{s\norm{A_i}{2}}{\norm{A}{F}} \leq 1$ for all $i.$
\end{lemma}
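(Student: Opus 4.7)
The plan is to substitute the variance bound of the previous lemma into the remainder of the proof of Lemma~14 in \cite{bhattacharjee2024sublinear}. By Lemma~\ref{lem:middle_eigenvalue_tropp_bound}, it suffices to show $10\sqrt{\log n}\,\E_2 \norm{SH_m\hat S}{1\to 2} \leq O(\eps\norm{A}{F})$, i.e.\ $\E \norm{SH_m\hat S}{1\to 2}^2 \leq O(\eps^2\norm{A}{F}^2/\log n)$.  After unrolling definitions, the squared $\ell_2$ norm of the $i$-th column of $SH_m$ is exactly $X_i := \sum_j z_j$ with $z_j$ as in the previous two lemmas, so $\E X_i \leq \norm{A_i}{}^2$ and $\Var X_i \leq 2\norm{A_i}{}^4$.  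The matrix $\hat S$ then selects and rescales columns, and it is enough to control a bound of the form $\max_i X_i$ (possibly with a mild $1/p_i$ rescaling absorbed into the sampling probabilities).

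The first observation is that the hypothesis $p_i\leq 1$ combined with $s \geq c\log^4 n/\eps^2$ gives the uniform row-norm bound
\[
\max_i\norm{A_i}{}^2 \leq \norm{A}{F}^2/s \leq \eps^2\norm{A}{F}^2/(c\log^4 n),
\]
so $\max_i \E X_i$ is already much smaller than the target $\eps^2\norm{A}{F}^2/\log n$.  The bulk of the work is to control the deviations $\max_i |X_i - \E X_i|$ uniformly in $i$.  I would apply a Rosenthal-type (or Bernstein) $q$-th moment inequality to each $X_i$, using the variance bound $\Var X_i \leq 2\norm{A_i}{}^4$ and the elementwise bound $|z_j| \leq O(\norm{A_i}{}^2)$ already derived in the proof of the previous lemma, and then take a union bound with $q = \Theta(\log n)$.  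The resulting estimate is of order $\E[\max_i|X_i - \E X_i|] \leq O(\eps^2\norm{A}{F}^2/\log^{3/2}n)$, which is below the target by a $\sqrt{\log n}$ margin.

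Combining the two bounds yields $\E\norm{SH_m\hat S}{1\to 2}^2 \leq O(\eps^2\norm{A}{F}^2/\log n)$, so Lemma~\ref{lem:middle_eigenvalue_tropp_bound} gives $\E_2\norm{SA_m'S}{2} \leq O(\eps\norm{A}{F})$, and Markov's inequality applied to the squared quantity produces the required $2/3$-probability bound $\norm{SA_m'S}{2} \leq \eps\norm{A}{F}$ after absorbing constants into the lower bound on $s$.

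The main obstacle is to execute the moment / union-bound step tightly enough that the $\sqrt{\log n}$ blow-up from Lemma~\ref{lem:middle_eigenvalue_tropp_bound} does not wipe out the $1/\eps^6$ improvement in the variance bound that the previous lemma achieves.  This is precisely where the $p_i \leq 1$ hypothesis is essential: through $p_i = s\norm{A_i}{}^2/\norm{A}{F}^2$, it forces each $\norm{A_i}{}^2$ to be small relative to $\norm{A}{F}^2$, and this is exactly the slack that keeps the maximum column norm of $SH_m\hat S$ at the scale $\eps\norm{A}{F}/\sqrt{\log n}$ required to match the Tropp-type bound.
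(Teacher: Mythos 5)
The proposal is correct and follows essentially the same route as the paper: concentration of the column norms $X_i$ using the improved variance bound plus the elementwise bound $|z_j|\le 2\norm{A_i}{}^2$, a union bound over columns, substitution into Lemma~\ref{lem:middle_eigenvalue_tropp_bound}, and Markov. The only cosmetic differences are that you invoke a Rosenthal-type $q$-th-moment argument with $q=\Theta(\log n)$ where the paper applies Bernstein's tail inequality and then converts to an $\E_2$ bound by conditioning on a benign event, and you should be a touch more explicit that the quantity in $\norm{\cdot}{1\to 2}$ is $\frac{1}{p_i}X_i$ rather than $X_i$ (the rescaled mean $\frac{1}{p_i}\E X_i \le \norm{A}{F}^2/s$ is the bound actually needed); neither changes the substance of the argument.
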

\begin{proof}
In their Lemma 14, \cite{bhattacharjee2024sublinear} shows that $\abs{z_j} \leq 2\norm{A_i}{2}^2$ for $s\geq c\frac{\log^4 n}{\eps^2}$ if $p_i < 1.$  Using our improved variance bound above, and applying Bernstein's inequality as in \cite{bhattacharjee2024sublinear} gives
\[
\Pr(\norm{(SA_m')_{:,i}}{2}^2 \geq \E\norm{(SA'_m)_{:,i}}{2}^2 + t)
\leq \left( \sum_{j=1}^n z_j \geq \norm{A_i}{}^2 + t\right)
\leq \exp\left(c\frac{-t^2}{\norm{A_i}{2}^4 + t\norm{A_i}{2}^2}\right).
\]
Then setting $t = c\log n\norm{A_i}{2}^2$ gives
\[
\Pr(\norm{(SA_m')_{:,i}}{2}^2 \geq \E\norm{(SA'_m)_{:,i}}{2}^2 + c\log n\norm{A_i}{2}^2)
\leq 1/n^4,
\]
for appropriate constants. It then follows that
\[
\frac{1}{p_i}\norm{(SH_m)_{:,i}}{}^2 \leq \frac{1}{p_i}\norm{(SA'_m)_{:,i}}{}^2
\leq \frac{1}{p_i}(c\log n \norm{A_i}{2}^2)
\leq \frac{\eps^2 \norm{A}{F}^2}{\log n},
\]
for $s\geq c\frac{\log^4 n}{\eps^2}$ with failure probability at most $1/n^4.$  Thus with failure probability at most $1/n^3$ this bound holds for all $i$ simultaneously.

In order to apply this, we would instead like a second moment bound.  Let $E$ be the event that all $p_i$ corresponding to sampled rows are bounded by $1/n^2.$  Note that $E$ occurs with probability at least $1-1/n.$ From now on we condition on $E$ with only a $1/n$ loss in probability.  Then we have
\[
\frac{1}{\sqrt{p_i}}\norm{(SH_m)_{:,i}}{} \leq n\norm{A}{F}
\]
deterministically, and our bound above then implies that $\E_2 \norm{S H_m \hat{S}}{1\rightarrow 2}$ is bounded by $\eps \norm{A}{F}.$

Finally plugging into Lemma~\ref{lem:middle_eigenvalue_tropp_bound} we have
\[
\E_2 \norm{SA'_m S}{2} \leq c \eps \norm{A}{F},
\]
and the desired bound follows from Markov's inequality.
\end{proof}

This is sufficient to show correctness of the ``restricted" sampling algorithm, Algorithm~\ref{alg:restricted_row_norm_sampling}.

\begin{algorithm}
\caption{Restricted Row-norm Sampling Algorithm}
\label{alg:restricted_row_norm_sampling}
\begin{enumerate}
\item Input: Symmetric matrix $A\in \R^{n\times n}$ along with its row norms, sample size $s$.  The inputs are restricted so that $A$ and $s$ must satisfy $\frac{s\norm{A_i}{}^2}{\norm{A}{}^2} \leq 1$ for all $i.$

\item Let $S\in \R^{k\times n}$ be a (rescaled) sampling matrix which samples each row $i$ of $A$ independently with probability $\frac{s\norm{A_i}{}^2}{\norm{A}{}^2}.$

\item Implicitly form the matrix $A' \in \R^{n\times n}$ defined by
\[(A')_{ij} = 
\begin{cases} 
      0 & i=j\,\,\text{and}\,\,\norm{A_i}{2}^2 \leq \frac{\eps^2}{4}\norm{A}{F}^2 \\
      0 & i\neq j\,\,\text{and}\,\,\norm{A_i}{2}^2\norm{A_j}{2}^2 \leq \frac{\eps^2 \norm{A}{F}^2\abs{A_{ij}}^2}{c\log^4 n}\\
      A_{ij} & \text{otherwise}
\end{cases}
\]

\item Return the $k$ eigenvalues of $S^T A' S$, along with $n-k$ additional $0$'s.
\end{enumerate}
\end{algorithm}

\begin{lemma}
\label{lem:restricted_sampling_correctness}
Suppose that $s\geq c\frac{\log^4 n}{\eps^2}\log^2\frac{1}{\eps}$ and that $A$ and $s$ satisfy the condition of Algorithm~\ref{alg:restricted_row_norm_sampling} that the row sampling probabilities $p_i$ are bounded by $1.$  Then with $2/3$ probability, Algorithm~\ref{alg:restricted_row_norm_sampling} returns an additive $\eps\norm{A}{F}$ approximation the spectrum of $A.$
\end{lemma}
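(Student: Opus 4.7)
The plan is to mirror the proof of Theorem~\ref{thm:main_uniform_sampling_result} in the squared row-norm sampling setting, combining the outlying eigenvalue bound of Theorem~\ref{thm:two_sided_eigenvalue_bound} with the middle eigenvalue bound of Lemma~\ref{lem:frobenius_middle_eigenvalue_bound}. By Lemma~\ref{lem:A_prime_and_A_are_close} the spectra of $A$ and the zeroed matrix $A'$ differ by at most $\eps\norm{A}{F}$ per eigenvalue, so it suffices to approximate the eigenvalues of $A'$ to within $O(\eps\log(1/\eps))\norm{A}{F}$; the extra $\log(1/\eps)$ is absorbed by rescaling $\eps$ at the end by a $1/\log(1/\eps)$ factor, which is exactly what produces the $\log^2(1/\eps)$ overhead in the stated sample bound.

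Decompose $A' = A'_o + A'_m$ as in Definition~\ref{def:middle_and_outlying} with threshold $\outlyingthresh = \eps\norm{A}{F}$, so that $SA'S^T = SA'_oS^T + SA'_mS^T$. For the outlying part, apply Lemma~\ref{lem:subspace_embedding_from_row_norm_sampling} with constant $\delta$ to certify that $S$ satisfies Assumption~\ref{assump:subspace_embedding} for $A'$ at threshold $\outlyingthresh$; this only costs $O(\log(1/\eps)/\eps^2)$ samples and is thus dominated by the middle-part requirement below. Theorem~\ref{thm:two_sided_eigenvalue_bound} then yields
\[
\lvert\lambda_i(SA'_oS^T) - \lambda_i(A'_o)\rvert \leq \outlyingthresh\log\tfrac{\opnorm{A'_o}}{\outlyingthresh} \leq c\eps\log(1/\eps)\norm{A}{F},
\]
using $\opnorm{A'_o} \leq \norm{A'}{F} \leq \norm{A}{F}$. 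For the middle part, the hypothesis $p_i \leq 1$ permits invoking Lemma~\ref{lem:frobenius_middle_eigenvalue_bound}, giving $\opnorm{SA'_mS^T} \leq \eps\norm{A}{F}$ with probability $2/3$ provided $s\geq c\log^4 n/\eps^2$; this is the dominant contribution to the sample-size bound.

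The conclusion follows from two applications of Weyl's inequality: $\lvert\lambda_i(SA'S^T) - \lambda_i(SA'_oS^T)\rvert \leq \opnorm{SA'_mS^T}$ and $\lvert\lambda_i(A') - \lambda_i(A'_o)\rvert \leq \opnorm{A'_m} \leq \outlyingthresh$, together with the triangle inequality and a union bound over the two constant-probability failure events. This yields
\[
\lvert\lambda_i(SA'S^T) - \lambda_i(A)\rvert \leq c\eps\log(1/\eps)\norm{A}{F}
\]
for every $i$, and the claimed $\eps\norm{A}{F}$ bound follows after rescaling $\eps$ by $1/c\log(1/\eps)$. All the analytic heavy lifting is already contained in the preceding lemmas; the only real subtlety is orchestrating the logarithmic factors so that the final sample-size requirement matches the stated $c\log^4 n \log^2(1/\eps)/\eps^2$ bound, which is exactly the $\log^2(1/\eps)$ price paid for the rescaling step.
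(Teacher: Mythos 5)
Your proof is correct and follows essentially the same route as the paper: decompose $SA'S^T = SA'_oS^T + SA'_mS^T$, certify Assumption~\ref{assump:subspace_embedding} via Lemma~\ref{lem:subspace_embedding_from_row_norm_sampling}, control the outlying part with Theorem~\ref{thm:two_sided_eigenvalue_bound}, control the middle part with Lemma~\ref{lem:frobenius_middle_eigenvalue_bound}, relate $A'$ to $A$ via Lemma~\ref{lem:A_prime_and_A_are_close}, and combine with Weyl's inequality. The only cosmetic difference is that you set $\outlyingthresh = \eps\norm{A}{F}$ and rescale $\eps$ at the end to absorb the $\log\frac{1}{\eps}$ from the upper-bound lemma, whereas the paper sets $\outlyingthresh = \frac{\eps}{\log(1/\eps)}\norm{A}{F}$ up front; both produce the identical $\log^2\frac{1}{\eps}$ overhead in $s$.
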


\begin{proof}
As discussed above, write 
\[ 
S A' S^T = S A'_o S^T + SA'_m S^T.
\]
By Lemma~\ref{lem:subspace_embedding_from_row_norm_sampling}, Assumption~\ref{assump:subspace_embedding} holds for $A'_o$ and $S$ with $\outlyingthresh=\frac{\eps}{\log\frac{1}{\eps}}\norm{A}{F}.$  As a result, by Theorem~\ref{thm:two_sided_eigenvalue_bound},  the eigenvalues of $SA'_o S^T$ are an additive $c\eps\norm{A}{F}$ approximation to the eigenvalues of $A'_o$, which are in turn additive $\eps\norm{A}{F}$ approximations to the eigenvalues of $A_o$ by Lemma~\ref{lem:A_prime_and_A_are_close}.

Also by Lemma~\ref{lem:frobenius_middle_eigenvalue_bound} we have $\norm{SA'_m S^T}{} \leq c\eps\norm{A}{F}$, so correctness follows from Weyl's inequality \cite{weyl1912asymptotic}.
\end{proof}

Finally it remains to deal with the situation where the restricted condition of Algorithm~\ref{alg:restricted_row_norm_sampling} is not met.  The idea is simple -- for each row with $p_i$ larger than $1$, we simple split that row into several scaled-down rows.  This will not affect the spectrum, but will allow the condition of Algorithm~\ref{alg:restricted_row_norm_sampling} to be met.  In fact, for simplicity we split all rows in this way although this is not strictly necessary.

\begin{theorem}
\label{thm:main_row_norm_sampling_result}
Algorithm~\ref{alg:row_norm_sampling} yields an $\eps \norm{A}{F}$ additive approximation to the spectrum of $A$ with $2/3$ probability when $s\geq \frac{c\log^4 n}{\eps^2}\log^2\frac{1}{\eps}.$
\end{theorem}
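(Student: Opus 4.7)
The plan is to lift Lemma~\ref{lem:restricted_sampling_correctness} to the general case via the row-duplication trick sketched informally in the introduction.  The only obstruction to applying the restricted algorithm directly to $A$ is that some sampling probabilities $p_i = s\norm{A_i}{}^2/\norm{A}{F}^2$ may exceed $1$, in which case the Bernoulli sampling step of Algorithm~\ref{alg:restricted_row_norm_sampling} is ill-defined.  I would therefore enlarge $A$ into a matrix $\tilde A$ whose rows are uniformly shrunk so that the new sampling probabilities are at most $1$, while preserving both the nonzero spectrum and the Frobenius norm, and then appeal to the restricted algorithm on $\tilde A$.

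Concretely, let $P\in\R^{sn\times n}$ be the vertical stacking of $s$ copies of $\frac{1}{\sqrt{s}} I_n$ and define $\tilde A := PAP^T\in\R^{sn\times sn}$.  Since $P^T P = I_n$, the nonzero spectrum of $\tilde A$ coincides with that of $A$, and $\norm{\tilde A}{F} = \norm{A}{F}$.  Each row of $\tilde A$ has norm $\norm{A_i}{}/\sqrt{s}$ for some $i\in [n]$, so the corresponding per-row sampling probability on $\tilde A$ becomes $\tilde p_i = \norm{A_i}{}^2/\norm{A}{F}^2 \leq 1$, making $\tilde A$ an admissible input to Algorithm~\ref{alg:restricted_row_norm_sampling}.

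Next I would verify that Algorithm~\ref{alg:restricted_row_norm_sampling} on $\tilde A$ produces a sketch identically distributed to the output of Algorithm~\ref{alg:row_norm_sampling} on $A$.  Independently sampling each of the $s$ copies of row $i$ with probability $\norm{A_i}{}^2/\norm{A}{F}^2$ yields a total count of $\mathrm{Binomial}(s,\norm{A_i}{}^2/\norm{A}{F}^2)$, exactly matching Algorithm~\ref{alg:row_norm_sampling}, and each sampled copy contributes $\tfrac{1}{\sqrt{\tilde p_i}}\cdot\tfrac{1}{\sqrt{s}} A_i = A_i/\sqrt{p_i}$, matching the rescaling.  For the zeroing step, note that $\norm{\tilde A_k}{}^2 = \norm{A_i}{}^2/s \leq \norm{A}{F}^2/s \leq (\eps^2/4)\norm{A}{F}^2$ whenever $s\geq 4/\eps^2$, so Algorithm~\ref{alg:restricted_row_norm_sampling} unconditionally zeroes every diagonal of $\tilde A$; and its off-diagonal threshold translates through the $1/s$ entry scaling into Algorithm~\ref{alg:row_norm_sampling}'s off-diagonal threshold, modulo replacing $\log^4 n$ by $\log^4(sn) = O(\log^4 n)$ in the regime $s = \tilde O(1/\eps^2)$.

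Applying Lemma~\ref{lem:restricted_sampling_correctness} to $\tilde A$ with sample parameter $s$ then gives an additive $\eps\norm{\tilde A}{F} = \eps\norm{A}{F}$ approximation to the spectrum of $\tilde A$, and hence to the spectrum of $A$ (since the two coincide up to trailing zeros).  The main obstacle is the bookkeeping around the ``within-block'' off-diagonal entries of $\tilde A$ --- entries $\tilde A_{(i,\alpha),(i,\beta)}$ with $\alpha\neq\beta$, all equal to $A_{ii}/s$ --- which Algorithm~\ref{alg:row_norm_sampling} implicitly removes via its unconditional diagonal zeroing of $A$ but which Algorithm~\ref{alg:restricted_row_norm_sampling} may keep or discard depending on the generic off-diagonal threshold.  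I would handle this by running a slightly strengthened version of Algorithm~\ref{alg:restricted_row_norm_sampling} on $\tilde A$ that also zeroes these within-block entries, and verifying that this extra zeroing only makes Lemmas~\ref{lem:subspace_embedding_from_row_norm_sampling} and~\ref{lem:frobenius_middle_eigenvalue_bound} easier to satisfy (the former because zeroing entries only shrinks leverage scores, the latter because the variance bounds from \cite{bhattacharjee2024sublinear} only use upper bounds on $|A'_{ij}|$), so that the overall argument goes through unchanged.
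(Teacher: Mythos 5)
Your reduction via $\tilde A = PAP^T$ with $P$ a stack of $s$ rescaled identities is exactly the paper's construction ($U$ there), and the core facts you verify — preservation of the nonzero spectrum and Frobenius norm, the $1/\sqrt{s}$ row-norm scaling that brings the sampling probabilities below $1$, the Binomial correspondence, and the invariance of the off-diagonal zeroing threshold under $U$ — are all the same as in the paper's proof. You also go further than the paper in explicitly flagging the within-block off-diagonal entries $\tilde A_{(i,\alpha),(i,\beta)} = A_{ii}/s$ as a subtlety; the paper addresses this only with the remark that the diagonal zeroing condition on $\tilde A$ is automatically satisfied for $s$ large, which speaks only to the true diagonal of $\tilde A$, not to these within-block off-diagonals.

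However, your proposed fix creates a new gap. You propose running a strengthened Algorithm~\ref{alg:restricted_row_norm_sampling} that unconditionally zeroes all within-block off-diagonals of $\tilde A$, and you verify that this only helps Lemma~\ref{lem:subspace_embedding_from_row_norm_sampling} and Lemma~\ref{lem:frobenius_middle_eigenvalue_bound}. But you do not check the analogue of Lemma~\ref{lem:A_prime_and_A_are_close}: that the extra zeroing shifts the spectrum of $\tilde A$ by only $O(\eps\norm{A}{F})$. It does not. The zeroed perturbation is block-diagonal with $i$-th block $(A_{ii}/s)(\mathbf{1}_s\mathbf{1}_s^T - I_s)$, whose operator norm is $|A_{ii}|(s-1)/s$, and $\max_i |A_{ii}|$ can be as large as $\norm{A}{F}$. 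For the concrete instance $A = e_1 e_1^T$ (so $\norm{A}{F} = 1$, $\lambda_1(A) = 1$), the within-block off-diagonals of $\tilde A$ survive the threshold of Algorithm~\ref{alg:restricted_row_norm_sampling}, and it is precisely these entries that reconstruct the top eigenvalue: the sketch $\tilde S \tilde A' \tilde S^T$ has top eigenvalue $(s-1)/s \approx 1$. Your additional zeroing kills them and collapses the estimate to $0$, giving error $\approx 1 \gg \eps\norm{A}{F}$. The paper's proof avoids this by applying Lemma~\ref{lem:restricted_sampling_correctness} directly to $\tilde A$ under the unmodified zeroing rule of Algorithm~\ref{alg:restricted_row_norm_sampling}, treating its behavior on within-block entries (which need not exactly reproduce Algorithm~\ref{alg:row_norm_sampling}'s unconditional diagonal zeroing of $A$) as part of the algorithm's definition rather than something to be forced into agreement. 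In other words, the correct move is to keep what Algorithm~\ref{alg:restricted_row_norm_sampling} keeps, not to force agreement with the unconditional zeroing by discarding more.
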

\begin{proof}
We consider an algorithm which will be equivalent to Algorithm~\ref{alg:row_norm_sampling}.
Set $p_i = \frac{s\norm{A_i}{2}^2}{\norm{A}{F}^2}.$  Note that $p_i\leq s$ for all $i$, so it will suffice to ``split" each row into roughly $s$ smaller rows.  More formally, let $U$ be a vertical stack of consisting of $s$ copies of $\frac{1}{\sqrt{s}} I_n.$  Note that $U$ has orthonormal columns and that the image of $U^T$ is $\R^n.$   This means that the nonzero spectrum of $U A U^T$ coincides with the nonzero spectrum of $A.$  In particular we also have $\norm{A}{F} = \norm{UAU^T}{F}.$ Note that the norm of row $i$ of $UAU^T$ is $1/\sqrt{s^2}$ times the norm of a corresponding row $i'$ of $A.$

Thus for all $i,$
\[
\frac{s\norm{(UAU^T)_i}{2}^2}{\norm{UAU^T}{F}^2}
\leq \frac{1}{s}\frac{s\norm{A_{i'}}{2}^2}{\norm{A}{F}^2}
= \frac{\norm{A_{i'}}{2}^2}{\norm{A}{F}^2}
\leq 1.
\]

Thus Lemma~\ref{lem:restricted_sampling_correctness} establishes correctness of the sampling procedure for $UAU^T$ (which has the same nonzero spectrum of $A$).  In order to achieve $\eps\norm{UAU^T}{F} = \eps\norm{A}{F}$ additive error, we need
$s \geq \frac{c\log^4(ns^2)}{\eps^2},$ which is achieved for $s\geq \frac{c\log^4 n}{\eps^2}$ (possibly after adjusting the constant $c$).

Now to simulate Algorithm~\ref{alg:restricted_row_norm_sampling} on $UAU^T$, note that we can simply sample $\text{Binomial}(s, \frac{\norm{A_i}{2}^2}{\norm{A}{F}^2})$ copies of each index $i$, as is done in Algorithm~\ref{alg:row_norm_sampling}.

We finally note that for $s\geq c/\eps^4$ that the diagonal zeroing condition for $UAU^T$ in Algorithm~\ref{alg:row_norm_sampling} is always satisfied.  Moreover the condition for zeroing out off-diagonal entries is unchanged since applying $U$ to $A$ scales row norms by $1/\sqrt{s}$ but entries by $1/s.$
\end{proof}

\section{Additional Applications}

\subsection{Application to Sketching}

\cite{swartworth2023optimal} showed that a sketching dimension of $\frac{1}{\eps^2}$ is sufficient to approximate all eigenvalues of $A$ to within $\eps \norm{A}{F}$ additive error.  Unfortunately, the sketch consisted of a dense Gaussian matrix, which means that it requires roughly $n^2/\eps^2$ time to apply to a dense matrix $A$\footnote{Some speedup is possible by using fast matrix multiplication, although these algorithms are impractical for reasonably sized inputs.}.  As a consequence of our sampling results, we show that it is possible to match the optimal sketching dimension with a sketch that can be applied in time linear in the number of entries of $A.$  We observe that conjugating by a Hadamard matrix is sufficient to flatten the row norms, so that our row-norm sampling procedure reduces to uniform sampling.  The idea of applying Hadamard matrices to improve the runtimes of sketches is well-known in the literature (see for example \cite{woodruff2014sketching,tropp2011improved}).  Importantly, if $H$ is Hadamard,  then a matrix-vector product $Hv$ can be carried out in $O(n\log n)$ time. Similarly if $A\in\R^{n\times n}$ is a matrix, then $HA$ can be carried out in $O(n^2\log n)$ by applying $H$ columnwise.

\begin{theorem}
There is a bilinear sketch $\R^{n\times n} \rightarrow \R^{k\times k}$ with $k=O(1/\eps^2)$ that can be applied in $O(n^2)$ time, and allows all eigenvalues of $A$ to be recovered to within $\eps \norm{A}{F}$ additive error.
\end{theorem}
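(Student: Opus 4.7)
The plan is to reduce to our squared-row-norm sampling result (Theorem~\ref{thm:main_row_norm_sampling_result}) by conjugating $A$ with a fast, spectrum-preserving orthogonal rotation that flattens its row norms, so that uniform sampling behaves like squared-row-norm sampling up to $\polylog(n)$ factors. Concretely, let $D$ be a diagonal matrix of i.i.d.\ uniform $\pm 1$ signs, let $H$ be the normalized $n \times n$ Walsh--Hadamard matrix (padding $n$ to a power of $2$ if necessary), and set $U = HD$. Define $B := U A U^T$; since $U$ is orthogonal, $\lambda_i(B) = \lambda_i(A)$ and $\norm{B}{F} = \norm{A}{F}$. The sketch matrix is $V := U^T S^T \in \R^{n\times k}$, where $S$ is a uniform $k \times n$ principal-submatrix sampler with $k = \tilde{O}(1/\eps^2)$. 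The bilinear sketch is then $V^T A V = S (UAU^T) S^T = S B S^T$, a random $k \times k$ principal submatrix of $B$.

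\paragraph{Row-norm flattening.} The crucial technical step is showing that with high probability every row of $B$ has squared norm $\norm{B_i}{}^2 \leq \tilde{O}(\norm{A}{F}^2 / n)$. For fixed $i$, one has $\norm{B_i}{}^2 = v_i^T (AA^T) v_i$, where $v_i = U^T e_i$ has entries of magnitude $1/\sqrt{n}$ whose signs are independent (coming from $D$ together with the deterministic signs of the $i$-th column of $H$). The expectation is $\tr(AA^T)/n = \norm{A}{F}^2/n$, and the Hanson--Wright inequality controls the deviation: using $\norm{AA^T}{F}^2 \leq \norm{A}{F}^4$ and $\norm{AA^T}{} \leq \norm{A}{F}^2$, a deviation of size $C (\norm{A}{F}^2/n)\log n$ fails with probability at most $n^{-10}$ for large enough $C$. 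A union bound over $i \in [n]$ completes the flattening with probability $\geq 1 - 1/n^9$.

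\paragraph{Reducing to row-norm sampling.} With row norms flattened, uniform sampling of $B$ plays the same role as row-norm sampling. By the incoherence bound (Lemma~\ref{lem:incoherence_bound}), the leverage scores of any eigenspace $V_{\geq \lambda}$ of $B$ are bounded by $\norm{B_i}{}^2/\lambda^2 \leq \tilde{O}(1/(n\eps^2))$ whenever $\lambda \geq L := \eps\norm{B}{F} = \eps\norm{A}{F}$. Thus Theorem~\ref{thm:lev_score_sampling} shows that a uniform sampling rate of $s/n$ with $s = \tilde{O}(1/\eps^2)$ yields the $L/\lambda$-distortion subspace embeddings demanded by Assumption~\ref{assump:subspace_embedding} for $B$, mirroring the argument of Lemma~\ref{lem:subspace_embedding_from_row_norm_sampling}. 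The outlying-eigenvalue bound then follows from Theorem~\ref{thm:two_sided_eigenvalue_bound}. The middle-eigenvalue bound and the entry-zeroing step used in Algorithm~\ref{alg:row_norm_sampling} carry over as in the proof of Theorem~\ref{thm:main_row_norm_sampling_result}, using that uniform probabilities dominate the row-norm probabilities on $B$ up to $\polylog(n)$ factors; Lemma~\ref{lem:A_prime_and_A_are_close} handles the small spectral perturbation from zeroing. Combining with Weyl's inequality gives additive $\eps\norm{A}{F}$ approximation to $\lambda_i(B) = \lambda_i(A)$.

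\paragraph{Runtime and main obstacle.} Computing $B = UAU^T$ naively costs $\Theta(n^3)$, but $H$ admits an $O(n\log n)$ matrix-vector product via the fast Walsh--Hadamard transform, so applying $U$ to each of the $n$ columns of $A$ and then $U^T$ to each of the $n$ rows of the result yields $B$ in total $O(n^2 \log n) = \tilde{O}(n^2)$ time. Reading off the $k\times k$ principal submatrix and performing the post-processing is then $O(k^2 + n)$. The main obstacle I anticipate is the Hanson--Wright-style concentration of the quadratic forms $v_i^T (AA^T) v_i$ uniformly over all $i$; once that flattening is in hand, everything else reduces cleanly to a black-box invocation of Theorem~\ref{thm:main_row_norm_sampling_result} on $B$, with the row-norm sampling probabilities replaced by the (at most $\polylog(n)$ times larger) uniform probabilities.
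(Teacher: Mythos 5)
Your proposal follows essentially the same route as the paper: conjugate $A$ by $U = HD$ (Hadamard times random signs), use Hanson--Wright to show the rows of $B = UAU^T$ all have squared norm $\tilde{O}(\norm{A}{F}^2/n)$, and then invoke the row-norm sampling result (Theorem~\ref{thm:main_row_norm_sampling_result}) since uniform sampling on a row-flattened matrix simulates row-norm sampling. The $O(n^2\log n)$ runtime via the fast Walsh--Hadamard transform is also identical.

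There is one implementation gap you gloss over: you state that ``the entry-zeroing step used in Algorithm~\ref{alg:row_norm_sampling} carries over,'' but that algorithm requires the actual row norms of $B$ as input to decide which entries to zero out, and the bilinear sketch $SBS^T$ alone does not reveal them. The paper handles this by running, in parallel, a small Johnson--Lindenstrauss sketch that estimates the norms of the $\tilde{O}(1/\eps^2)$ sampled columns of $B$ to within constant factors --- an extra $\tilde{O}(1/\eps^2)$ space and $\tilde{O}(n/\eps^2)$ time, which keeps the sketch within the stated dimension. One could argue that, post-flattening, a uniform estimate $\norm{B_i}{}^2 \approx \norm{A}{F}^2/n$ suffices for the thresholding (and indeed the paper informally remarks this is likely avoidable), but as written your argument does not say how the recovery procedure obtains these estimates, so you should either add the JL step or justify that the flattened-row-norm estimates are accurate enough for the zeroing condition.
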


\begin{proof}
Let $U = HD \in \R^{n\times n}$ where $D$ has i.i.d. random signs on its diagonal and $H$ is a Hadamard matrix scaled by $1/\sqrt{n}$. Note that $U$ is an orthogonal matrix, and so
\[
\norm{(U A U^T)_i}{}^2
= \norm{U A U^T e_i}{}^2
= (U^T e_i)^T A^2 (U^T e_i).
\]
Also note that $U^T e_i = D H e_i$ which is distributed as a  Rademacher random vector scaled by $1/\sqrt{n}$.  Recalling that $\E(x^T A^2 x) = \tr(A^2) = \norm{A}{F}^2$, when $x$ is a random sign vector, by Hanson-Wright we have the bound
\[
\Pr(\abs{(UAU^T)_i - \frac{1}{n}\norm{A}{F}^2}\geq t) \leq 2\exp(-c)
\leq 2\exp\left(-c \min\left(\frac{nt}{\norm{A^2}{F}}, (\frac{nt}{\norm{A^2}{F}})^2\right)\right).
\]
Setting $t \geq c\frac{\norm{A}{F}^2}{n} \log\frac{n}{\delta}$ makes the the right hand side bounded by $\frac{\delta}{n},$ since $\norm{A^2}{F} \leq \tr(A^2) = \norm{A}{F}^2.$  By a union bound we have $\abs{(UAU^T)_i}{}^2 \leq c\frac{\norm{A}{F}^2}{n}\log\frac{n}{\delta}$ for all $i$, with probability at least $1-\delta.$

It follows from Theorem~\ref{thm:main_row_norm_sampling_result} that it suffices to run Algorithm~\ref{alg:row_norm_sampling} by uniformly sampling each row with probability $O(\frac{1}{n\eps^2}\poly\log\frac{n}{\eps})$, resulting in a submatrix of dimension $k\times k,$ where $k = O(\frac{1}{\eps^2}\poly\log\frac{n}{\eps})$ with high probability. To implement this as a sketch, we can choose a uniformly random permutation matrix $P$, (implicitly) form $(PHD)A(PHD)^T$ and then sample a leading principal minor of dimension $O(\frac{1}{\eps^2}\poly\log\frac{n}{\eps})$. 

Finally, in parallel we can sketch the norms of the first $1/\eps^2$ columns of $(PHD)A(PHD)^T$ to within a constant factor using a Johnson-Lindenstrauss sketch (see \cite{woodruff2014sketching} for example).  This only takes an additional $\frac{1}{\eps^2}\log\frac{1}{\eps\delta}$ space to succeed with $1-\delta$ probability on all of the first $1/\eps^2$ rows, and can applied in $O(\frac{n}{\eps^2}\poly\log\frac{n}{\eps})$ time.  This allows us to run Algorithm~\ref{alg:row_norm_sampling} on the resulting matrix.
\end{proof}

\subsection{Improvements by combining with adaptive results}

By combining with known adaptive sampling bounds of \cite{musco2017recursive}, we point out that one can approximate the eigenvalues of a bounded entry PSD $A$ to within $\eps n$ additive error using just $\tilde{O}(1/\eps^3)$ entry queries.  We note that the PSD sampling result that we use was already present in \cite{bhattacharjee2024sublinear}, although this improvement using adaptivity was not pointed out.

We borrow a result from \cite{musco2017recursive}, slightly restated for our setting.
\begin{theorem}
(\cite{musco2017recursive} Theorem 3.) 
Let $A\in\R^{n\times n}$ be PSD and $\lambda > 0$.
 There is an algorithm that adaptively queries $O(n d_{\lambda} \log d_{\lambda})$ entries of $A$ and with $2/3$ probability produces a spectral approximation $\hat{A}$ of $A$ satisfying 
\[
\norm{A - \hat{A}}{} \leq \lambda.
\]
Here $d_{\lambda} := \text{tr}(A(A+\lambda I)^{-1})$ is the $\lambda$-effective dimension of $A.$
\end{theorem}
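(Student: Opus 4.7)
The plan is to follow the recursive ridge leverage score sampling strategy of \cite{musco2017recursive}. At a high level, the algorithm builds a Nyström approximation $\hat{A} = C(S^T A S)^{\dagger} C^T$, where $S \in \R^{n \times k}$ is a column sampling matrix and $C = AS$. Reading off $C$ requires $nk$ entry queries to $A$, so the main goal is to keep $k$ as small as $O(d_{\lambda} \log d_{\lambda})$ while ensuring $\norm{A - \hat{A}}{} \leq \lambda$.

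The main technical statement I would establish is that if each column $i$ is included in $S$ with probability $p_i \geq c \tilde{\tau}_i^{\lambda} \log d_{\lambda}$, where $\tilde{\tau}_i^{\lambda}$ is a constant-factor overestimate of the $\lambda$-ridge leverage score $\tau_i^{\lambda} = [A(A+\lambda I)^{-1}]_{ii}$, then the resulting Nyström approximation satisfies $\norm{A - \hat{A}}{} \leq \lambda$ with constant probability. The cleanest way to prove this is a matrix Bernstein argument: express $S^T A S$ as a sum of rescaled independent rank-one contributions, bound the per-sample operator norm and variance in terms of the leverage scores, and conclude that the sampled columns span a good approximation to the range of the eigenvectors of $A$ whose eigenvalues exceed $\lambda$. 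Since $\sum_i \tau_i^{\lambda} = \tr(A(A+\lambda I)^{-1}) = d_{\lambda}$, the expected number of sampled columns is $O(d_{\lambda} \log d_{\lambda})$, yielding the claimed $O(n d_{\lambda} \log d_{\lambda})$ entry query bound.

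The principal obstacle is that the true ridge leverage scores depend on all of $A$ and cannot be computed within the query budget. The remedy is the recursive sampling step: start with a coarse Nyström approximation $\hat{A}_0$ built from a uniformly sampled set of columns, compute ridge leverage scores of $\hat{A}_0$ (which is cheap, since $\hat{A}_0$ has small rank), treat them as overestimates of the true scores, and sample a refined set of columns for the next level. Iterating this procedure over $O(\log n)$ levels while shrinking an intermediate regularization parameter geometrically, one shows inductively that at every level the computed leverage scores lie within a constant factor of the true ones, while the column sample count at each level remains $O(d_{\lambda} \log d_{\lambda})$. The adaptivity of the queries is essential here: each level's sampling distribution is determined only after reading the previous level's sampled columns.

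Finally, a union bound over the $O(\log n)$ levels together with a mild constant adjustment gives the $2/3$ success probability, and the total query count telescopes to $O(n d_{\lambda} \log d_{\lambda})$ since reading each newly sampled column costs $n$ queries and the total number of columns sampled across all levels is $O(d_{\lambda} \log d_{\lambda})$.
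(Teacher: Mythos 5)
This theorem is not proved in the paper at all; the authors explicitly import it from \cite{musco2017recursive} (``We borrow a result from \cite{musco2017recursive}, slightly restated for our setting''), so there is no internal proof to compare your argument against. What you have written is a plausible high-level reconstruction of the Musco--Musco recursive Nystr\"om argument: ridge leverage score sampling, a matrix Bernstein concentration step, and a recursion to estimate the leverage scores within a constant factor using only the query budget. That is indeed the right skeleton.

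One correction worth flagging: you describe the recursion as proceeding over $O(\log n)$ levels with a geometrically shrinking regularization parameter. That is closer to the iterative leverage-score-refinement framework of Cohen et al.\ than to what \cite{musco2017recursive} actually does. Their Recursive RLS-Nystr\"om works by randomly splitting the column set in half, recursively computing a Nystr\"om approximation on the sampled half, and then using \emph{that} approximation to compute approximate ridge leverage scores for the full column set at the target regularization $\lambda$; the recursion depth is $O(\log n)$ because the problem size halves each time, not because the regularization shrinks. The bottom of the recursion is a trivial base case, and a union bound over the $O(\log n)$ levels gives the constant success probability. Without this structural point the inductive claim that the computed leverage scores are constant-factor overestimates of the true $\tau_i^{\lambda}$ at every level would not go through as you have written it, because you would also need to track how the intermediate regularization parameter couples to the effective dimension. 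If you are reconstructing this proof, you should take the random-halving recursion from the original paper rather than a regularization-annealing schedule.
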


We observe that this result improves sampling guarantees using adaptivity.

\begin{theorem}
\label{thm:musco_recursive_nystrom}
Let $A$ be PSD with $\norm{A}{\infty} \leq 1.$ Then there is an algorithm that adaptively queries $\tilde{O}(1/\eps^3)$ entries of $A$ and with $2/3$ probability produces an additive $\eps n$ approximation to the spectrum of $A.$
\end{theorem}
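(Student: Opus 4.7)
The plan is to combine the uniform sampling result (Theorem~\ref{thm:main_uniform_sampling_result}) with the adaptive recursive Nystr\"om guarantee (the preceding theorem of \cite{musco2017recursive}). First I would non-adaptively sample a uniform principal submatrix $B = SAS^{T}$ of $A$ via Algorithm~\ref{alg:uniform_sampling} with $s = \tilde{O}(1/\eps^{2})$. Crucially, this step only samples \emph{indices}, so it makes zero entry queries to $A$. By Theorem~\ref{thm:main_uniform_sampling_result}, with probability at least $2/3$, padding the eigenvalues of $B$ with zeros gives an additive $\eps n$ approximation to the spectrum of $A$. Note that $B$ is PSD (since $A$ is), has dimension $n' \le 2s = \tilde{O}(1/\eps^{2})$ with high probability by a Chernoff bound on the Bernoulli sampling in Algorithm~\ref{alg:uniform_sampling}, and its entries are $(n/s)$-rescalings of entries of $A$.

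Next I would run the recursive Nystr\"om algorithm of \cite{musco2017recursive} on $B$ with threshold $\lambda = \eps n$. Every entry query that this subroutine makes to $B$ can be answered by a single query to $A$ followed by a rescaling by $n/s$, so the entry-query cost to $A$ equals the entry-query cost to $B$. The cited theorem produces $\hat{B}$ with $\|B - \hat{B}\| \le \eps n$ using $O(n' \, d_{\eps n}(B) \log d_{\eps n}(B))$ queries, where $d_{\eps n}(B) = \tr(B(B + \eps n I)^{-1})$. Outputting the eigenvalues of $\hat{B}$ padded with $n - n'$ zeros gives, by Weyl's inequality combined with the previous paragraph, an additive $2\eps n$ approximation to the spectrum of $A$; rescaling $\eps$ by a constant yields the claimed $\eps n$ error.

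It remains to bound the effective dimension $d_{\eps n}(B)$. Since $B$ is PSD, $d_{\eps n}(B) \le \tr(B)/(\eps n)$. The diagonal entries of $B$ equal $(n/s) A_{ii}$ for the sampled indices $i$, and $A_{ii} \le 1$ since $\|A\|_{\infty}\le 1$, so deterministically $\tr(B) \le (n/s)\cdot n'$. On the event $n' \le 2s$ (which holds with high probability by Chernoff, as noted above), this yields $\tr(B) \le 2n$ and hence $d_{\eps n}(B) \le 2/\eps$. Plugging into the Nystr\"om query bound gives $O(n' \cdot (1/\eps)\log(1/\eps)) = \tilde{O}(1/\eps^{2}) \cdot \tilde{O}(1/\eps) = \tilde{O}(1/\eps^{3})$ entry queries to $A$.

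There is no real technical obstacle here; it is a direct reduction. The only things to track carefully are (i) the union bound over the three failure events (concentration of $n'$, correctness of Theorem~\ref{thm:main_uniform_sampling_result}, and correctness of the recursive Nystr\"om routine), each of which succeeds with constant probability and hence can be boosted by a constant number of independent trials, and (ii) confirming that the $n$ in the query complexity $O(n d_\lambda \log d_\lambda)$ of the quoted Musco--Musco theorem is the ambient dimension of the matrix passed in (here $n'$), not the dimension of the original matrix $A$. With these bookkeeping items in place, the theorem follows.
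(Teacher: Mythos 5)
Your proof follows the same two-stage reduction as the paper: uniformly sample a $\tilde{O}(1/\eps^2)$-dimensional principal submatrix $B = SAS^{T}$ (which itself costs no entry queries), then run the adaptive recursive Nystr\"om routine of \cite{musco2017recursive} on $B$ with threshold $\eps n$, combining the two error sources via Weyl's inequality. In fact your version is slightly more careful in the bookkeeping: you bound the effective dimension $d_{\eps n}(B)$ of the actual sampled submatrix via $\tr(B) \le 2n$ on the Chernoff event $n' \le 2s$, whereas the paper's displayed computation bounds $d_{\eps n}$ for $A$ itself and silently relies on the rescaling making the two quantities comparable.
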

\begin{proof}

For the first step we apply our Theorem~\ref{thm:main_uniform_sampling_result} (or \cite{bhattacharjee2024sublinear} since $A$ is PSD), to (implicitly) produce a rescaled submatrix $S^T A S$ of dimension $\tilde{O}(1/\eps^2)$ whose eigenvalues contain additive $\eps n$ approximations to all eigenvalues of $A$ that are at least $\eps n.$ Note that we do not actually form the matrix yet, however we have query access to its entries simply by querying entries of $A.$ By our Theorem~\ref{thm:main_row_norm_sampling_result}, outputting the spectrum of $S^T A S$ along with additional zeros would suffice.

To achieve an $O(\eps n)$ additive error overall, it suffices to approximate the eigenvalues of $S^T A S$ to within $O(\eps n)$ additive error.  For this we apply Theorem~\ref{thm:musco_recursive_nystrom} to obtain $\hat{A}$ with $\norm{A - \hat{A}}{} \leq \eps n.$  Note that we have
\[
d_{\lambda} 
= \text{tr}(A(A + \eps n I)^{-1})
\leq \sum_{i=1}^n \frac{\lambda_i}{\eps n}
= \frac{\text{tr}(A)}{\eps n} \leq \frac{1}{\eps}.
\]
Therefore only $\tilde{O}(1/\eps^3)$ queries are needed to produce $\hat{A}.$  Finally, by Weyl's inequality \cite{weyl1912asymptotic}, 
$\abs{\lambda_i(A) - \lambda_i(\hat{A})} \leq \eps n$ for all $i$.  The result follows by replacing $\eps$ with $\eps/c$ for an appropriate constant.
\end{proof}

\section{Top Eigenvector Estimation}

In this section, we analyze Algorithm~\ref{alg:top_eigenvector} for producing an approximate top eigenvector for PSD $A$ with $\norm{A}{\infty}\leq 1.$ To prove correctness, we begin with a few simple facts that we will need below.

\begin{lemma}
\label{lem:psd_incoherence}
Suppose that $A\in\R^{n\times n}$ is PSD with entries bounded by $1$, and let $v$ be an eigenvector of $A$ with associated eigenvalue $\lambda$. Then $\norm{v}{\infty} \leq \frac{1}{\sqrt{\lambda}}$
\end{lemma}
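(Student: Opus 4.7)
The plan is to exploit the PSD factorization of $A$ together with the row-norm bound inherited from the entry bound on the diagonal. Concretely, I would factor $A = BB^T$ for some $B \in \R^{n\times n}$ (e.g., via $A^{1/2}$). Since $A_{ii} = \|B_i\|_2^2 \leq 1$ by hypothesis, every row $B_i$ of $B$ has Euclidean norm at most $1$.

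Now I would use the eigenvalue equation. From $Av = \lambda v$ we get $BB^T v = \lambda v$, so setting $w := B^T v$ gives $Bw = \lambda v$, i.e. $\lambda v_i = \langle B_i, w\rangle$ for every coordinate $i$. Assuming without loss of generality that $v$ is a unit vector (the statement is scale-invariant after dividing by $\|v\|_2$), Cauchy--Schwarz yields
\[
|v_i| \leq \frac{1}{\lambda}\|B_i\|_2\,\|w\|_2 \leq \frac{\|w\|_2}{\lambda}.
\]
It remains to compute $\|w\|_2$, which is immediate:
\[
\|w\|_2^2 = v^T BB^T v = v^T A v = \lambda \|v\|_2^2 = \lambda.
\]
Combining, $|v_i| \leq \sqrt{\lambda}/\lambda = 1/\sqrt{\lambda}$, and taking the max over $i$ yields the claimed $\ell_\infty$ bound.

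There is no real obstacle here; the only thing to be careful about is the trivial case $\lambda = 0$ (where the statement is vacuous or trivially true) and the normalization convention on $v$, both of which are handled by the scale-invariance of the claim. The key conceptual point, which mirrors the argument in Lemma~\ref{lem:incoherence_bound}, is simply that large eigenvalues force their eigenvectors to be spread out whenever the matrix has bounded ``mass'' per row, and for PSD matrices the diagonal bound already controls the row norms of a square-root factor.
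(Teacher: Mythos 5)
Your proof is correct, but it takes a different path from the paper's. The paper's argument is strikingly direct: expand $A_{jj}$ in the eigenbasis as $A_{jj} = \sum_i \lambda_i v_{ij}^2$, note that \emph{every} term in this sum is nonnegative because $A$ is PSD, and hence $\lambda\, v_j^2 \leq A_{jj} \leq 1$ in a single step, with no Cauchy--Schwarz and no factorization. You instead pass through the square-root factorization $A = BB^T$, the observation $\|B_i\|_2^2 = A_{ii} \leq 1$, and a Cauchy--Schwarz bound on $\lambda v_i = \langle B_i, B^T v\rangle$. This is a perfectly valid derivation, and as you note it is in the same spirit as Lemma~\ref{lem:incoherence_bound} (bounded rows force incoherence of top eigenvectors); it also makes the connection to that lemma more transparent. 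What the paper's version buys is that it isolates exactly where PSD-ness is used: you only need to drop nonnegative terms from the diagonal expansion, rather than invoking a factorization. Either proof would be acceptable; the paper's is just a bit more economical for this specific statement.
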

\begin{proof}
Write the spectral decomposition of $A$ as
\[
A = \sum_{i=1}^n \lambda_i v_i v_i^T,
\]
where each $v_i$ is an eigenvector of $A$ with associated eigenvalue $\lambda_i.$  For arbitrary $j$, we then have
\[
A_{jj} = \sum_{i=1}^n \lambda_i v_{ij}^2.
\]
By the bounded entry hypothesis, this is bounded by $1$, so in particular $\lambda_1 v_{1j}^2 \leq 1$ since $A$ is PSD.  Since $j$ was arbitrary, the claim follows.
\end{proof}

A related fact is that zeroing out eigenvalues of $A$ only decreases the diagonal entries.  We only need this fact for the top eigenvalue, but it is true more generally.
\begin{lemma}
\label{lem:diag_entries_decrease}
Let $A_{-1}$ denote $A$ with its top eigenvalue zeroed out. Then $(A_{-1})_{kk} \leq A_{kk}$ for all $k$
\end{lemma}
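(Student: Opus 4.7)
The plan is straightforward: write the spectral decomposition of $A$ and read off the diagonal entries as quadratic forms in the eigenvector coordinates, then observe that removing a non-negative eigenvalue can only subtract a non-negative quantity from each diagonal entry.

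Concretely, I would first diagonalize $A = \sum_{i=1}^n \lambda_i v_i v_i^T$ where $\lambda_1 \geq \lambda_2 \geq \cdots \geq \lambda_n \geq 0$ (using that $A$ is PSD) and $\{v_i\}$ is an orthonormal eigenbasis. Taking the $(k,k)$ diagonal entry of both sides gives
\[
A_{kk} = \sum_{i=1}^n \lambda_i v_{ik}^2.
\]
Similarly, since $A_{-1} = \sum_{i=2}^n \lambda_i v_i v_i^T$, we have
\[
(A_{-1})_{kk} = \sum_{i=2}^n \lambda_i v_{ik}^2.
\]
Subtracting yields $A_{kk} - (A_{-1})_{kk} = \lambda_1 v_{1k}^2$, which is non-negative because $\lambda_1 \geq 0$ (as $A$ is PSD) and $v_{1k}^2 \geq 0$. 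This gives the claim.

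There is essentially no obstacle here — the entire argument is a one-line consequence of PSD-ness of $A$, since zeroing out any non-negative rank-one summand $\lambda_1 v_1 v_1^T$ from the spectral decomposition subtracts a PSD matrix, whose diagonal entries are non-negative. The same argument in fact shows the more general statement that zeroing out any subset of eigenvalues (not just the top one) weakly decreases all diagonal entries, which is what the lemma's remark alludes to.
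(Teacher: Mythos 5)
Your proof is correct and is essentially identical to the paper's: both diagonalize $A$, read off $A_{kk} - (A_{-1})_{kk} = \lambda_1 v_{1k}^2$, and conclude this is non-negative. Your version is slightly more explicit about where PSD-ness enters (guaranteeing $\lambda_1 \geq 0$), but the argument is the same.
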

\begin{proof}
Write the spectral decomposition of $A$ as
\[
A = \sum_{i=1}^n \lambda_i v_i v_i^T.
\]
Now simply note that 
\[
\lambda_1 (v_1 v_1^T)_{kk} = \lambda_1 v_{1k}^2 \geq 0.
\]
\end{proof}

The following fact states that applying a half iteration of power method only increases the Rayleigh quotient. (This statement is true more general, but we only need the $1/2$ version.)
\begin{lemma}
\label{lem:power_method_only_helps}
Let $A\in\R^{n\times n}$ be PSD.  Then for all nonzero $x\in\R^n$,
\[
\frac{x^T A^2 x}{x^T A x} \geq \frac{x^T A x}{x^T x}.
\]
\end{lemma}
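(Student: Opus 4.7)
The plan is to rearrange the inequality into a form where it follows directly from Cauchy--Schwarz. Since $A$ is PSD, we have $x^T A x \geq 0$ for all $x$, so after clearing denominators the claim is equivalent to
\[
(x^T A^2 x)(x^T x) \geq (x^T A x)^2,
\]
provided $x^T A x > 0$ (so that the original denominator is nonzero). This is immediate from Cauchy--Schwarz applied to the standard inner product: $x^T A x = \inner{x}{Ax} \leq \norm{x}{}\norm{Ax}{} = \sqrt{(x^T x)(x^T A^2 x)}$. Squaring gives the desired bound.

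The one case to address is when $x^T A x = 0$. Here I would use the PSD square root: $x^T A x = \norm{A^{1/2}x}{}^2$, so $x^T A x = 0$ forces $A^{1/2} x = 0$, hence $A x = A^{1/2}(A^{1/2} x) = 0$, and so $x^T A^2 x = 0$ as well. In this degenerate case both sides of the original inequality can be read as $0$ (or the Rayleigh quotient $\frac{x^T A x}{x^T x}$ is $0$ and the left side is vacuous/nonnegative), so the statement holds trivially. No step here is an obstacle; the lemma is essentially a restatement of Cauchy--Schwarz, and the only care needed is the small bookkeeping around the vanishing denominator.
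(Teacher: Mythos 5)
Your proof is correct and uses the same core step as the paper: after clearing denominators the inequality is exactly Cauchy--Schwarz applied to $\langle x, Ax\rangle$. The extra bookkeeping you do for the case $x^T A x = 0$ (showing $Ax = 0$ via the PSD square root) is not in the paper's proof but is a reasonable clarification since the lemma's left-hand side is otherwise undefined there.
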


\begin{proof}

The desired inequality is equivalent to
\[
\inner{Ax}{x}^2 \leq \norm{Ax}{2}^2 \norm{x}{2}^2,
\]
which is true by Cauchy-Schwarz.

\end{proof}

The following is our main lemma for top eigenvector approximation.
\begin{lemma}
\label{lem:main_eigvect_lemma}
Let $S$ be a column-sampling matrix that samples each column of $A$ independently with probability $p=\frac{c}{\eps n}$ for an absolute constant $c$.  Then with probability at least $2/3$, there is some $x$ in the image of $S,$ such that
\[
\frac{x^T A^2 x}{x^T A x} \geq \lambda_1 - \eps n.
\]
\end{lemma}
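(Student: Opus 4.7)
The plan is to exhibit an explicit $x \in \mathrm{Im}(S)$ achieving the bound, namely $x := P_T v_1$, where $T \subseteq [n]$ is the random set of sampled indices, $P_T$ is the coordinate projection onto $T$, and $v_1$ is a top eigenvector of $A$. We may assume $\lambda_1 \geq \eps n$, since otherwise $\lambda_1 - \eps n \leq 0$ and any $x$ in $\mathrm{Im}(S)$ with $x^T A x > 0$ trivially works (as $A \succeq 0$). Note that $x = P_T v_1$ is supported on $T$ and hence lies in $\mathrm{Im}(S)$.

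The key observation is to expand in the eigenbasis of $A$. Writing $A = \sum_i \lambda_i v_i v_i^T$ and setting $c_i := \lambda_i \inner{x}{v_i}^2 \geq 0$, the ratio $\frac{x^T A^2 x}{x^T A x} = \frac{\sum_i \lambda_i c_i}{\sum_i c_i}$ is a weighted average of $\{\lambda_i\}$, so the desired inequality is equivalent to
\[
\underbrace{\sum_{i\colon \lambda_i \geq \lambda_1 - \eps n} (\lambda_i - (\lambda_1 - \eps n))\, c_i}_{\mathrm{LHS}} \;\geq\; \underbrace{\sum_{i\colon \lambda_i < \lambda_1 - \eps n} ((\lambda_1 - \eps n) - \lambda_i)\, c_i}_{\mathrm{RHS}}.
\]
The strategy is to lower bound $\mathrm{LHS}$ using only the $i = 1$ term and upper bound $\mathrm{RHS}$ in expectation via Markov.

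For $\mathrm{LHS}$, clearly $\mathrm{LHS} \geq \eps n \cdot c_1 = \eps n \cdot \lambda_1 \norm{P_T v_1}{}^4$. The PSD incoherence bound (Lemma~\ref{lem:psd_incoherence}) gives $v_{1,i}^2 \leq 1/\lambda_1 \leq 1/(\eps n)$, so $\norm{P_T v_1}{}^2 = \sum_i \xi_i v_{1,i}^2$, with $\xi_i \sim \text{Bernoulli}(p)$ independent, has mean $p$ and variance $\sum_i p(1-p)\, v_{1,i}^4 \leq p/\lambda_1$. Chebyshev then yields $\norm{P_T v_1}{}^2 \geq p/2$ with probability at least $9/10$ for $c$ a sufficiently large absolute constant, giving $\mathrm{LHS} \geq c^2 \lambda_1 /(4 \eps n)$. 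For $\mathrm{RHS}$, use $\mathrm{RHS} \leq \lambda_1 \cdot x^T A_{I^c} x$ where $A_{I^c} := \sum_{i\colon \lambda_i < \lambda_1 - \eps n} \lambda_i v_i v_i^T$ excludes $v_1$ from its spectrum, so $v_1^T A_{I^c} v_1 = 0$. A direct computation then gives
\[
\E[x^T A_{I^c} x] \;=\; p^2\, v_1^T A_{I^c} v_1 \,+\, (p - p^2) \sum_k v_{1,k}^2 (A_{I^c})_{kk} \;\leq\; p \sum_k v_{1,k}^2 A_{kk} \;\leq\; p,
\]
where $(A_{I^c})_{kk} \leq A_{kk}$ follows from $A \succeq A_{I^c} \succeq 0$ and $A_{kk} \leq 1$ from the bounded entries assumption. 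Markov then gives $x^T A_{I^c} x \leq 20p$ with probability $\geq 19/20$, so $\mathrm{RHS} \leq 20 c \lambda_1 /(\eps n)$.

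Taking $c$ large enough that $c^2/4 \geq 20 c$ (e.g., $c \geq 100$) and applying a union bound, both events hold simultaneously with probability at least $2/3$, at which point $\mathrm{LHS} \geq \mathrm{RHS}$ and the claim follows. The main technical obstacle is the concentration of the ``signal'' term $\norm{P_T v_1}{}^4$; this is precisely where the PSD-specific incoherence bound becomes essential, since without it $v_1$ could be concentrated on very few coordinates and random sampling would easily miss its mass entirely.
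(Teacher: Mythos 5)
Your proof is correct and takes essentially the same approach as the paper: you pick the same witness $x = P_T v_1$, use the PSD incoherence bound to control $\norm{v_1}{\infty}$, use Chebyshev to lower bound the ``signal'' mass $\norm{P_T v_1}{}^2$, and use an expectation computation (crucially using $v_1^T M v_1 = 0$ for the noise matrix $M$) plus Markov for the ``noise.'' The only difference is cosmetic: the paper derives the sufficient condition $\frac{x^T A_{-1} x}{\inner{x}{v_1}^2} \leq \eps n$ via $\frac{a^2}{a+b}\geq a-b$ and bounds numerator and denominator separately, whereas you clear denominators and split the weighted-average identity into two sums at the threshold $\lambda_1-\eps n$; the underlying estimates and computations are identical.
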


\begin{proof}

If $\lambda_1 < \eps n$ then the result is trivial.  So we assume from now on that $\lambda_1 \geq \eps n.$

First, for an arbitrary fixed $x\in \R^{n}$, consider the generalized Rayleigh quotient $\frac{x^T A^2 x}{x^T A x}.$  Let $v^{(1)}, \ldots, v^{(n)}$ be an orthonormal basis of eigenvector of $A$ with associated eigenvalues $\lambda_1\geq \lambda_2 \geq \ldots \geq \lambda_n.$  Then we have
\begin{align*}
\frac{x^T A^2 x}{x^T A x} 
&= \frac{\lambda_1^2\inner{x}{v^{(1)}}^2  + \lambda_2^2 \inner{x}{v^{(2)}}^2  + \ldots + \lambda_n^2 \inner{x}{v^{(n)}}^2 }{ \lambda_1 \inner{x}{v^{(1)}}^2 + \lambda_2 \inner{x}{v^{(2)}}^2 + \ldots + \lambda_n \inner{x}{v^{(n)}}^2}\\
&\geq \frac{\lambda_1^2\inner{x}{v^{(1)}}^2}{ \lambda_1 \inner{x}{v^{(1)}}^2 + \lambda_2 \inner{x}{v^{(2)}}^2 + \ldots + \lambda_n \inner{x}{v^{(n)}}^2}\\
&= \frac{\lambda_1^2}{ \lambda_1 + \frac{1}{\inner{x}{v^{(1)}}^2}(\lambda_2 \inner{x}{v^{(2)}}^2 + \ldots + \lambda_n \inner{x}{v^{(n)}}^2)}\\
&\geq \lambda_1 - \frac{1}{\inner{x}{v^{(1)}}^2}(\lambda_2 \inner{x}{v^{(2)}}^2 + \ldots + \lambda_n \inner{x}{v^{(n)}}^2),
\end{align*}
where the last inequality follows from the difference-of-squares factorization.  Now write $A_{-1}$ to denote $A$ with its top eigenvalue zeroed out. The term in parentheses above is $x^T A_{-1} x$, so we have
\[
\frac{x^T A^2 x}{x^T A x} 
\geq \lambda_1 - \frac{x^T A_{-1} x}{\inner{x}{v}^2}.
\]
It therefore suffices to show that $\frac{x^T A_{-1} x}{\inner{x}{v}^2} \leq \eps n$ for some $x$ in the image of $S.$  We will take $x = \Pi_S v$ where $v = v^{(1)}$ is the top eigenvector of $A$ and where $\Pi$ is the orthogonal projection onto the image of $S.$  We assume that $v$ is normalized to have unit norm.  For this choice of $x$ we will bound the numerator and denominator separately.

\paragraph{Denominator bound.}  First note that 
\[
\inner{x}{v}^2 = \inner{\Pi v}{v}^2 = \inner{\Pi v}{\Pi v}^2 = \norm{\Pi v}{2}^4,
\]
so we simply need to bound $\norm{\Pi v}{2}^2.$ We will prove the denominator bound under the assumption that $p=\frac{10}{\eps n}.$ Clearly this is sufficient as the denominator for larger $p$ majorizes the denominator for smaller $p.$  By Lemma~\ref{lem:psd_incoherence}, recall that $\abs{v_i}^2 \leq \frac{1}{\lambda_1} \leq \frac{1}{\eps n}$ for all $i.$  Let $\sigma_i$ be i.i.d. $\text{Bernoulli}(p)$ random variables.  We wish to obtain a lower bound on
$X := \sum_i \sigma_i v_i^2.$ Since $v$ is a unit vector, by linearity of expectation $\E X = p.$  For the second moment,
\begin{align*}
\E X^2 &= p(v_1^4 + \ldots + v_n^4) + 2p^2 \sum_{i < j} v_i^2 v_j^2\\
&= p(v_1^4 + \ldots + v_n^4) + p^2 (v_1^2 + \ldots + v_n^2)^2 - p^2(v_1^4 + \ldots + v_n^4)\\
&\leq p(v_1^4 + \ldots + v_n^4) + p^2\\
&\leq \frac{p}{\eps n} (v_1^2 + \ldots + v_n^2) + p^2\\
&= \frac{p}{\eps n} + p^2.
\end{align*}


Therefore $\Var(X) \leq \frac{p}{\eps n},$ and so by Chebyshev's inequality,
\[
\Pr(X \leq \frac{1}{\eps n})
\leq \Pr(\abs{X - p} \geq \frac{9}{\eps n})
\leq \frac{\Var(X)}{(9/(\eps n))^2}
\leq \frac{\eps n p}{81} = \frac{10}{81}.
\]
Thus we have $\inner{x}{v}^2 \geq \frac{1}{(\eps n)^2}$ with probability at least $7/8.$

\paragraph{Numerator bound.}
Since $A_{-1}$ is PSD, we can write $A_{-1} = U^T U$ for some $U\in\R^{n\times n}$.  By Lemma~\ref{lem:diag_entries_decrease}, the diagonal entries of $A_{-1}$ are all bounded by 1, which means that the columns of $U$ each have $\ell_2$ norm at most $1$. Also, since $v$ is the top eigenvector of $A$, we have $v^T A_{-1} v = 0$, which implies that $Uv = 0.$

Let $w^{(i)} = v_i U_i$ where $v_i$ is the $i$th entry of $v$ and $U_i$ is the $i$th column of $U$. Also set $p=\frac{1}{\eps n}.$

Note that we have 
\[
\sum_{i=1}^n w^{(i)} = 0,
\] 
and 
\[
\sum_{i=1}^n \norm{w^{(i)}}{2}^2 \leq 1.
\]
We wish to bound
\[
\sum_{i=1}^n \norm{\sigma_i w^{(i)}}{}^2,
\]
where the $\sigma_i$ are i.i.d. $\text{Bernoulli}(p)$.
Since the quantity we wish to bound is non-negative, it suffices to bound its expectation:
\begin{align*}
\E \norm{\sum_i \sigma_i w^{(i)}}{2}^2 
&= p \sum_{i=1}^n ||w^{(i)}||^2 + 2p^2 \sum_{i < j} \inner{w^{(i)}}{w^{(j)}}\\
&\leq p + 2p^2 \sum_{i < j} \inner{w^{(i)}}{w^{(j)}}.
\end{align*}

This last sum can be rewritten as
\[
2\sum_{i < j} \inner{w^{(i)}}{w^{(j)}}
= \norm{w^{(1)} + \ldots + w^{(n)}}{2}^2 - \left(\norm{w^{(i)}}{2}^2 + \ldots + \norm{w^{(n)}}{2}^2\right)
\leq 0,
\] 
since $w^{(1)} + \ldots + w^{(n)} = 0.$

It follows that 
\[
\E \norm{\sum_i \sigma_i w^{(i)}}{2}^2
\leq p,
\]
so by Markov's inequality
\[
\norm{\sum_i \sigma_i w^{(i)}}{2}^2 \leq 10p,
\]
with probability at least $9/10.$

Combining the bound on the numerator and the denominator shows that $\frac{x^T A_{-1} x}{x^T A x} \geq \lambda_1 - 100\eps n.$  The result follows by replacing $\eps$ with $\eps/100.$

\end{proof}

Finally we prove correctness of Algorithm~\ref{alg:top_eigenvector}.
\begin{theorem}
Let $A\in\R^{n\times n}$ be symmetric PSD with $\norm{A}{\infty}\leq 1$.  For $p = \frac{c}{\eps n}$, with $3/4$ probability, Algorithm~\ref{alg:top_eigenvector} returns a unit vector $u\in\R^n$ satisfying
\[
u^T A u \geq \lambda_1 - \eps n,
\]
where $\lambda_1$ is the top eigenvalue of $A.$
\end{theorem}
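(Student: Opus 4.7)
The plan is to piece together the three lemmas already established: the main existence lemma (Lemma~\ref{lem:main_eigvect_lemma}) which produces a good vector in $\operatorname{Im}(S)$, the optimality of the generalized Rayleigh quotient computed by the algorithm, and the ``half-iteration of power method only helps'' fact (Lemma~\ref{lem:power_method_only_helps}).

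First I would invoke Lemma~\ref{lem:main_eigvect_lemma}, which with probability at least $2/3$ produces some vector $Sz$ (for suitable $z$) satisfying
\[
\frac{(Sz)^T A^2 (Sz)}{(Sz)^T A (Sz)} \;\geq\; \lambda_1 - \epsilon n.
\]
Rewriting this as $\frac{z^T S^T A^2 S z}{z^T S^T A S z}\geq \lambda_1-\epsilon n$ shows that the generalized Rayleigh quotient the algorithm maximizes has value at least $\lambda_1-\epsilon n$ at $z$. Hence the algorithm's optimizer $x^\star$ satisfies the same bound. Setting $w := S x^\star$, we then have $\frac{w^T A^2 w}{w^T A w}\geq \lambda_1-\epsilon n$.

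The final step is to translate this bound on the generalized Rayleigh quotient into a bound on $u^T A u$, where $u = Aw/\|Aw\|$. A direct calculation gives $u^T A u = \frac{w^T A^3 w}{w^T A^2 w}$, so I would apply Lemma~\ref{lem:power_method_only_helps} with the vector $A^{1/2} w$ (well-defined because $A$ is PSD). That lemma yields
\[
\frac{(A^{1/2}w)^T A^2 (A^{1/2}w)}{(A^{1/2}w)^T A^{1/2} A A^{1/2} w} \;\cdot\!\!\!\text{-rewritten-}\;
\frac{w^T A^3 w}{w^T A^2 w} \;\geq\; \frac{w^T A^2 w}{w^T A w} \;\geq\; \lambda_1 - \epsilon n,
\]
and concatenating this chain of inequalities gives $u^T A u \geq \lambda_1-\epsilon n$. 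The sampling probability from Lemma~\ref{lem:main_eigvect_lemma} matches the one in the theorem statement (up to absorbing constants into $c$), and the $2/3$ success probability propagates directly; slight adjustment of the constant in $p$ handles the $3/4$ claimed here.

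I do not expect any serious obstacle: all the delicate work has been done in Lemma~\ref{lem:main_eigvect_lemma} (the denominator and numerator bounds using Lemma~\ref{lem:psd_incoherence} and Lemma~\ref{lem:diag_entries_decrease}), so the final theorem is essentially a bookkeeping step that (i) passes from ``exists $x$ in $\operatorname{Im}(S)$'' to ``the algorithm's maximizer'', and (ii) uses the half-step-of-power-method trick to upgrade from a ratio of the form $w^T A^2 w/w^T A w$ to $w^T A^3 w/w^T A^2 w$. The only thing worth double-checking is the implementability of the algorithm: $S^T A S$ and $S^T A^2 S = (AS)^T (AS)$ are both computable from $AS$ alone, so no information beyond the sampled columns is required, and the maximizer of the generalized Rayleigh quotient is obtained from a generalized eigenvalue problem in dimension $m$ (the number of sampled columns), which is tractable.
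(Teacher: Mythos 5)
Your proposal is correct and follows essentially the same argument as the paper: invoke Lemma~\ref{lem:main_eigvect_lemma} to get an $x$ with large generalized Rayleigh quotient, observe the maximizer is at least as good, and then apply Lemma~\ref{lem:power_method_only_helps} (with $y=A^{1/2}Sx$) to upgrade the bound from $\tfrac{w^TA^2w}{w^TAw}$ to $\tfrac{w^TA^3w}{w^TA^2w}=u^TAu$. The only cosmetic difference is that the paper absorbs the $2/3$-vs-$3/4$ probability discrepancy implicitly, whereas you explicitly flag it as absorbed into the constant $c$.
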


\begin{proof}
Consider the $x \in \R^n$ produced in Algorithm~\ref{alg:top_eigenvector} which maximizes $\frac{x^T S^T A^2 S x}{x^T S^T A S x}$.  By Lemma~\ref{lem:main_eigvect_lemma}, with $3/4$ probability we have that 
\[
\frac{x^T S^T A^2 S x}{x^T S^T A S x} 
= \frac{(A^{1/2} S x)^T A (A^{1/2} S x)}{(A^{1/2} S x)^T (A^{1/2} S x)}
\geq \lambda_1 - \eps n.
\]
Now from Lemma~\ref{lem:power_method_only_helps}, an additional half-iteration of power method only helps.  In other words,
\[
\frac{(ASx)^T A (ASx)}{(ASx)^T(ASx)} 
\geq 
\frac{(A^{1/2} S x)^T A (A^{1/2} S x)}{(A^{1/2} S x)^T (A^{1/2} S x)}
\geq \lambda_1 - \eps n,
\]
and the main result follows.
\end{proof}

\printbibliography

\end{document}